\documentclass[prd,aps,amsfonts,showpacs,longbibliography,notitlepage,twocolumn,groupedaddress,nofootinbib]{revtex4-1}
\usepackage[dvipsnames]{xcolor}
\usepackage{multirow}
\usepackage{tabularx}

\usepackage{amsmath,amssymb,amsfonts,mathtools,dsfont,relsize}
\usepackage{graphicx}

\usepackage[colorlinks=true, urlcolor=violet, linkcolor=blue, citecolor=red, hyperindex=true, linktocpage=true]{hyperref}

\usepackage{amsthm}
\usepackage{qtree}
\newtheorem{thm}{Theorem}
\newtheorem{cor}[thm]{Corollary}
\newtheorem{lem}[thm]{Lemma}
\newtheorem{prop}[thm]{Proposition}
\newtheorem{prop1}{Proposition}

\usepackage{soul} 

\usepackage{enumerate}
\usepackage{stmaryrd} 


\makeatletter
\renewcommand{\p@subsection}{}
\renewcommand{\p@subsubsection}{}
\makeatother




\DeclarePairedDelimiter{\ket}{\lvert}{\rangle}




\newcommand{\ii}[0]{\mathrm{i}}














































\usepackage{qcircuit}

\newcommand{\lr}[1]{\left( #1\right)}
\newcommand{\mlr}[1]{\left[ #1\right]}
\newcommand{\glr}[1]{\left\{ #1\right\}}
\newcommand{\alr}[1]{\left\langle #1\right\rangle}
\newcommand{\norm}[1]{\left\lVert#1\right\rVert}

\newcommand{\rarrow}{\quad \Rightarrow \quad}
\newcommand{\ee}{\mathrm{e}}
\newcommand{\dd}{\mathrm{d}}

\newcommand{\ddt}[1]{\frac{\mathrm{d} #1}{\mathrm{d} t}}

\newcommand{\OO}{\mathcal{O}}
\newcommand{\LL}{\mathcal{L}}
\newcommand{\cA}{\mathcal{A}}
\newcommand{\cP}{\mathcal{P}}
\newcommand{\bA}{\mathbb{A}}
\newcommand{\PP}{\mathbb{P}}
\newcommand{\QQ}{\mathbb{Q}}

\makeatletter 
    
\renewcommand\onecolumngrid{
\do@columngrid{one}{\@ne}%
\def\set@footnotewidth{\onecolumngrid}
\def\footnoterule{\kern-6pt\hrule width 1.5in\kern6pt}%
}

\renewcommand\twocolumngrid{
        \def\footnoterule{
        \dimen@\skip\footins\divide\dimen@\thr@@
        \kern-\dimen@\hrule width.5in\kern\dimen@}
        \do@columngrid{mlt}{\tw@}
}%

\makeatother

\newcommand{\upward}[2]{\multirow{1}{*}[#1 em]{#2}}
\newcolumntype{Y}{>{\centering\arraybackslash}X}
\newcolumntype{M}[1]{>{\centering\arraybackslash}m{#1}}


\newcommand{\comment}[1]{}


\begin{document}
\title{Prethermalization and the local robustness of gapped systems}

\author{Chao Yin}
\email{chao.yin@colorado.edu}
\affiliation{Department of Physics and Center for Theory of Quantum Matter, University of Colorado, Boulder, CO 80309, USA}

\author{Andrew Lucas}
\email{andrew.j.lucas@colorado.edu}
\affiliation{Department of Physics and Center for Theory of Quantum Matter, University of Colorado, Boulder, CO 80309, USA}

\begin{abstract}
We prove that prethermalization is a generic property of gapped local many-body quantum systems, subjected to small perturbations, in any spatial dimension. More precisely, let $H_0$ be a Hamiltonian, spatially local in $d$ spatial dimensions, with a gap $\Delta$ in the many-body spectrum; let $V$ be a spatially local Hamiltonian consisting of a sum of local terms, each of which is bounded by $\epsilon \ll \Delta$.  Then, the approximation that quantum dynamics is restricted to the low-energy subspace of $H_0$ is accurate, in the correlation functions of local operators, for stretched exponential time scale $\tau \sim \exp[(\Delta/\epsilon)^a]$ for any $a<1/(2d-1)$. This result does not depend on whether the perturbation closes the gap.  It significantly extends previous rigorous results on prethermalization in models where $H_0$ was frustration-free.  We infer the robustness of quantum simulation in low-energy subspaces, the existence of athermal ``scarred" correlation functions in gapped systems subject to generic perturbations, the long lifetime of false vacua in symmetry broken systems, and the robustness of quantum information in non-frustration-free gapped phases with topological order.
\end{abstract}

\date{\today}

\maketitle

\emph{Introduction.}--- Consider an exactly solved many-body quantum Hamiltonian $H_0$, assumed to be spatially local in $d$ spatial dimensions. Now, consider perturbing the Hamiltonian to $H_0+V$, where $V$ is made out of a sum of local terms, each of bounded norm $\epsilon$.  As long as we take the thermodynamic limit \emph{before} sending $\epsilon \rightarrow 0$, general lore states that a perturbation ($\epsilon>0$) has drastic qualitative  effects.  For example, the orthogonality catastrophe shows that eigenstates are extraordinarily sensitive to perturbations \cite{anderson}.  A general integrable system generally exhibits a complete rearrangement of the many-body spectrum, transitioning from Poisson ($\epsilon=0$) to Wigner-Dyson ($\epsilon \ne 0$) energy-level statistics \cite{Poilblanc_1993,rabson}. Only in special settings, such as the conjectured many-body localized phase \cite{MBL_BAA,vadim,rahul14,MBLrev_Rahul,imbrie2016many,MBLrev_RMP}, might the simple properties of many-body systems remain robust to perturbations.

With that said, it is known that in \emph{gapped} quantum many-body systems, the thermalization time scale (as measured by physical observables, i.e. local correlation functions) may be exponentially long: \begin{equation}
    t_* \sim \exp \left[ \left(\frac{\Delta}{\epsilon}\right)^a \right] \label{eq:thermalizationtime}
\end{equation}
where $\Delta$ is the gap of $H_0$, and $a>0$.  To understand why, consider the Hubbard model $H\sim \sum_{i\sim j}\epsilon  c_{\sigma i}^\dagger c_{\sigma j} + \sum_i \Delta n_{\uparrow i}n_{\downarrow i}$  \cite{doublon10,doublon12}: although two particles on the same site (called a doublon) store enormous energy and ``should" thermalize into a sea of mobile excitations by separating, there is no local perturbation that can do this!  The doublon has energy $\Delta$, but one no-doublon excitation has energy $\lesssim\epsilon$.  One must go to order $\Delta/\epsilon$ in perturbation theory to find a many-body resonance whereby a doublon can split apart while conserving energy: this implies (\ref{eq:thermalizationtime}).  Only in the last few years was this intuition put on rigorous ground \cite{deroeckprl,abanin2017rigorous}.

Existing proofs of prethermalization in the Hubbard model rely fundamentally on  peculiar aspects of the problem.  The ``unperturbed" $H_0$ consists exclusively of the repulsive potential energy -- it is a sum of local operators which: (\emph{1}) act on a single lattice site, (\emph{2}) mutually commute, and (\emph{3}) have an ``integer spectrum", such that the many-body spectrum of $H_0$ is of the form $0,\Delta,2\Delta,\ldots$.  The ``perturbation" $V$ is the kinetic (hopping) terms.  While prethermalization proofs have also been extended to Floquet and other non-Hamiltonian settings \cite{Floq_PRB,Floq_KS_16,Floq_KS_PRL,Floq_phase,Floq_power,quasiperiodic} with various experimental verifications \cite{preth_exp_Wei,preth_exp_Peng,preth_exp_BH,preth_exp_90s,preth_exp_DTC,preth_exp_Fermi}, assumptions (\emph{2}) and (\emph{3}), which lead to exact solvability, among other useful features, essentially remain.

At the same time, one may be surprised on physical grounds by this state of affairs: the intuition for prethermalization does \emph{not} rely on solvability of $H_0$, nor even a discrete spectrum in the thermodynamic limit.  In fact, it should suffice to simply say that if $\Delta$ is a many-body spectral gap of $H_0$, and any local perturbation can add energy at most $\epsilon \ll \Delta$, then one has to go to order $\Delta/\epsilon$ in perturbation theory to witness a many-body resonance wherein a system, prepared on one side of the gap of $H_0$, can ``decay" into a state on the other side. 

Indeed, this argument is consistent with a very different physical scenario: false vacuum decay.  Here, we consider a  gapped $H_0$ with degenerate ground states protected by symmetry (in the thermodynamic limit), separated from the rest of the spectrum by gap $\Delta$.  An example is an Ising ferromagnet with $\mathbb{Z}_2$ symmetry spontaneously broken in the ground state. If the perturbation $V$ explicitly breaks the symmetry, one of $H_0$'s ground states will generically have \emph{extensive energy} for $H_0+V$.  So $V$ will close the gap, and the false vacuum is one of exponentially many excited states of similar energy.  Still, path integral calculations imply the false vacuum is stable for non-perturbatively long times \cite{falseVac_Coleman}.  This is confirmed, as measured by local correlators in specific lattice models \cite{falseVac_Ising,falseVac_Hastings,falseVac_quasip,falseVac_confine,falseVac_spinchain}.  If we consider a quench at time $t=0$, since the rate per spacetime volume of nucleating a bubble of true vacuum scales as $1/t_*$, the probability a local correlator detects the true vacuum is $t^{d+1}/t_*$ in $d$ spatial dimensions, implying thermalization time $\exp((\Delta/\epsilon)^a)$.


Moreover, we expect gapped topologically-ordered phases are robust to perturbations at all times.  This could pave the way for topological quantum computing \cite{topoQC_KITAEV03,topoQC_RMP} and quantum memory \cite{topo_Hastings,QMemory_RMP} at zero temperature. However, such stability has been proven only for certain gapped Hamiltonians \cite{michalakis2013stability,frustration_free_22}. 

The gap in $H_0$ is crucial to all three stories above.  In this Letter, we prove that all three phenomena are related to a common result: when \emph{any gapped} $H_0$ is perturbed to $H_0+V$, local correlation functions are efficiently approximated by truncating to the low-energy subspace of $H_0$ for a non-perturbatively long time.  Prethermalization, captured by (\ref{eq:thermalizationtime}), is independent of the solvability of $H_0$.  This is: (\emph{1}) a substantial generalization of the theory of \cite{abanin2017rigorous}, (\emph{2}) a proof that false vacuum decay is non-perturbatively slow, and (\emph{3}) a proof of stability for gapped topological phases over non-perturbatively long times.  These diverse applications of our result are summarized in Table~\ref{Table}.

\begin{table}
	\centering
	\begin{tabularx}{0.46\textwidth}{|Y|Y|M{2cm}|}
		\hline
		scenario & assumption on $H_0$ & $t_* \ge $ ?\\
		\hline
        \upward{-0.5}{\multirow{2}{8em}{prethermalization}} & commuting; integer spectrum & \upward{-0.5}{$\exp[\Delta/\epsilon]$ \cite{abanin2017rigorous}} \\ \cline{2-3}
        & gapped & \multirow{3}{6em}{ $\exp[(\Delta/\epsilon)^a]$ (this work)} \\
        \cline{1-2}
        \upward{-0.5}{false vacuum decay} & discrete symmetry breaking (gapped) & \\
        \cline{1-2}
        \upward{-1}{\multirow{2}{8em}{stability of topological order}} & gapped &  \\ \cline{2-3}
        & frustration-free, local topological order, and gapped & \upward{-1}{$\infty$ \cite{michalakis2013stability} } \\
        \hline 
	\end{tabularx}
	\caption{Summary of \emph{rigorous} results on the robustness of gapped systems.}
	\label{Table}
\end{table}

\emph{Main Result.}--- 
Let $H_0$ and $V$ be local many-body Hamiltonians on a $d$-dimensional lattice $\Lambda$: e.g. \begin{equation}
    V = \sum_{S \subset \Lambda, S \text{ local}} V_S,
\end{equation}
where $V_S$ acts non-trivially on the degrees of freedom on sites in the geometrically local $S$, and trivially elsewhere, and $\lVert V_S\rVert \le \epsilon$.  $H_0$ has a similarly local structure, and we require the existence of a ``spectral gap" of size $\Delta$, wherein the many-body Hilbert space $\mathcal{H}$ can be decomposed into $\mathcal{H}=\mathcal{H}_<\oplus\mathcal{H}_>$, where $\mathcal{H}_<$ contains eigenvectors of eigenvalue at most $E_*$, while $\mathcal{H}_>$ contains eigenvectors of eigenvalue at least $E_*+\Delta$.
Here and below, precise definitions and proofs are contained to the Supplementary Material (SM).

For sufficiently small $\epsilon/\Delta$, there is a unitary $U$, generated by finite-time evolution with a quasi-local Hamiltonian protocol $\widetilde{H}$ with terms of strength $\epsilon$, such that \begin{equation}
  U^\dagger (H_0+V)U=  H_* + V_*, \label{eq:rotation}
\end{equation}
where $H_*$ has no matrix element connecting eigenstates of $H_0$ whose eigenvalue difference is larger than $\Delta$, while $V_*$ is a sum of local terms of strength \begin{equation}
    \lVert (V_*)_S\rVert \lesssim \epsilon \exp\left[-\left(\frac{\Delta}{\epsilon}\right)^a\right], \;\;\;\text{for any } a<\frac{1}{2d-1}.
\end{equation}
(This $a$ is likely not tight for $d>1$.)
In particular, $H_*$ is block-diagonal in $\mathcal{H}_<\oplus\mathcal{H}_>$ (i.e. protects the low/high energy subspaces).
Thus, a subspace $U\mathcal{H}_<$ of $H_0+V$ is protected for a stretched exponentially long time scale (\ref{eq:thermalizationtime}).  Since local (few-body) operators $B\approx U^\dagger B U $, there is prethermalization: dynamics in local correlation functions is efficiently truncated to the low-energy subspace of $H_0$ for non-perturbatively long times (\ref{eq:thermalizationtime}).
Moreover, $\widetilde{H}\propto -\int^\infty_{-\infty} W(t) V(t)\mathrm{d} t + {\rm O}(\epsilon^2)$ is defined order by order, where $W(t)$ is a fast-decaying function, and $V(t)=\mathrm{e}^{\mathrm{i}tH_0} V\mathrm{e}^{-\mathrm{i}tH_0}$ is dominated by terms with range $\lesssim t$ due to the Lieb-Robinson bound 
\cite{Lieb1972,ourreview}. These facts imply $\widetilde{H}$ is indeed quasi-local.

\emph{Numerical Demonstration.}--- We showcase our result with the interacting $d=1$ spin model \begin{subequations}\label{eq:H0V}\begin{align}
    H_0 &= \sum_{i=1}^{N-1}\lr{ Z_iZ_{i+1}+J_x X_i X_{i+1}}+h\sum_{i=1}^N X_i , \label{eq:H0} \\
    V &= \epsilon Z= \epsilon \sum_{i=1}^N Z_i, \label{eq:V=Z}
\end{align}\end{subequations}
where $h=0.9, J_x=0.37$. If $J_x=0$, $H_0$ is the transverse-field Ising model with two ferromagnetic ground states, separated from the excited states by a gap $2(1-h)\approx 0.2$  \cite{sachdev_book}. $J_x$ term is added to break the integrability of $H_0$, but using exact diagonalization, we find
  $H_0$ is still gapped within the ferromagnetic phase: see Fig.~\ref{fig:scar}(a).  However, this gap is extremely sensitive to $V$: the ground state $|\psi_\uparrow\rangle $ of $H_0$ with $\langle Z_i\rangle >0$ quickly merges into the excitation spectrum when $\epsilon \sim N^{-1}$. So \eqref{eq:H0V} models false vacuum decay, generalizing the literature which studies the case $J_x=0$ \cite{falseVac_Ising,falseVac_Hastings,falseVac_quasip,falseVac_confine,falseVac_spinchain}. For $\epsilon\lesssim\Delta$, we see clear non-thermal dynamical behavior in Fig.~\ref{fig:scar}(b): both if the system starts in the true false vacuum $\ket{\psi_\uparrow}$, or even the product state $\ket{\uparrow\cdots \uparrow}$. Prethermalization and slow false vacuum decay are visible in the anomalously large values of $\langle Z_i(t)\rangle$, even at $t>N/\epsilon$.  Both preparing the initial state $\ket{\uparrow\cdots \uparrow}$, and measuring $\langle Z_i(t)\rangle$, are achievable in ultracold atom experiments \cite{ion_confine21}.
  

The non-thermal behavior is also manifest when we analyze the exact eigenstates of $H_0+V$: see Fig.~\ref{fig:scar}(c).  While $V$ strongly prefers $\langle Z_i\rangle<0$, and most eigenstates near energy $E_\uparrow =\langle \psi_\uparrow|H|\psi_\uparrow\rangle $ (similar for $\ket{\uparrow\cdots \uparrow}$) obey this, there are three atypical eigenstates with $\langle Z\rangle>0$, on which $\psi_\uparrow$ has large support. Such eigenstates can be viewed as atypical ``scars'' in the finite-size spectrum. While our theorem does not say anything about the existence (or number) of such ``scars" -- we only rigorously demonstrate that $\langle Z_i(t)\rangle >0$ persists to times of at least  (\ref{eq:thermalizationtime}) -- it is intriguing that prethermalization also has clear fingerprints in the actual eigenstates of $H_0+V$. 


$H_0$ neither is commuting/frustration-free nor has integer spectrum or topological order.  Previous bounds could not prove prethermalization in this model.  Our work proves that this numerically demonstrated slow false vacuum decay persists to the thermodynamic limit, even as $V$ closes the gap of $H_0$.


\begin{figure}[t]
\centering
\includegraphics[width=.48\textwidth]{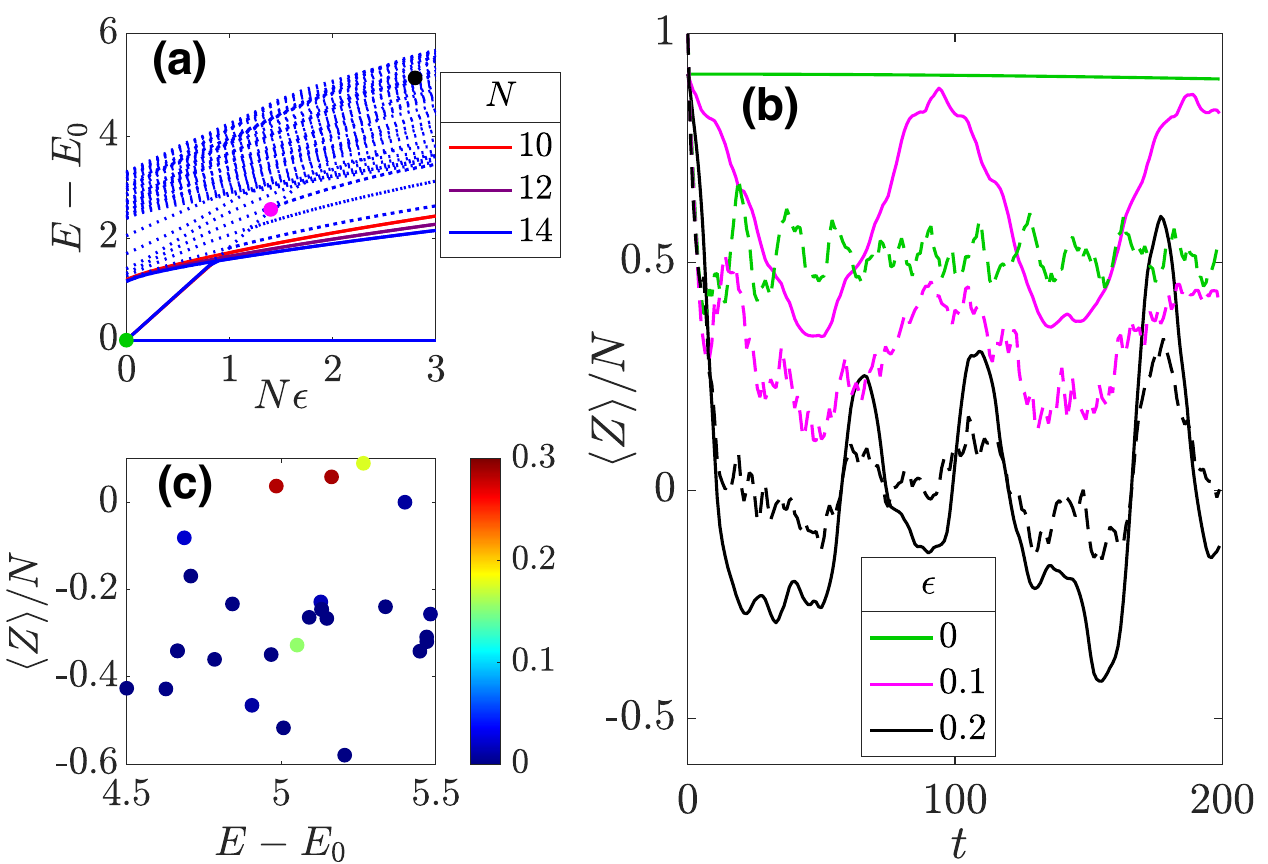}
\caption{\label{fig:scar}
(a) Spectrum of $H=H_0+V$ in (\ref{eq:H0}) and (\ref{eq:V=Z}) at $N=14$ (blue lines). The lowest $60$ eigenstates are shown. For the lowest $3$ eigenstates, data for $N=10,12$ are also shown by solid lines of different colors, indicating the gap closes at $\epsilon\sim 1/N$.  Solid dots represent $E_\uparrow = \langle \psi_\uparrow |H|\psi_\uparrow\rangle$ at $\epsilon=0,0.1,0.2$; for the latter two values, the false vacuum has been lifted above the gap. (b) Solid lines: $\langle Z\rangle/N$ with initial state $\psi_\uparrow$ for $\epsilon=0,0.1,0.2$. The green line $\epsilon=0$ has slight dynamics because $\psi_\uparrow$ is superposition of only \emph{almost} degenerate states (with finite system size).  Dashed lines: $\langle Z\rangle/N$ starting from $\ket{\uparrow\cdots \uparrow}$ instead.  Athermal behavior is observed for times $t>N/\epsilon$, even as $\Delta=0.2\sim \epsilon$. (c) Overlap of eigenstates $|E\rangle$ of $H$ with the false vacuum: $|\langle E|\psi_\uparrow\rangle|^2$,  as a function of energy around $E_\uparrow$ for $N=14,\epsilon=0.2$. The color for each eigenstate $|E\rangle$ indicates $|\langle E|\psi_\uparrow\rangle|^2$.  $|\psi_\uparrow\rangle$ is supported mainly by three atypical ``scar states" with $\langle Z\rangle>0$. }
\end{figure}



\emph{Applications of our Result.}--- An immediate consequence of our result is the generic robustness of quantum simulation of low-energy -- often \emph{constrained} -- quantum dynamics in the presence of realistic experimental perturbations.  For example, one may wish to study exotic quantum dynamics in a Hilbert space where no two adjacent spins in a 1d chain can both be up.  Yet in experiment, such a constraint can only be ``softly" implemented by penalizing adjacent up spins, e.g. via the Rydberg blockade \cite{scar_exp17}. Our result proves that for \emph{any} such model with soft constraints, the dynamics is accurately approximated by quantum dynamics in the constrained subspace of physical interest for non-perturbatively long times. This constrained dynamics often leads to quantum scars \cite{mori,scar_exp17,scar_NP18,shatter1,shatter2,scar_rev_NP,frag_confine,yoshinaga,motrunich,PRXQ_Thomas,Stephen:2022kzd,scar_rev_qp}: athermal and atypical eigenstates buried in an otherwise chaotic spectrum.  Such atypical states were found in our simulation of prethermalization.  


Our work also proves that the false vacuum has a non-perturbatively long lifetime, and that this slow decay can be accessed by experimentally accessible correlation functions and entanglement entropies, as discussed in our numerical example.
This is of some value, since the classic \cite{falseVac_Coleman} path integral calculation of false vacuum decay is quite subtle \cite{anders}, and certainly far from mathematically rigorous.
We show that thermalization (and the time scales after which eigenstate thermalization hypothesis \cite{ETH_91,ETH_94,ETH_rev16,ETH_rev18} can hold) is extraordinarily slow in all perturbations of gapped systems, starting from states in $U\mathcal{H}_<$. 

Prethermalization does not necessarily mean quasi-conservation of some global charge, as in perturbed integer-spectrum systems \cite{abanin2017rigorous}.  It is possible that this only occurs when $H_0$ has integer spectrum.  In contrast, what we describe below applies even to systems where $H_0$ contains only a single gap.  Under the assumption that the low energy spectrum of $H_0$ comes from (gapped) quasiparticle excitations, we argue in the SM that our rigorous result suggests the absence of low-energy quasiparticle proliferation \cite{falseVac_quasip} before the prethermalization time, starting from any state that has sufficiently low energy ($\ket{\psi_\uparrow}$ or $\ket{\uparrow\cdots\uparrow}$ in the numerical example). Since $H_*$ in (\ref{eq:rotation}) does not connect eigenstates of $H_0$ with energy difference larger than $\Delta$, it would not connect between states with differing numbers of low-energy quasiparticles (whose energy is at least $\Delta$).  This suggests a generalization of doublon quasi-conservation in the Hubbard model.


Most spectral gaps in many-body systems arise in gapped phases of matter, where the ground states are separated by a finite gap $\Delta$ from any excited state.  In a topological phase, there are exactly degenerate ground states \cite{QI_meet_QM}, which may serve as a logical qubit.  Our prethermalization proof implies such a qubit will remain protected in a low-dimensional subspace for extraordinarily long time scales in the presence of perturbations. This work thus provides an interesting generalization of earlier results \cite{topo_Hastings,bravyi2011short,michalakis2013stability,frustration_free_22} which proved the robustness of topological order in frustration-free Hamiltonians. 
In practice, decoherence of an experimental device may be far more dangerous than any perturbation itself to a qubit.  We cannot prove the robustness of \emph{accessible} information \cite{QMemory_RMP}: logical operators $L$ are often extensive, so even if the rotation $U$ in (\ref{eq:rotation}) is quasi-local, $\lVert U^\dagger L U - L \rVert \sim 1$ is possible.

\comment{
with local operations.  Let $X_{\mathrm{L}}$ and $Z_{\mathrm{L}}$ denote logical Paulis for a qubit protected in the  low energy subspace of $H_0$.  We might expect information is protected in an accessible format with a fidelity of $F$ whenever \begin{equation}
    1-F > \lVert X_{\mathrm{L}} - U^\dagger X_{\mathrm{L}} U \rVert, \label{eq:1F}
\end{equation}
with a similar equation for $Z_{\mathrm{L}}$.  If $X_{\mathrm{L}}$ and $Z_{\mathrm{L}}$ are spatially local, then indeed the same theorems that assure the locality of $H_*$ and $V_*$ ensure the right hand side of (\ref{eq:1F}) is small.  Then, the logicals of $H_0$ and $H_0+V$ can't act too differently in the low energy subspace (indeed, on any state).    However, in most codes (e.g. the surface code) intended for near-term use, logical qubits are non-local and act on $L$ qubits -- the number of physical qubits that encode a logical cannot be too small in order to ensure it can be error-corrected. Then we anticipate the fidelity vanishes in the thermodynamic limit, as \begin{equation}
    \lVert X_{\mathrm{L}} - U^\dagger X_{\mathrm{L}} U \rVert \sim L\frac{\epsilon}{\Delta}.
\end{equation} 
Designing a logical qubit such that $L\epsilon \ll \Delta$ will ensure that, even if $H_0$ is not frustration-free, the logical qubit can be operated on with high fidelity for prethermally long time scales.
}

A somewhat similar application of our result arises in SU(2)-symmetric quantum spin models, where states in the Dicke manifold (maximal $S^2$ subspace) can readily form squeezed states \cite{squeezing_rev11} of metrological value \cite{QMetro_06}.  When the Dicke manifold is protected by a spectral gap (as arises in realistic models), our work demonstrates that this protection of squeezed states is robust for exponentially long time scales in the presence of inevitable perturbations.  Of course, many practical atomic physics experiments have long-range (power-law) interactions \cite{squeezing_Rey20}, which currently lie beyond the scope of our proof.  It will be important in future work to understand whether our conclusions can be extended to this setting.  


\emph{Proof idea.}--- We now sketch the proof of our main result (details are in the SM). Although the proof structure mirrors that for Hubbard-like models \cite{abanin2017rigorous}, we need substantial technical improvements because our assumption is much weaker: we only need a single gap in $H_0$.  In what follows, $|n\rangle$ is an eigenstate of $H_0$ with eigenvalue $E_n$.

Suppose for the moment that $V$ was so small that $\lVert V\rVert \ll \Delta$, and (for convenience) suppose that $\langle m|V|n\rangle = V_{mn} \ne 0$ only if $|m\rangle$ and $|n\rangle$ are on opposite sides of the gap.   In this case we would know exactly $V$ does not close the gap, and moreover we could use first order perturbation theory to explicitly rotate the eigenstates: \begin{equation}\label{eq:n1=n+}
    |n\rangle_1 = |n\rangle + \sum_{m\ne n} \frac{V_{mn}}{E_n-E_m}|m\rangle;
\end{equation}
Moreover,
\begin{equation}
    \lVert |n\rangle_1-|n\rangle \rVert \lesssim\frac{\epsilon}{\Delta}. \label{eq:n1n}
\end{equation}
Higher orders in perturbation theory are tedious but straightforward, and (\ref{eq:n1n}) holds for the exact all-order eigenstates $|n\rangle_{H_0+V}$.  Unfortunately this series is badly behaved in the more realistic setting where each local term in $V$ is bounded by $\epsilon$ instead.  Now, $\lVert V\rVert \sim \epsilon N$ diverges with the number of lattice sites $N$. Yet this divergence should only be present in many-body states, due to the orthogonality catastrophe; local operators should be well-behaved to high order.

The operator counterpart of \eqref{eq:n1=n+} is formulated by the Schrieffer-Wolff transformations \cite{SW_66,SW_many11}, which proceed as follows.  First, we project $V$ onto terms acting within [$\mathbb{P}V$] and between [$(1-\mathbb{P})V$] the high/low-energy subspaces of $H_0$.  This can be done by defining \begin{align}\label{eq:PV=w}
    \mathbb{P}V &= \int_{-\infty}^\infty \mathrm{d}t \; w(t)\mathrm{e}^{\mathrm{i}H_0t}V\mathrm{e}^{-\mathrm{i}H_0t} \notag \\
    &= \sum_{n,m} \widehat{w}(E_n-E_m)V_{nm}|n\rangle\langle m|.
\end{align} 
Here $w(t)$ is a real-valued function with Fourier transform $\widehat{w}(\omega)$. The second line of \eqref{eq:PV=w} follows from the Heisenberg evolution $
    V(t) = \sum_{n,m} V_{nm} \mathrm{e}^{\mathrm{i}(E_n-E_m)t}|n\rangle\langle m|$.
    We don't try to calculate $|n\rangle$ or $V_{nm}$; nevertheless, the formal statement (\ref{eq:PV=w}) is valuable.
If we can find a function where $\widehat{w}(\omega)=0$ if $|\omega|\ge \Delta$, this transformation can project out the off-diagonal terms in $V$.  Such functions are known \cite{hastings2010quasi,bachmann2012automorphic}, and have asymptotic decay $w(t)\sim \mathrm{e}^{-|t|/\ln^2 |t|}$ at large $t$.  The Lieb-Robinson theorem \cite{Lieb1972,ourreview} shows that for any local operator $B_x$ supported on site $x$,  $\mathrm{e}^{\mathrm{i}H_0t}B\mathrm{e}^{-\mathrm{i}H_0t}$ is, up to exponentially small corrections, a sum of operators acting on sites within a distance $d\sim vt$ of $x$, for finite velocity $v$.   
As a result, terms in $\mathbb{P}V$ that act on sites separated by distance $r$ decay faster than $\exp[-r^{1-\delta}]$, for any $\delta >0$: this is because $w(t)$ decays a little slower than $\mathrm{e}^{-t}$, and  $B_x(t)$ has support in a ball of size $v t$, centered at $x$.


With the desired projection, we then define \begin{equation}
    D_1 = \mathbb{P}V, \;\;\;\; W_1 = (1-\mathbb{P})V,
\end{equation}
and a first order unitary rotation $U_1 = \mathrm{e}^{A_1}$ where 
\begin{equation}
[A_1,H_0] = W_1,
\end{equation}
to rotate away the off-diagonal $W_1$. $A_1$ can be found as $\mathrm{i}$ times a quasi-local Hamiltonian in a similar fashion in \eqref{eq:PV=w}.
Explicit calculation shows that the new Hamiltonian in the rotated frame \begin{equation}\label{eq:H2V2}
    U_1^\dagger (H_0+V)U_1 = H_2+V_2,
\end{equation}
is indeed block-diagonal ($H_2$ piece) for the two gapped subspaces of $H_0$ up to a $\mathrm{O}(\epsilon^2)$ piece $V_2$. Moreover, although the generator Hamiltonian $-\mathrm{i} A_1$ contains terms that decay slowly with its support, we prove $V_2$ is a sum of local terms that decay as $\exp[-r^{1-\delta}]$ with the support size $r$. To get this locality bound of $V_2$, we do require somewhat better Lieb-Robinson bounds, inspired by the equivalence class construction of \cite{chen2021operator}, than the standard ones \cite{Lieb1972}.  
\eqref{eq:H2V2} with the locality bound completes the first-order Schrieffer-Wolff transformation. In models where $H_0$ contains mutually commuting terms, this first-order process to suppress perturbations is studied in \cite{hamazaki}. Here, we not only deal with general models, but iterate this process to very high order, to obtain the non-perturbative bound (\ref{eq:thermalizationtime}).

At $k$-th order, we are given $V_k$ as the off-diagonal part in the Hamiltonian. We define $D_k = \mathbb{P}V_k$, $W_k = (1-\mathbb{P})V_k$ and $[A_k,H_0] = W_k$. Rotating the Hamiltonian by $U_k = \mathrm{e}^{A_k}$ gives the next off-diagonal $V_{k+1}$. The non-trivial aspect of this iteration is to show that $V_k$ (and $A_k,D_k,\cdots$) is not too non-local: after all, our argument for prethermalization relied on $\lVert U^\dagger B U - B\rVert \ll \lVert B\rVert$, which is only guaranteed when $U$ consists of local rotations.
As we use the same projection $\mathbb{P}$ at each step of the process, $V_k$ has increasingly large support for increasing $k$, and eventually this process becomes uncontrollable: the support of terms in $V_k$ is so large that our error $\lVert U_k^\dagger B U_k - B\rVert $ increases with $k$. 
In our proof, we can show that \begin{equation}
    \frac{\lVert V_{k+1}\rVert_{\text{local}}}{\lVert V_k\rVert_{\text{local}}} \lesssim \frac{\epsilon}{\Delta} k^{(2d-1)/(1-2\delta)}. \label{eq:Vlocal}
\end{equation}
Here $\lVert V\rVert_{\mathrm{local}}$ roughly denotes the operator norm of terms in $V$ that act non-trivially on one particular site.  From (\ref{eq:Vlocal}), we see that we must stop the Schrieffer-Wolff iterations when \begin{equation}
    k_* = \left(\frac{\Delta}{\epsilon}\right)^{a},\text{ where } a=\frac{1-2\delta}{2d-1}.
\end{equation}

Ultimately, we obtain a rotated Hamiltonian of the form (\ref{eq:rotation}), where perturbation $V_*$ is exponentially suppressed.  For any local operator $B$, we find that \begin{equation}
    \lVert U^\dagger \mathrm{e}^{\mathrm{i}Ht} B\mathrm{e}^{-\mathrm{i}Ht} U - \mathrm{e}^{\mathrm{i}H_*t} U^\dagger B U \mathrm{e}^{-\mathrm{i}H_*t} \rVert \le \epsilon t^{d+1} \mathrm{e}^{-k_*}.
\end{equation}
Namely, there exists a mild quasi-local rotation of (sums of) local operators such that the genuine dynamics of operators (and correlation functions, etc.) appear to be restricted to the low/high-energy subspaces of $H_0$ for the prethermal time scale (\ref{eq:thermalizationtime}).  This completes (the sketch of) our proof that prethermalization is a generic feature of any perturbed gapped model.


\emph{Outlook.}---In this Letter, we have proved that the prethermalization of doublons in the Hubbard model is but one manifestation of a universal phenomenon, whereby distinct sectors of a gapped Hamiltonian $H_0$ remain protected for (stretched) exponentially long times in the presence of local perturbations $V$. Prethermalization, in all measurable local correlation functions, is generic to any perturbation of a gapped system.  We thus immediately provide a rigorous proof that the false vacuum decays non-perturbatively slowly, placing less rigorous field-theoretic calculations \cite{falseVac_Coleman} on firmer footing.

Our result shows that is always reasonable to simulate quantum dynamics generated by $V$ in constrained models, so long as one studies $H_0+V$, where $H_0$'s ground state manifold is the constrained subspace of interest, and $H_0$ has a large spectral gap $\Delta$.  Even if $H_0+V$ is gapless and chaotic, the (locally rotated) ground states of $H_0$ serve as effective ``scar states" which will exhibit athermal dynamics for extraordinarily long times.  We anticipate that this observation will have practical implications for the preparation of interesting entangled states on the Dicke manifold in future atomic physics experiments, and for the ease of recovering qubits under imperfect local encoding.


\emph{Acknowledgements.}--- We thank Thomas Iadecola, Alessio Lerose and Haoqing Zhang for valuable comments. This work was supported by a Research Fellowship from the Alfred P. Sloan Foundation under Grant FG-2020-13795 (AL) and by the U.S. Air Force Office of Scientific Research under Grant FA9550-21-1-0195 (CY, AL).

\onecolumngrid

\newpage

\renewcommand{\theequation}{S\arabic{equation}}
\renewcommand{\thefigure}{S\arabic{figure}}
\setcounter{equation}{0}
\setcounter{figure}{0}

\begin{center}
{\large \textbf{Supplementary Material}}
\end{center}

\section{Preliminaries}
In this section we review a few mathematical facts, and precisely state our assumptions about the models we study.

\subsection{Models of interest}
We consider many-body quantum systems defined on a (finite) $d$-dimensional ``lattice", with vertex set $\Lambda$.  Let $\mathsf{d}: \Lambda \times \Lambda \rightarrow \mathbb{Z}^+$ denote the Manhattan distance between two vertices in $\Lambda$.  Note that $\mathsf{d}(i,j)=0$ if and only if $i=j$, while two vertices are defined to be neighbors if $\mathsf{d}(i,j)=1$.   The diameter of a subset $S\subseteq\Lambda$, denoted $\mathrm{diam}(S)$, is defined as \begin{equation}
    \mathrm{diam}(S) = \max_{i,j\in S} \mathsf{d}(i,j).
\end{equation}
Similarly, the boundary of a set $S$ is defined precisely as \begin{equation}
    \partial S = \lbrace i\in S : \text{ there exists $j\notin S$ with } \mathsf{d}(i,j)=1\rbrace.
\end{equation}
Although we will typically refer to $\Lambda$ as a lattice, we do not require it to have an translation symmetry (automorphism subgroup isomorphic to $\mathbb{Z}^d$). Instead, we require that there exists a finite constant $c_d$ such that for any $S\subseteq \Lambda$ \begin{equation}\label{eq:cd}
    |\partial S| \le c_d \cdot (1+\mathrm{diam}S)^{d-1}, \quad \mathrm{and} \quad |S| \le c_d \cdot (1+\mathrm{diam}S)^{d}.
\end{equation}
We will implicitly be interested in the regime where $|\Lambda|\rightarrow\infty$.

We associate to each vertex in $\Lambda$ a $q$-dimensional ``qudit", such that the global Hilbert space is (on a finite lattice) $\mathcal{H}=(\mathbb{C}^q)^\Lambda$.  We consider Hamiltonian 
\begin{align}\label{eq:H0+V}
    H = H_0 + V_1,
\end{align}
where $H_0$ and $V_1$ are both spatially local operators on $\Lambda$, in the sense that there exists constants $B,\kappa_0,\epsilon_0>0$ such that we may write \begin{equation}\label{eq:H0V1}
    H_0 = B \sum_{S\subseteq \Lambda} \mathrm{e}^{-\kappa_0 \mathrm{diam}(S)} H_{0,S}, \;\;\;\;\;\;\; V_1 = \epsilon_0 \sum_{S\subseteq \Lambda} \mathrm{e}^{-\kappa_0 \mathrm{diam}(S)} V_{1,S},
\end{equation}
where we assume that $\lVert H_{0,S}\rVert, \lVert V_{1,S}\rVert \le 1$, with the operator norm here the standard infninity norm (maximal singular value), and $H_{0,S}$ and $V_S$ operators that act non-trivially only on sites in $S$.  We do not require that $H_{0,S}$ acts non-trivially on all sites contained within $S$.  We assume that the spectrum of $H_0$ has a non-trivial gap $\Delta$, so that the many-body Hilbert space $\mathcal{H}$ can be decomposed into $\mathcal{H}=\mathcal{H}_<\oplus\mathcal{H}_>$, where $\mathcal{H}_<$ contains eigenvectors of eigenvalue at most $E_*$, while $\mathcal{H}_>$ contains eigenvectors of eigenvalue at least $E_*+\Delta$. 

The perturbation is weak in the sense that $\epsilon_0/\Delta$ will be small -- we postpone precise definition of how small to (\ref{eq:v1<}).  In fact, we can even slightly relax the requirements on $H_0$ and $V_1$ from above, though for practical models the above should suffice.  (Models of interest not captured by the above assumptions, such as those with power-law interactions, are not within the scope of our proof.) 

\subsection{Superimposing a simplicial lattice}
We have not specified the lattice $\Lambda$ beyond requiring it being $d$-dimensional in \eqref{eq:cd}. However, to prove our main results, more information about $\Lambda$ is needed to conveniently organize the support of operators. As a result, we fix the specific lattice by assuming that $\Lambda$ is the $d$-dimensional simplicial lattice defined as follows (see e.g. \cite{simplicial_lattice}).\footnote{The strategy of our proof also works for other lattices, e.g., the hypercubic lattice. However, there will be more terms to keep track of when decomposing an evolved operator, so we stick with the simplicial lattice with as few terms as possible. } 
Starting from an auxiliary $(d+1)$-dimensional hypercubic lattice with orthogonal basis $\mathbf{e}_1,\cdots, \mathbf{e}_{d+1}$, define a redundant basis \begin{equation}
    \mathcal{E}_p = \mathbf{e}_p - \frac{\mathbf{e}_1+\cdots+ \mathbf{e}_{d+1}}{d+1}, \quad p=1,\cdots, d+1,
\end{equation}
that satisfies \begin{equation}
    \mathcal{E}_1+ \cdots + \mathcal{E}_{d+1} = 0.
\end{equation}
All lattice points of the form $\sum_p n_p \mathbf{e}_p = \sum_p n_p \mathcal{E}_p$ with constraint $n_1+\cdots+n_{p+1}=0$, then lie on the $d$-dimensional hyperplane $\mathbf{x}_1+\cdots+\mathbf{x}_{d+1}=0$, and form the $d$-dimensional simplicial lattice. In a nutshell, each group of $d+1$ nearest sites in the simplicial lattice, serve as the vertices of the $d$-dimensional regular simplex that they form. As examples, the $2$d simplicial lattice is the triangular lattice, while the $3$d simplicial lattice is the fcc lattice made of regular tetrahedrons.

From now on, we focus on the simplicial lattice that automatically satisfies \eqref{eq:cd}, with $c_d$ determined by $d$. This is not a big restriction, since a model on an arbitrary $d$-dimensional lattice $\Lambda_0$ can be transformed into one on the simplicial lattice $\Lambda$ as follows. One can superimpose a simplicial lattice on top of the original lattice, and move all qudits to their nearest simplical lattice site (as measured by Euclidean distance in $\mathbb{R}^d$). See Fig.~\ref{fig:simplicial} for a sketch. A site in $\Lambda$ will contain at most $\mathrm{O}(1)$ qudits. If a site contains $m>1$ qudits, combine them to form an ``$mq$-dit": a single degree of freedom with $mq$-dimensional Hilbert space. Furthermore, the original Hamiltonian satisfying \eqref{eq:H0V1}, remains at least as local in the new simplicial lattice, since grouping sites together cannot increase Manhattan distance between (possibly now grouped) sites. Finally, all results that we prove for the new simplicial lattice, can be transformed back to the original $\Lambda_0$.  Any book-keeping factors that arise during this process will be $\mathrm{O}(1)$ and not affect any main results.

\begin{figure}[t]
\centering
\includegraphics[width=.6\textwidth]{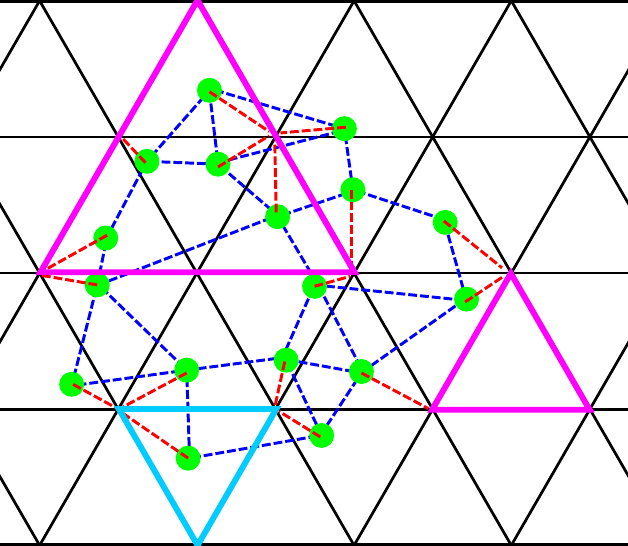}
\caption{\label{fig:simplicial}
A sketch of superimposing a simplicial lattice $\Lambda$ (black solid lines) on the original lattice $\Lambda_0$ of qudits (green dots connected by blue dashed lines) in $2$d. All qudits are moved to their nearest lattice site of the triangular lattice, as shown by red dashed lines. Some sites have $m>1$ qudits, where we combine them to a ``$mq$-dit". While some sites may have $m=0$ qudits, this is not a problem since it is equivalent to having one qudit on those sites that does not interact with the rest of the system. Given the locality of the original Hamiltonian, the ``$mq$-dits" in $\Lambda$ also only interact with their neighbors. We will consider simplices $S\subseteq \Lambda$ of fixed orientation. Namely, we say $S$ is a simplex if it is like the magenta triangles; it can not be the cyan triangle. }
\end{figure}

We say a subset $S\subseteq \Lambda$ is a simplex, if there are $d+1$ sites $i_1,\cdots,i_{d+1}\in S$, such that they are the vertices forming a $d$-dimensional regular simplex, and that $S$ is exactly all sites in $\Lambda$ contained in that regular simplex. Moreover, we only consider simplices $S$ of fixed orientation, namely there are $d+1$ fixed vectors $\mathsf{E}_1,\cdots,\mathsf{E}_{d+1}$, such that any simplex $S$ have them as the normal vectors (pointing outwards) of its $d+1$ faces. In Fig.~\ref{fig:simplicial} for example, the magenta triangles are simplices we consider, while the cyan one is not. We will use the following geometric fact: (see Fig.~\ref{fig:Y}(b) as an illustration)
\begin{prop}\label{prop:simp}
Let $S_0, S \subseteq \Lambda$ be two simplices with fixed direction, such that $S_0 \subseteq S$. Let $f_1,\cdots,f_{d+1}$ be the faces of $S$. Then \begin{equation}\label{eq:S0fp}
    \sum_{p=1}^{d+1} \mathsf{d}(S_0, f_p) = \mathrm{diam}S-\mathrm{diam}S_0.
\end{equation}
\end{prop}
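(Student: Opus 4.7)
The plan is to work directly in the ambient coordinates of the simplicial lattice and show that every fixed-orientation simplex is parameterized by $d+1$ integer ``face coordinates'' whose sum equals its diameter. Since $\Lambda$ sits in the hyperplane $\sum_p x_p = 0$ inside $\mathbb{R}^{d+1}$ and neighboring sites differ by $\mathbf{e}_p - \mathbf{e}_q$ for some $p\neq q$, a routine shortest-path argument gives
\begin{equation}
\mathsf{d}(x,y) = \tfrac{1}{2}\sum_{p=1}^{d+1}|x_p - y_p|,
\end{equation}
since each edge changes exactly two coordinates by $\pm 1$ and any target displacement on the hyperplane can be realized greedily.

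Next I would describe a fixed-orientation simplex by
\begin{equation}
S = \{x\in\Lambda:\ x_p\leq m_p(S)\text{ for all }p\},\qquad f_p = \{x\in S:\ x_p = m_p(S)\},
\end{equation}
so that the outward normal of $f_p$ is proportional to $\mathcal{E}_p$, matching the paper's fixed normals $\mathsf{E}_p$. A few lines of inequality-chasing then show that the range of $x_p$ on $S$ is $[m_p(S)-L,\, m_p(S)]$ with $L=\sum_q m_q(S)$, so $\mathrm{diam}(S) = \sum_p m_p(S)$; and that $S_0\subseteq S$ is equivalent to $m_p(S_0)\leq m_p(S)$ for every $p$.

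The crux is the single-face identity $\mathsf{d}(S_0, f_p) = m_p(S) - m_p(S_0)$. The lower bound follows because any $x\in S_0$, $y\in f_p$ satisfy $y_p - x_p \geq m_p(S)-m_p(S_0)$; combined with $\sum_q(y_q - x_q) = 0$ and the triangle inequality, this forces $\sum_{q\neq p}|y_q - x_q|\geq m_p(S)-m_p(S_0)$ as well. The matching upper bound is explicit: pick $x\in S_0$ with $x_p = m_p(S_0)$ (feasible because $L_0\geq 0$) and set $y = x + (m_p(S)-m_p(S_0))(\mathbf{e}_p - \mathbf{e}_{q^*})$ for any fixed $q^*\neq p$, then verify $y\in f_p$ using only $m_q(S_0)\leq m_q(S)$. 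Summing over $p$ gives $\sum_p\mathsf{d}(S_0, f_p) = \sum_p(m_p(S)-m_p(S_0)) = \mathrm{diam}(S) - \mathrm{diam}(S_0)$, which is the claim; the only real subtlety is the feasibility check for the explicit minimizer $y$, after which the whole proposition reduces to inequality arithmetic on the $m_p$'s.
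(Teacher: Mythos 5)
Your proof is correct, and it takes a genuinely different route from the paper's. The paper argues geometrically: it grows the faces of $S_0$ one at a time to reach the faces of $S$, producing a chain $S_0\subseteq S_1\subseteq\cdots\subseteq S_{d+1}=S$ with the telescoping identity $\mathsf{d}(S_0,f_p)=\mathrm{diam}\,S_p-\mathrm{diam}\,S_{p-1}$, and then sums. That argument is short but leans on the geometrically-stated claim that ``the edge enlarges by a length exactly $\mathsf{d}(S_0,f_1)$'' without fully unpacking it. You instead work in the ambient coordinates on the hyperplane $\sum_p x_p=0$: you identify the graph metric as $\mathsf{d}(x,y)=\frac12\sum_p|x_p-y_p|$, parameterize every fixed-orientation simplex by $d+1$ integers $m_p$ via $S=\{x:x_p\le m_p\}$ with $\mathrm{diam}\,S=\sum_p m_p$, and reduce the proposition to the single-face identity $\mathsf{d}(S_0,f_p)=m_p(S)-m_p(S_0)$, which you prove by a clean lower bound (using $\sum_q(y_q-x_q)=0$) and an explicit minimizer $y=x+\delta(\mathbf{e}_p-\mathbf{e}_{q^*})$. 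The two arguments encode the same underlying structure (your $m_p(S)-m_p(S_0)$ is precisely the paper's per-step diameter increment), but your version is more explicit and self-contained: the feasibility check on the explicit minimizer replaces the paper's appeal to geometric intuition, and the coordinate parametrization also gives the equivalence $S_0\subseteq S\Leftrightarrow m_p(S_0)\le m_p(S)$ for free, which the paper uses implicitly. The paper's version is shorter once one grants the face-growing picture; yours trades a little length for rigor and would be easier to formalize.
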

\begin{proof}
Consider the process that grows the faces $f'_1,\cdots,f'_{d+1}$ of $S_0$ one by one to coincide with $S$. First, suppose the opposite vertex of $f'_1$ in $S_0$ is $i_1\in S_0$. We first grow $f'_1$ to $f_1$ in the sense that fixing $i_1$, while enlarging all the edges of $S_0$ connecting $i_1$ to reach $f_1$. Then after this first step, we get a new simplex $S_1\subseteq S$ that has $i_1$ as a vertex, and its opposite face overlapping with $f_1$. During this step, the edge enlarges by a length exactly $\mathsf{d}(S_0,f_1)$: \begin{equation}
    \mathsf{d}(S_0, f_1) = \mathrm{diam}S_1-\mathrm{diam}S_0,
\end{equation}
because the distance is measured by the Manhattan distance on the underlying simplicial lattice, not a Euclidean metric in $\mathbb{R}^d$. At the second step, we grow $S_1$ to $S_2$ in a similar way to reach $f_2$, with relation \begin{equation}
    \mathsf{d}(S_0, f_2) = \mathsf{d}(S_1, f_2) = \mathrm{diam}S_2-\mathrm{diam}S_1.
\end{equation}
Iterating this, we get an equation like above at each step up to the final $(d+1)$-th step, so that their summation produces \eqref{eq:S0fp}, because \begin{equation}
    \mathrm{diam}S-\mathrm{diam}S_0 = \mathrm{diam}S_{d+1}-\mathrm{diam}S_{d}+\mathrm{diam}S_d-\mathrm{diam}S_{d-1}+ \cdots + \mathrm{diam}S_1-\mathrm{diam}S_0,
\end{equation}
with $S=S_{d+1}$.
\end{proof}

\subsection{The $\kappa$-norm of an operator}
For an extensive operator $\OO$, there exist (many) local decompositions \begin{equation}\label{eq:O=OS}
    \OO = \sum_S \OO_S,
\end{equation}
where $S\subseteq \Lambda$ is always a simplex, and $\OO_S$ is supported inside $S$. We do not require $\OO_S$ acts nontrivially on the boundary of $S$, and the decomposition is not unique. However, there always exists an ``optimal decomposition" where we assign terms in $\mathcal{O}$ to the smallest possible simplex $S$.  We quantify this by defining the $(\alpha,\kappa)$-norm of $\OO$ as \begin{align}\label{eq:norm}
    \|\OO\|_{\alpha, \kappa}:= \inf_{\{\OO_S\}} \max_{i \in \Lambda} \sum_{S \ni i} \mathrm{e}^{\kappa (\mathrm{diam}S)^\alpha} \left\|\OO_{S}\right\|,
\end{align}
where $\inf_{\{\OO_S\}}$ is the infimum over all local decompositions. The parameters $\alpha,\kappa$ are both non-negative. 
    We will choose $\alpha$ as a fixed parameter that is close to $1$ from below, and we will just call ``$\kappa$-norm'' and use notation $\norm{\cdot}_\kappa$, with $\alpha$ being implicit.  Note that the prethermalization proof for commuting models \cite{abanin2017rigorous} uses a similar norm but with weight function $\ee^{\kappa |S|} \sim \ee^{\kappa \mathrm{diam}(S)^{d}}$. Here we will see (around Proposition \ref{prop:wt}) that we are forced to use $\alpha<1$ for general non-commuting $H_0$, which leads to the stretched exponential in the final bound. 
We will always assume that we can choose a ``best'' decomposition $\OO=\sum_S \OO_S$ that realizes its $\kappa$-norm: \begin{equation}
    \sum_{S \ni i} \mathrm{e}^{\kappa (\mathrm{diam}S)^\alpha} \left\|\OO_{S}\right\| = \norm{\OO}_\kappa.
\end{equation}
Strictly speaking, this should be viewed as choosing a decomposition that is $\delta$-close to $\norm{\OO}_\kappa$ (which is provably possible), and taking $\delta\rightarrow 0$ in the end, a mathematical annoyance that does not affect the structure of the proofs that follow.

The perturbation $V_1$ is weak in the sense that \begin{equation}\label{eq:v1<}
    \epsilon := \norm{V_1}_{\alpha,\kappa_1} \ll \Delta,
\end{equation}
where $\alpha<1$, $\kappa_1$ is some order $1$ constant, while $\Delta$ is a spectral gap of $H_0$.  While formally $\Delta$ can be any energy scale, our prethermalization bound seems most profound when it corresponds to a gap, as we will then prove that there is a notion of prethermalization -- dynamics is (for long times) approximately governed by a gapped Hamiltonian, even if the true Hamiltonian is no longer gapped.

\subsection{The Lieb-Robinson bound}
Define the Liouvillian superoperator $\LL_0$ by \begin{equation}\label{eq:L0}
    \LL_0 \OO := \ii [H_0, \OO],\quad \forall \OO.
\end{equation} 
We assume $H_0$ contains local interactions, as defined by (\ref{eq:H0V1}). Then, the following Lieb-Robinson bound holds:

\begin{prop}
There exists constants $\mu', \mu, u$ such that
 \begin{equation}\label{eq:LRB}
    \norm{\mlr{\ee^{t\LL_0}\OO_S, \OO'_{S'}}} \le 2\norm{\OO_S}\norm{\OO'_{S'}}  \mu'\min\lr{|\partial S|, |\partial S'|} \ee^{\mu (u|t|-\mathsf{d}(S,S') )},
\end{equation}
for any pair of local operators $\OO_S, \OO'_{S'}$ that do not overlap: $S\cap S'=\emptyset$.
\end{prop}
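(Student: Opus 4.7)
The plan is to establish this as a boundary-refined version of the standard Lieb-Robinson bound. The key point is that the prefactor involves $\min(|\partial S|, |\partial S'|)$ instead of the more familiar $\min(|S|, |S'|)$. I would start from the Heisenberg-picture differential inequality: for $f(t) := \norm{[\ee^{t\LL_0} \OO_S, \OO'_{S'}]}$, expanding $\LL_0$ using the local decomposition $H_0 = B \sum_X \ee^{-\kappa_0 \mathrm{diam}(X)} H_{0,X}$ gives a Grönwall-type bound that iterates over chains of overlapping interaction terms linking $S$ to $S'$. The exponential spatial decay of $H_0$ ensures absolute convergence of the resulting cluster sum.

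Next, I would extract the boundary factor. The crucial observation is that any iteration step that enlarges the support of $\OO_S(t)$ beyond $S$ must involve an interaction term $H_{0,X}$ with $X$ straddling $\partial S$, because interactions supported entirely inside $S$ leave the (full-support-on-$S$) operator $\OO_S$ inside $S$, and interactions entirely outside $S$ initially commute with it. Consequently, one can stratify the sum over iteration chains reaching $\OO'_{S'}$ by the boundary site $i \in \partial S$ at which the chain first exits $S$, and then apply the usual single-rooted Lieb-Robinson estimate from $i$ to $S'$. Each rooted chain contributes at most $C \ee^{\mu(u|t| - \mathsf{d}(i,S'))} \le C \ee^{\mu(u|t| - \mathsf{d}(S,S'))}$, so summing over $i \in \partial S$ yields the $|\partial S|$ prefactor. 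By repeating the argument with $\OO_S$ and $\OO'_{S'}$ exchanged, one obtains the symmetric $\min(|\partial S|, |\partial S'|)$.

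To actually execute the single-rooted bound, I would invoke the equivalence-class / cluster expansion framework of \cite{chen2021operator} adapted to simplicial geometry. Clusters of local interaction terms are counted using the bound \eqref{eq:cd} on ball volumes in $\Lambda$, which controls both the number of clusters of given diameter and the number of ways they can be embedded. The exponential weight $\ee^{-\kappa_0 \mathrm{diam}(X)}$ ensures that choosing $\mu$ sufficiently small compared to $\kappa_0$ makes the geometric series converge, and fixes the Lieb-Robinson velocity $u$ via the usual balance between the interaction range and the iteration rate. The constant $\mu'$ absorbs the resulting combinatorial factors and is independent of $S,S'$.

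The main obstacle is to verify that the boundary-stratification argument does not introduce a hidden factor of $|S|$ or a subexponential correction in $\mathsf{d}(S,S')$. Specifically, one must check that once a chain is rooted at $i \in \partial S$, the remaining cluster sum decouples from the geometry of $S$, so the bound depends on $S$ only through $|\partial S|$ and through $\mathsf{d}(S,S') \le \mathsf{d}(i,S')$. With the simplicial-lattice accounting from Proposition \ref{prop:simp} at hand, this decoupling is straightforward, and the stated bound with constants $\mu', \mu, u$ depending only on $B, \kappa_0, q, d$ follows.
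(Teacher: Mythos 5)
Your sketch matches the route the paper indicates: the paper explicitly declines to give a proof of this bound, deferring to ``analogous methods to those we employ later in the proof of Proposition \ref{prop:reproducing},'' and your irreducible-path/cluster argument — boundary stratification via $\sum_{X \text{ grows } S} \le \sum_{i\in\partial S}\sum_{X\ni i}$, a rooted chain estimate built on the equivalence-class framework of \cite{chen2021operator}, and the $\min$ obtained by exchanging the roles of $S$ and $S'$ — is precisely that analogy. Two small clarifications: the relevant geometric input here is convexity of simplices together with \eqref{eq:cd}, not Proposition \ref{prop:simp} (which concerns sums of face-distances and is used later in Lemma \ref{lem:K<K1}); and since the paper's reproducing identity \eqref{eq:KK<K} is stated only for $\alpha<1$ while the exponential kernel $\mathrm{e}^{-\kappa_0 r}$ is not self-reproducing, you must use the $\mu<\kappa_0$ sacrifice-of-exponent trick you gesture at, which gives the correct constant $\lambda=\sum_m \mathrm{e}^{-(\kappa_0-\mu)\mathsf{d}(i,m)}<\infty$.
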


This bound is slightly stronger than many commonly stated Lieb-Robinson bounds in the literature.  We do not present an explicit proof here, as it can be shown following analogous methods to those we employ later in the proof of Proposition \ref{prop:reproducing}.

\section{Main results}
Now we summarize our main results and describe formally a few applications sketched in the main text.
\subsection{Main theorem}
For a given $H_0$, we say an operator $\OO$ is $\Delta$-diagonal, if \begin{equation}\label{eq:PO=0}
    \alr{E|\OO|E'}=0,
\end{equation}
for any pair of eigenstates $|E\rangle,|E'\rangle$ of $H_0$ that has energy difference $|E-E'|\ge \Delta$.  We do not need to know what these eigenstates are explicitly to gain value from this definition:  if we know $\OO$ is $\Delta$-diagonal and $H_0$ has a many-body gap of size $\Delta$ in the spectrum, then $\OO$ is block-diagonal between $\mathcal{H}_<$ and $\mathcal{H}_>$.  With this in mind, we present our main theorem:

\begin{thm}\label{thm1}
For the Hamiltonian \eqref{eq:H0+V} defined on the $d$-dimensional simplicial lattice, suppose $H_0$ has Lieb-Robinson bound \eqref{eq:LRB} (which defines parameter $\mu$). For any $\alpha \in (0,1)$ and \begin{equation}\label{eq:kap<mu}
    \kappa_1 \le \frac{\mu}{5},
\end{equation}
there exist constants $c_*,c_V,c_A,c_D,c'_V$ determined by $\alpha,d,\mu,\mu',\kappa_1$ and the ratio $u/\Delta$, that achieve the following:

Define \begin{equation}\label{eq:k*}
    k_*= \left\lfloor c_* \lr{\Delta/\epsilon}^{\frac{2\alpha-1}{2d-1}} \right\rfloor,
\end{equation} 
and \begin{equation}\label{eq:kapk}
    \kappa_k = \frac{\kappa_1}{1+\ln k}.
\end{equation}
For any small perturbation $V_1$ with \begin{equation}\label{eq:V<Delta}
    \norm{V_1}_{\kappa_1}=\epsilon \le c_V \Delta,
\end{equation}
there exists a quasi-local unitary $U=\ee^{A_1}\cdots \ee^{A_{k_*-1}}$ with \begin{equation}\label{eq:A<2}
    \norm{A_k}_{\kappa_{k}} \le c_A \epsilon\, 2^{-k}, \quad k=1,\cdots, k_*-1,
\end{equation}
that approximately block-diagonalizes $H=H_0+V_1$: \begin{equation}\label{eq:UHU=}
    U^\dagger HU = H_0 + D_* + V_*,
\end{equation}
where $D_*$ is $\Delta$-diagonal with respect to $H_0$. (Thus, $H_0+D_*$ is also block-diagonal with respect to $\mathcal{H}_<$ and $\mathcal{H}_>$.) Furthermore, the local norms are bounded by \begin{subequations}\begin{align}\label{eq:D*}
    \norm{D_*}_{\kappa_*} &\le c_D \epsilon, \\ \norm{V_*}_{\kappa_*} &\le c'_V \epsilon\, 2^{-k_*},
\end{align}\end{subequations}
where $\kappa_*=\kappa_{k_*}$.
\end{thm}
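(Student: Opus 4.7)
The plan is to carry out an iterative Schrieffer--Wolff transformation adapted to a gapped but otherwise generic local $H_0$, following the strategy sketched in the main text. The key preliminary object is a superoperator $\mathbb{P}$ that extracts the ``$\Delta$-diagonal'' part of an operator without needing an integer spectrum. I would choose a real, even weight $w(t)$ whose Fourier transform $\widehat w(\omega)$ is supported in $|\omega|<\Delta$, equals one in a neighborhood of $0$, and satisfies the Hastings--Koma tail bound $|w(t)|\lesssim\exp[-|t|/\ln^2|t|]$, and set
\begin{equation}
\mathbb{P}\OO := \int\! dt\, w(t)\, \ee^{\ii H_0 t}\OO\, \ee^{-\ii H_0 t}, \qquad \mathbb{Q}:=1-\mathbb{P}.
\end{equation}
A parallel Fourier integral supplies an inverse for the adjoint action $\ad_{H_0}$ on off-diagonal operators: there is a kernel $g(t)$ with the same tail such that $A=\int\! dt\, g(t)\, \ee^{\ii H_0 t}\,\mathbb{Q}\OO\,\ee^{-\ii H_0 t}$ satisfies $[H_0,A]=-\mathbb{Q}\OO$. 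Combining these kernels with the Lieb-Robinson bound \eqref{eq:LRB}, both $\mathbb{P}\OO_S$ and the corresponding $A$ differ from $\OO_S$ only through a sub-exponential tail of the form $\exp[-\kappa' r^\alpha]$ for any $\alpha<1$, which is precisely what the $(\alpha,\kappa)$-norm of \eqref{eq:norm} is built to absorb.

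Equipped with these tools, I would set up the recursion in direct analogy with the Hubbard proof. Starting from $V_1$ with $\|V_1\|_{\kappa_1}=\epsilon$, define $D_k:=\mathbb{P}V_k$, $W_k:=\mathbb{Q}V_k$, choose $A_k$ so that $[H_0,A_k]=-W_k$, and set
\begin{equation}
    \ee^{-\ad_{A_k}}(H_0+V_k) = H_0+D_k+V_{k+1},
\end{equation}
so that after the first-order cancellation $[A_k,H_0]=-W_k$, the remainder $V_{k+1}$ is a sum of nested commutators that is schematically second order in $A_k$. Iterating $k_*-1$ times, the cumulative diagonal piece is $D_*=\sum_k \widetilde D_k$, with $\widetilde D_k$ being $D_k$ dressed by the subsequent rotations $\ee^{-A_{k+1}},\ldots,\ee^{-A_{k_*-1}}$. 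Because each $D_k=\mathbb{P}V_k$ is $\Delta$-diagonal and conjugation by a $\Delta$-diagonal unitary preserves $\Delta$-diagonality, $D_*$ itself is $\Delta$-diagonal, and one can isolate the off-diagonal residual $V_*:=V_{k_*}$.

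The main obstacle, and the heart of the proof, is the inductive norm estimate
\begin{equation}
    \|V_{k+1}\|_{\kappa_{k+1}} \le \tfrac{1}{2}\|V_k\|_{\kappa_k},
\end{equation}
which iterates to $\|V_k\|_{\kappa_k}\le \epsilon\, 2^{-(k-1)}$ and hence to \eqref{eq:A<2} via $\|A_k\|\sim\|W_k\|/\Delta$. The delicate point is that the rate $\kappa_k$ is allowed to degrade only logarithmically through $\kappa_k=\kappa_1/(1+\ln k)$, while each step pays (i) the sub-exponential tails introduced by $\mathbb{P}$ and the inverse of $\ad_{H_0}$, which force $\alpha<1$, and (ii) combinatorial factors $\sim(1+\mathrm{diam}\, S)^d$ for the number of simplices overlapping $S$ together with $\sim(1+\mathrm{diam}\, S)^{d-1}$ boundary factors from the Lieb-Robinson bound. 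To absorb these I would lean on the simplicial geometry: by Proposition \ref{prop:simp}, any sub-simplex $S_0\subseteq S$ satisfies $\sum_p \mathsf{d}(S_0,f_p)=\mathrm{diam}\, S-\mathrm{diam}\, S_0$, so the weight $\ee^{\kappa(\mathrm{diam}\, S)^\alpha}$ splits across the $d+1$ faces into reproducing-kernel factors that convert the geometric sums over nested simplices into convergent series, at the cost of a shift $\kappa\to\kappa'<\kappa$ that is exactly covered by the logarithmic decrement in $\kappa_k$. Balancing the resulting $k$-dependent prefactor $\sim k^{(2d-1)/(2\alpha-1)}\,\epsilon/\Delta$ against the required halving fixes the truncation scale $k_*\sim(\Delta/\epsilon)^{(2\alpha-1)/(2d-1)}$ of \eqref{eq:k*}. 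With this estimate in hand, $\|D_*\|_{\kappa_*}\le c_D\epsilon$ and $\|V_*\|_{\kappa_*}\le c_V'\epsilon\, 2^{-k_*}$ follow immediately from the geometric series $\sum_k 2^{-k}$ and from $V_*=V_{k_*}$ respectively, and the smallness condition \eqref{eq:V<Delta} is exactly what is needed to keep the halving well-defined up through $k=k_*$.
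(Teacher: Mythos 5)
Your overall plan — filter-function projection $\mathbb{P}$, an inverse for $\mathrm{ad}_{H_0}$ on the off-diagonal sector, iterated Schrieffer--Wolff rotations, $\kappa$-norm control via Lieb--Robinson together with the simplicial geometry, and balancing the $k^{(2d-1)/(2\alpha-1)}\,\epsilon/\Delta$ prefactor against the halving to fix $k_*$ — matches the paper's approach. However, there is a genuine gap in how you construct $D_*$ and justify its $\Delta$-diagonality, and the fix changes the recursion you write down.

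You set up the step-$k$ rotation as $\ee^{-\ad_{A_k}}(H_0+V_k)=H_0+D_k+V_{k+1}$ and then collect $D_*=\sum_k \widetilde D_k$ with $\widetilde D_k$ the dressing of $D_k$ by the later $\ee^{-A_j}$, arguing that this is $\Delta$-diagonal because ``conjugation by a $\Delta$-diagonal unitary preserves $\Delta$-diagonality.'' But the $\ee^{A_j}$ are precisely \emph{not} $\Delta$-diagonal: each $A_j$ is built to satisfy $[H_0,A_j]=-(1-\mathbb{P})V_j$, so it necessarily has matrix elements connecting $\mathcal{H}_<$ and $\mathcal{H}_>$. Conjugating $D_k$ by $\ee^{A_{k+1}}$ therefore does \emph{not} preserve block-diagonality, let alone $\Delta$-diagonality, and your $D_*$ as written fails to be $\Delta$-diagonal. (Even setting that aside, conjugating a $\Delta$-diagonal operator by a $\Delta$-diagonal unitary gives at best a $3\Delta$-diagonal operator, not a $\Delta$-diagonal one — only strict block-diagonality with respect to $\mathcal{H}_<\oplus\mathcal{H}_>$ is preserved under block-diagonal conjugation.) The paper resolves this by carrying $D_k$ along in the rotation, i.e.\ taking $U_{k}^\dagger H U_{k}=\ee^{-\cA_k}(H_0 + D_k + V_k) = H_0 + D_{k+1} + V_{k+1}$ with the \emph{exact} assignment $D_{k+1}=D_k+\mathbb{P}V_k$ and pushing the dressed remainders $(\ee^{-\cA_k}-1)(D_k+V_k)$, together with $\int_0^1\!\dd s\,(\ee^{-s\cA_k}-1)(\mathbb{P}-1)V_k$, wholesale into $V_{k+1}$. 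Since only $\Delta$-diagonal pieces are ever added to $D_{k+1}$, $D_*=D_{k_*}$ is $\Delta$-diagonal exactly, with no need to dress it; the dressed $D_k$ contribution is harmless precisely because it lands in $V_{k+1}$, where it is second order in small quantities and therefore controllable by the inductive halving. You should replace your $D_*=\sum_k\widetilde D_k$ with the exact $D_{k+1}=D_k+\mathbb{P}V_k$ bookkeeping and adjust the recursion for $V_{k+1}$ accordingly; the rest of your outline then goes through in parallel with the paper.
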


Because of \eqref{eq:A<2}, the unitary transformation $U$ (called Schrieffer-Wolff transformation) rotates the Hilbert space slightly, in the sense that locality in the rotated frame is, at zeroth order of $\epsilon$, the same as that in the original frame.
Moreover, the above Theorem implies that in this locally rotated frame, local dynamics is to a high accuracy generated by a dressed Hamiltonian $H_0+D_*$, until the prethermal time \begin{equation}
    t_* \sim 1/\norm{V_*}_{\kappa_*}\sim \exp\mlr{(\Delta/\epsilon)^{\frac{2\alpha-1}{2d-1}}}, 
\end{equation} 
when $V_*$ starts to play a role. The optimal choice of $\alpha$ is then $\alpha\rightarrow 1$. Before $t_*$, $H_0+D_*$ preserves any gapped subspace of $H_0$ that has gap larger than $\Delta$ away from the complement spectrum.
These heuristic arguments are formalized in the following Corollary:\footnote{This is related to Theorem 3.1-3.3 in \cite{abanin2017rigorous}.}

\begin{cor}\label{cor1}
Following Theorem \ref{thm1}, there exist constants $c_U,c_{\mathrm{oper}}$ determined by $\alpha,d,\mu,\mu',\kappa_1$ and the ratio $u/\Delta$, such that following statements hold.
\begin{enumerate}
    \item \textbf{ Locality of $U$:} For any local operator $\mathcal{O}=\OO_S$ supported in a connected set $S$, $U^\dagger \OO U$ is quasi-local and close to $\OO$ in the sense that \begin{equation}\label{eq:cU}
        U^\dagger \OO U = \OO+ \sum_{r=0}^\infty \OO_r,
    \end{equation}
    where $\OO_r$ is supported in $B(S,r)=\{i\in\Lambda: \mathsf{d}(i,S)\le r \}$ [we demand each term in $\OO_r$ is \emph{not} supported in $B(S,r-1)$], and decays rapidly with $r$: \begin{equation}\label{eq:Or<S}
        \norm{\OO_r}\le c_U\epsilon |S|\norm{\OO} \ee^{-\kappa_* r^\alpha}.
    \end{equation}
    
    \item \textbf{Local operator dynamics is approximately generated by $H_0+D_*$ up to an exponentially long time $t_*$:} For any local operator $\mathcal{O}=\OO_S$,
\begin{equation}\label{eq:coper}
    \norm{\ee^{\ii t H}\mathcal{O} \ee^{-\ii t H} - U\ee^{\ii t (H_0+D_*)}U^\dagger \mathcal{O} U\ee^{-\ii t (H_0+D_*)}U^\dagger } \le c_{\mathrm{oper}} \epsilon |S|(|t|+1)^{d+1}2^{-k_*} \norm{\OO}.
\end{equation}

    \item \textbf{Gapped subspaces of $H_0$ are locally preserved up to $t_*$:} Suppose the initial density matrix $\rho_0$ is of the form \begin{equation}
        \rho_0=U \tilde{\rho}_0 U^\dagger,
    \end{equation}
    where $\tilde{\rho}_0$ is supported inside the gapped subspace $\mathcal{H}_<$ of $H_0$ that has gap $\Delta$ to the complement spectrum $\mathcal{H}_>$. Define the reduced density matrix on set $S$ after time evolution \begin{equation}\label{eq:rhoS}
        \rho_S(t) := \mathrm{Tr}_{S^c}\lr{ \ee^{-\ii t H}\rho_0 \ee^{\ii t H}},
    \end{equation} 
    where partial trace is taken on $S^c$, the complement of $S$. Further define another reduced density matrix \begin{equation}\label{eq:rhoS'}
        \rho'_S(t) := \mathrm{Tr}_{S^c} \lr{ U\ee^{-\ii t (H_0+D_*)}U^\dagger \rho_0 U\ee^{-\ii t (H_0+D_*)}U^\dagger } = \mathrm{Tr}_{S^c} \lr{ U\ee^{-\ii t (H_0+D_*)} \tilde{\rho}_0\ee^{-\ii t (H_0+D_*)}U^\dagger },
    \end{equation}
    as reference, which stays in the gapped subspace $\mathcal{H}_<$ up to rotation by $U$. Then $\rho_S(t)$ is close to $\rho'_S(t)$ in trace norm \begin{equation}\label{eq:rho-rho}
        \norm{\rho_S(t)-\rho'_S(t)}_1 \le c_{\mathrm{oper}} \epsilon |S|(|t|+1)^{d+1}2^{-k_*}.
    \end{equation}
\end{enumerate}
\end{cor}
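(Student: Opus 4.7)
The plan is to establish the three statements of Corollary \ref{cor1} in order, leveraging the quantitative bounds on $A_k$, $D_*$, and $V_*$ supplied by Theorem \ref{thm1}, combined with operator-spreading estimates extended to the subexponentially $(\alpha,\kappa)$-local setting.

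For \textbf{Part 1}, since $U=\ee^{A_1}\cdots \ee^{A_{k_*-1}}$, I would build up $U^\dagger \OO U$ by iterative conjugation: set $\OO^{(0)}=\OO$ and $\OO^{(k)}=\ee^{-A_k}\OO^{(k-1)}\ee^{A_k}$ for $k=1,\ldots,k_*-1$. Viewing each conjugation as unit-time evolution generated by the quasi-local anti-Hermitian $A_k$, a Lieb--Robinson-type bound for $(\alpha,\kappa_k)$-local generators (the strengthened version of \eqref{eq:LRB} used throughout this work) controls the growth: the piece of $\OO^{(k)}-\OO^{(k-1)}$ with support reaching distance $r$ from $S$ is suppressed by $\ee^{-\kappa_k r^\alpha}$ with prefactor proportional to $\norm{A_k}_{\kappa_k}|S|\norm{\OO}\lesssim \epsilon\,2^{-k}|S|\norm{\OO}$. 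Regrouping the terms generated across the $k_*-1$ steps into their minimal-support annulus $B(S,r)\setminus B(S,r-1)$, using $\kappa_*\le\kappa_k$, and summing the geometric series $\sum_k 2^{-k}$ produces the decomposition \eqref{eq:cU} with the decay \eqref{eq:Or<S}.

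For \textbf{Part 2}, use the identity $\ee^{\ii Ht}=U\ee^{\ii t(H_0+D_*+V_*)}U^\dagger$ from \eqref{eq:UHU=} and set $\tilde\OO=U^\dagger \OO U$. After factoring out the outer $U$'s (which preserve the operator norm), the task reduces to bounding
\begin{equation}
\norm{\ee^{\ii t(H_0+D_*+V_*)}\tilde\OO \ee^{-\ii t(H_0+D_*+V_*)}-\ee^{\ii t(H_0+D_*)}\tilde\OO \ee^{-\ii t(H_0+D_*)}}\le \int_0^{|t|}\dd s\,\norm{\mlr{V_*,\tilde\OO(s)}},
\end{equation}
where $\tilde\OO(s)=\ee^{\ii s(H_0+D_*)}\tilde\OO \ee^{-\ii s(H_0+D_*)}$ and the inequality follows from Duhamel. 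Expand $\tilde\OO$ with Part 1, then propagate each quasi-local piece under $H_0+D_*$ using the enhanced Lieb--Robinson bound applied to the $(\alpha,\kappa_*)$-local generator $H_0+D_*$ (whose $\kappa_*$-norm is of order unity by \eqref{eq:D*}). Summing commutator contributions from $V_*=\sum_{S'}V_{*,S'}$ with distance suppression produces a lightcone volume factor of $(1+|s|)^d|S|$, while the $\kappa_*$-norm of $V_*$ contributes the decisive factor $\lesssim \epsilon\,2^{-k_*}$. Integrating in $s$ yields the $(|t|+1)^{d+1}$ factor and the bound \eqref{eq:coper}.

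For \textbf{Part 3}, invoke operator--state duality:
\begin{equation}
\norm{\rho_S(t)-\rho'_S(t)}_1=\sup_{B \text{ on }S,\,\norm{B}\le 1}\mathrm{Tr}\left(B\mlr{\rho_S(t)-\rho'_S(t)}\right).
\end{equation}
For such $B$, cyclicity of the trace together with \eqref{eq:rhoS}--\eqref{eq:rhoS'} gives
\begin{equation}
\mathrm{Tr}\left(B\mlr{\rho_S(t)-\rho'_S(t)}\right)=\mathrm{Tr}\left(\mlr{\ee^{\ii Ht}B\ee^{-\ii Ht}-U\ee^{\ii t(H_0+D_*)}U^\dagger B U\ee^{-\ii t(H_0+D_*)}U^\dagger}\rho_0\right),
\end{equation}
and applying the Part 2 estimate to the local operator $B$, together with $\norm{\rho_0}_1=1$, immediately yields \eqref{eq:rho-rho}. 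Note that the hypothesis that $\tilde\rho_0$ lies in $\mathcal{H}_<$ is \emph{not} used for this bound; it merely provides the physical interpretation that $\rho'_S(t)$ is the reduction of a state stably confined in the gapped subspace, so \eqref{eq:rho-rho} asserts that the true dynamics also stays in (the locally rotated image of) $\mathcal{H}_<$ until time $t_*$.

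The main obstacle is technical rather than structural: the generators $A_k$, $D_*$, and $V_*$ have only stretched-exponential spatial tails ($\alpha<1$), so the textbook Lieb--Robinson bound \eqref{eq:LRB} does not apply verbatim. Both the iterated spreading estimate in Part 1 and the commutator bound $\norm{[V_*,\tilde\OO(s)]}$ in Part 2 depend on an $(\alpha,\kappa)$-version of Lieb--Robinson that preserves the subexponential tails through composition of $k_*-1$ conjugations without degrading the geometric $2^{-k}$ norm decay. Once this spreading estimate is secured with prefactors linear in $|S|$ and polynomial in $t$, the remaining arguments are routine Duhamel-plus-duality manipulations.
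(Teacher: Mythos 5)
Your overall structure (Duhamel plus operator--state duality) is the same as the paper's, and Part 3 matches the paper's argument exactly -- including the observation that the gap hypothesis on $\tilde\rho_0$ is not actually invoked in deriving \eqref{eq:rho-rho}. The substantive issues are in Parts 1 and 2.

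For Part 1, your plan of conjugating iteratively by each $\ee^{A_k}$ and tracking the spatial tails step by step is plausible but heavier than what the paper does. The paper instead rewrites $U = \mathcal{T}\exp\bigl[\int_0^1\dd s\, A(s)\bigr]$ with $A(s)=2^k A_k$ on the dyadic interval $2^{-k}\le s<2^{1-k}$; the $2^{-k}$ decay of $\norm{A_k}_{\kappa_k}$ then makes $\norm{A(s)}_{\kappa_*}\le c_A\epsilon$ \emph{uniformly} in $s$. This collapses the whole conjugation into a single unit-time evolution by a quasi-local generator of fixed strength, so a single application of the subexponential Lieb--Robinson estimate (Proposition \ref{prop:reproducing}) gives \eqref{eq:Or<S}, with no need to propagate and recombine tails through $k_*-1$ rounds. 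Your iterative route could in principle be made to work, but one would have to show carefully that compounding $k_*\sim(\Delta/\epsilon)^a$ conjugations does not degrade the exponent $\kappa_*$ beyond what the theorem provides; the paper's single-evolution rewriting sidesteps this bookkeeping entirely.

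In Part 2 there is a genuine gap, caused by your choice of Duhamel direction. You take the form that puts the $\ee^{\ii s(H_0+D_*)}$ propagator inside the commutator and bound $\int_0^{|t|}\dd s\,\norm{[V_*,\tilde\OO(s)]}$ with $\tilde\OO(s)$ evolved under $H_0+D_*$. But $D_*$ is controlled only in the $\kappa_*$-norm, and $\kappa_*=\kappa_1/(1+\ln k_*)\to0$ as $\epsilon/\Delta\to0$, so the reproducibility constant $\lambda\sim\kappa_*^{-(d+1)/\alpha}$ in Proposition \ref{prop:reproducing} -- and hence the Lieb--Robinson velocity for $H_0+D_*$ -- blows up polynomially in $\ln(\Delta/\epsilon)$. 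Your claimed lightcone volume factor $(1+|s|)^d|S|$ would therefore carry a hidden $\epsilon$-dependent prefactor, contradicting the assertion that $c_{\mathrm{oper}}$ depends only on $\alpha,d,\mu,\mu',\kappa_1$ and $u/\Delta$. The paper avoids this by using the \emph{opposite} Duhamel form, which produces $|t|\max_{0\le t'\le t}\norm{[UV_*U^\dagger,\OO(t')]}$ with $\OO(t')$ evolved by the original $H=H_0+V_1$. Because $H$ has fixed locality $\kappa_1$ (and bounded norm in that scale), its Lieb--Robinson velocity is an $\epsilon$-independent $\mathrm{O}(1)$ constant -- the paper invokes the linear-light-cone result of \cite{Kuwahara:2019rlw} for interactions with sufficiently fast decaying tails -- so the $(|t|+1)^d|S|$ volume factor comes out with constants independent of $\epsilon$. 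To patch your argument you would either need to swap the Duhamel order, or give a separate interaction-picture argument showing that adding the $\kappa_*$-local but $\mathrm{O}(\epsilon)$-weak $D_*$ to $H_0$ does not appreciably change $H_0$'s light cone; neither is done in your proposal.
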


Although Theorem \ref{thm1} applies to any $H_0$ that have (at least) a single gap in its spectrum, we do not know of examples where $H_0$ is not built out of commuting operators, and yet such a gap appears in the middle of the spectrum.  So the most physically relevant case is therefore a gap separating the ground states to excited states, i.e. $H_0$ is in a gapped phase. Note that for commuting $H_0$ like the interaction in the Hubbard model, there are an extensive number of gaps, and one would rather use the prethermalization result in \cite{abanin2017rigorous} to get a true exponential prethermalization time.

\subsection{Proof of Theorem \ref{thm1}}

\begin{proof}[Proof of Theorem \ref{thm1}]
We prove by iteration. Suppose at the $k$-th step, we have rotated $H$ by $U_{k-1} = \ee^{A_1}\cdots \ee^{A_{k-1}}$.  We write \begin{equation}
    U_{k-1}^\dagger H U_{k-1} = H_0 + D_k + V_k,
\end{equation}
where $D_k$ is $\Delta$-diagonal.
For example at $k=1$, we have $U_0=1$ and $D_1 = 0$. To go to step $(k+1)$, we further rotate \begin{align}\label{eq:Uk=k+1}
    U_{k}^\dagger H U_{k}=\ee^{-\cA_k}(U_{k-1}^\dagger H U_{k-1}) =\ee^{-\cA_k}(H_0 + D_k + V_k) = H_0 + D_{k+1} + V_{k+1},
\end{align}
where the superoperator $\cA_k$ is defined similar to \eqref{eq:L0}: \begin{equation}
    \cA_k=[A_k,\cdot],
\end{equation}
so that $U_{k}=U_{k-1} \ee^{A_k}$.
We choose $A_k$ using the following Proposition, which is proved in Section \ref{sec:filter}.

\begin{prop}\label{prop:PO<}
For a fixed $H_0$ satisfying the Lieb-Robinson bound \eqref{eq:LRB}, there exist superoperators $\PP, \bA$, defined by
\begin{align} \label{eq:PO}
    \PP \OO &= \int\limits^\infty_{-\infty} \mathrm{d}t\; w(t)\ee^{t\LL_0} \OO , \\ \label{eq:AO}
    \bA \OO &= \ii\int\limits^\infty_{-\infty} \mathrm{d}t\; W(t)\ee^{t\LL_0} \OO ,
\end{align}
with function $w(t),W(t)$ determined by $\Delta$, such that for any operator $\OO$, $\PP\OO$ is $\Delta$-diagonal, 
and \begin{equation}\label{eq:HA=O}
    [H_0, \bA\OO] + \OO = \PP \OO.
\end{equation}
Moreover, let $\kappa^\prime<\kappa\le \kappa_1$ with $\kappa_1$ satisfying \eqref{eq:kap<mu}, and define $\delta\kappa = \kappa-\kappa^\prime$.  Then
\begin{subequations}\begin{align}\label{eq:PO<}
    \norm{\PP \OO}_{\kappa'} &\le c_w \mlr{\max(1,-\ln \delta\kappa)}^{d-1} \norm{\OO}_\kappa, \\
    \norm{\bA \OO}_{\kappa'} &\le \frac{c_W}{\Delta} \mlr{\max(1,-\ln \delta\kappa)}^{d-1} \norm{\OO}_\kappa, \label{eq:AO<}
\end{align}\end{subequations}
where $c_w$ and $c_W$ only depend on $\alpha,d,\kappa_1,\mu,\mu'$ and the ratio $u/\Delta$.
\end{prop} 

We then choose \begin{equation}\label{eq:A=V}
    A_k=\bA V_k,
\end{equation}
to satisfy \begin{equation}\label{eq:HA=V}
    [H_0, A_k] + V_k = \PP V_k.
\end{equation}
As a result, we assign \begin{equation}\label{eq:Dk+1}
    D_{k+1} = D_k+ \PP V_k,
\end{equation} 
that is still $\Delta$-diagonal, and determine $V_{k+1}$ from \eqref{eq:Uk=k+1}: \begin{align}\label{eq:Vk+1}
    V_{k+1} &= \ee^{-\cA_k}(H_0 + D_k + V_k) - H_0 - (D_k+ \PP V_k) \nonumber\\ 
    &= (\ee^{-\cA_k} - 1)H_0 + (1-\PP)V_k + (\ee^{-\cA_k}-1)(D_k+V_k)  \nonumber\\
    &= \int\limits_0^1 \dd s \lr{ \ee^{-s\cA_k}-1}(\PP-1)V_k +(\ee^{-\cA_k}-1)(D_k+V_k),
\end{align}
where we have plugged \eqref{eq:Dk+1} in the first line. In the third line, we have used \begin{equation}\label{eq:ecA-1}
    \ee^{-\cA_k} - 1 = -\int\limits_0^1 \dd s \ee^{-s\cA_k} \cA_k,
\end{equation}
and \eqref{eq:HA=V}.

Now we calculate the local norms using the following Proposition, which generalizes Lemma 4.1 of \cite{abanin2017rigorous}.  The proof of these results is quite tedious and is postponed to Section \ref{sec:LO-O<}.

\begin{prop}\label{prop:AO-O}
If \begin{equation}\label{eq:cA}
    \cA=[A,\cdot],
\end{equation}
with operator $A$ satisfying \begin{equation}\label{eq:A<C}
    \norm{A}_\kappa \le \lr{\delta\kappa}^{\frac{3d+1}{\alpha}},
\end{equation} 
with $\delta\kappa<\kappa\le \kappa_1$,
then
\begin{equation}\label{eq:AO-O}
    \norm{\ee^{-\mathcal{A}} \OO -\OO}_{\kappa'} \le c_- (\delta \kappa)^{-\frac{2d-1}{\alpha}} \norm{A}_{\kappa} \norm{\OO}_\kappa,
\end{equation}
where $\kappa'=\kappa-\delta\kappa$, and $c_-$ depends on $\alpha,d$ and $\kappa_1$.
\end{prop}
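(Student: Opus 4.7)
The plan is to Taylor-expand $\mathrm{e}^{-\cA}\OO-\OO=\sum_{n\ge 1}\frac{(-1)^n}{n!}\cA^n\OO$, bound each nested commutator in the $\kappa'$-norm, and exploit the hypothesis $\norm{A}_\kappa\le(\delta\kappa)^{(3d+1)/\alpha}$---which is considerably stronger than what the target scaling $(\delta\kappa)^{-(2d-1)/\alpha}\norm{A}_\kappa$ demands---to show that the series is dominated by its $n=1$ term. The hypothesis leaves a margin of $(\delta\kappa)^{(d+2)/\alpha}$ per commutator, and this margin is what makes the higher-order tail manageable.

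The core technical lemma is a single-commutator estimate $\norm{[A,B]}_{\kappa-\eta}\le C\eta^{-(2d-1)/\alpha}\norm{A}_\kappa\norm{B}_\kappa$ for any $0<\eta<\kappa$. Using the optimal decompositions, I would write $[A,B]=\sum_{S_1\cap S_2\ne\emptyset}[A_{S_1},B_{S_2}]$ and reassign each nonzero summand to the smallest fixed-orientation simplex $S\supseteq S_1\cup S_2$. Subadditivity $x^\alpha+y^\alpha\ge(x+y)^\alpha$ (valid for $\alpha<1$) applied to $\mathrm{diam}(S)\le\mathrm{diam}(S_1)+\mathrm{diam}(S_2)+\mathrm{O}(1)$ lets me redistribute the weight $\mathrm{e}^{(\kappa-\eta)\mathrm{diam}(S)^\alpha}$ as $\mathrm{e}^{\kappa\mathrm{diam}(S_1)^\alpha}\mathrm{e}^{\kappa\mathrm{diam}(S_2)^\alpha}\mathrm{e}^{-\eta\mathrm{diam}(S)^\alpha}$, absorbing the $A$ and $B$ factors into their respective $\kappa$-norms. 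Counting, for fixed $i\in S$, the admissible $(S_1,S_2)$ pairs that share an overlap site and fit inside a simplex of diameter $\sim R$ containing $i$, via the simplicial structure and a variant of Proposition \ref{prop:simp}, produces a polynomial in $R$; integrating against the residual $\mathrm{e}^{-\eta R^\alpha}$ then gives the announced exponent $(2d-1)/\alpha$.

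Iterating this bound $n$ times, with $\delta\kappa$ partitioned into $n$ equal intervals $\delta\kappa/n$, yields $\norm{\cA^n\OO}_{\kappa'}\le[Cn^{(2d-1)/\alpha}(\delta\kappa)^{-(2d-1)/\alpha}]^n\norm{A}_\kappa^n\norm{\OO}_\kappa$. Dividing by $n!$ via Stirling reduces the $n$-th Taylor coefficient to a base raised to the $n$-th power; plugging in the hypothesis makes this base geometrically small in the relevant range of $n$, so the sum is dominated by its $n=1$ term and the result follows with $c_-$ absorbing numerical constants. The hardest step is precisely this summation: the combinatorial factor $n^{(2d-1)n/\alpha}$ from equal splitting competes against $1/n!$, and only the strong hypothesis with its margin $(\delta\kappa)^{(d+2)/\alpha}$ keeps the tail negligible---any weaker assumption on $\norm{A}_\kappa$ would leave an unbounded contribution. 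A secondary subtlety is the single-commutator counting, where the fixed-orientation restriction of Proposition \ref{prop:simp} is what selects the specific exponent $(2d-1)/\alpha$ rather than a larger one.
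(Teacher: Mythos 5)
Your proposal hinges on Taylor-expanding $\ee^{-\cA}\OO-\OO$ and iterating a single-commutator estimate of the form $\norm{[A,B]}_{\kappa-\eta}\le C\eta^{-(2d-1)/\alpha}\norm{A}_\kappa\norm{B}_\kappa$ with the budget $\delta\kappa$ split evenly into $n$ pieces $\eta=\delta\kappa/n$. This produces the $n$-th Taylor coefficient bound
\begin{equation*}
\frac{1}{n!}\left[C\,n^{\frac{2d-1}{\alpha}}(\delta\kappa)^{-\frac{2d-1}{\alpha}}\right]^n\norm{A}_\kappa^n\norm{\OO}_\kappa
\;\le\;\frac{1}{n!}\left[C\,n^{\frac{2d-1}{\alpha}}(\delta\kappa)^{\frac{d+2}{\alpha}}\right]^n\norm{\OO}_\kappa ,
\end{equation*}
after inserting the hypothesis \eqref{eq:A<C}. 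Your claim that Stirling reduces this to ``a base raised to the $n$-th power'' is not correct: $n!\sim n^n e^{-n}$ gives
\begin{equation*}
\left[Ce(\delta\kappa)^{\frac{d+2}{\alpha}}\right]^n\cdot n^{\,n\left(\frac{2d-1}{\alpha}-1\right)} ,
\end{equation*}
and since $\alpha<1$ and $d\ge1$, the exponent $\tfrac{2d-1}{\alpha}-1>0$, so $n^{n(\cdot)}$ grows \emph{super}-exponentially. The geometric factor $[Ce(\delta\kappa)^{(d+2)/\alpha}]^n$ decays only exponentially; it postpones the blow-up but cannot prevent it, so the Taylor series diverges and the claimed bound on $\norm{\ee^{-\cA}\OO-\OO}_{\kappa'}$ does not follow. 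No strengthening of \eqref{eq:A<C} of power-law form can repair this --- the problem is an $n$-dependent growth rate, not a shortage of $\delta\kappa$-margin. Redistributing the split (geometric, polynomial) only makes the product $\prod_k\eta_k^{-(2d-1)/\alpha}$ worse. This is exactly the divergence the paper flags at the start of Section \ref{sec:LO-O<}: for interactions with only stretched-exponential localization, a raw combinatorial expansion in $t$ does not converge, because the naive bound counts $\sim|S|^n$ ways to hit a growing support at each order.

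The paper's proof instead sorts the sequences $\LL_{X_n}\cdots\LL_{X_1}\OO$ into equivalence classes labelled by their \emph{irreducible skeleton} (the minimal chain of factors actually needed to reach the faces of the circumscribed simplex), and resums all remaining ``bulk'' factors into unitaries of norm one via \eqref{eq:skeleton}. The surviving sum is indexed by skeleton length $\ell$, which is tied to geometric growth of the support rather than to the Taylor order $n$, so the problematic $n^{n}$ combinatorics never appears. The hypothesis \eqref{eq:A<C} is then used (in Case 2B, via Lemma \ref{lem:K<K1} and the reproducing-kernel Lieb--Robinson bound of Proposition \ref{prop:reproducing}) to absorb volume and surface factors $|S|,|\partial S|$ into powers of $(\delta\kappa)^{-1/\alpha}$, not to tame a Taylor tail. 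The simplex geometry you invoke (Proposition \ref{prop:simp}) is genuinely needed, but it enters through the distance-additivity of skeleton branches in \eqref{eq:>s-sn}, not through a two-factor overlap count. If you want to salvage the Taylor route, you would need to exhibit cancellations or a resummation within $\cA^n\OO$ before bounding --- which is precisely what the equivalence-class decomposition accomplishes.
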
 

Define \begin{equation}
    a_k = \norm{A_k}_{\kappa_k}, \quad d_k = \norm{D_k}_{\kappa_k},\quad v_k = \norm{V_k}_{\kappa_k},\quad \tilde{v}_k = c_w\mlr{\max(1,-\ln \delta\kappa_{k+1})}^{d-1} v_k,
\end{equation}
where the decreasing sequence $\kappa_k$ is defined in \eqref{eq:kapk}, and \begin{align}
    &\delta\kappa_k := \kappa_{k-1}-\kappa_{k} = \frac{\kappa_1\ln[1+1/(k-1)]}{(1+\ln k)[1+\ln(k-1)]} \ge \frac{\kappa'_1}{k\ln^2k}
    \end{align}implies that \begin{align}
    &\max(1,-\ln \delta\kappa_k) \le c_\delta \ln k, \label{eq:kapk>}
\end{align}
for some constants $\kappa'_1$ and $c_\delta$ determined by $\kappa_1$. This choice \eqref{eq:kapk} of $\kappa_k$  makes $\delta\kappa_k\sim 1/k\ln^2 k$ decay about as slowly as possible while keeping $\kappa_k>0$. 
Using \eqref{eq:PO<},\eqref{eq:AO<} and \eqref{eq:AO-O}, the iterative definitions \eqref{eq:A=V},\eqref{eq:Dk+1} and \eqref{eq:Vk+1} lead to iterations of the local bounds \begin{subequations}\begin{align}\label{eq:a<v}
    a_{k-1} &\le \frac{c_W}{c_w} \frac{\tilde{v}_{k-1}}{\Delta}, \\
    d_{k} &\le d_{k-1} + \tilde{v}_{k-1}, \label{eq:dk} \\
    v_{k} &\le c_-(\delta\kappa_k)^{-\frac{2d-1}{\alpha}} a_{k-1} \mlr{\frac{1}{2}\lr{v_{k-1}+\tilde{v}_{k-1}}+ d_{k-1}+v_{k-1}} \le c_-(\delta\kappa_k)^{-\frac{2d-1}{\alpha}} a_{k-1}(d_{k-1}+ \tilde{c}_w \tilde{v}_{k-1}), \label{eq:vk}
\end{align}\end{subequations}
where we have shifted $k\rightarrow k-1$, and used $\tilde{v}_{k-1} \ge c_w v_{k-1}$ and $\tilde{c}_w=(c_w+3)/(2c_w)$.
Plugging \eqref{eq:a<v} into \eqref{eq:vk} and combining constants using \eqref{eq:kapk>}, we get the iteration for $\tilde{v}_k$ \begin{equation}\label{eq:tvk}
    \tilde{v}_k \le \frac{\tilde{c}_v}{\Delta} (k\ln^2 k)^{\frac{2d-1}{\alpha}} (\ln k)^{d-1} \tilde{v}_{k-1}(d_{k-1}+ \tilde{c}_w\tilde{v}_{k-1}) \le \frac{c_v}{\Delta} k^{\frac{2d-1}{2\alpha-1}} \tilde{v}_{k-1}(d_{k-1}+ \tilde{c}_w\tilde{v}_{k-1}),
\end{equation}
where we have replaced a power function of $\ln k$ by a power function of $k$: $k^{\frac{2d-1}{2\alpha-1}-\frac{2d-1}{\alpha}}$, with the price of adjusting the prefactor. \eqref{eq:dk} and \eqref{eq:tvk} comprises the closed iteration for $d_k$ and $\tilde{v}_k$, assuming the condition \eqref{eq:A<C} which transforms to \begin{equation}\label{eq:v<k}
    \tilde{v}_{k-1}\le c_a(k\ln^2 k)^{\frac{3d+1}{\alpha}} \Delta,
\end{equation}
with constant $c_a$.  We will later verify this condition can be achieved.

For sufficiently small $\epsilon/\Delta$, $d_k$ keeps at the order of $\epsilon=v_1$, and \eqref{eq:tvk} leads to $\tilde{v}_{k}\sim d_k \tilde{v}_{k-1}/\Delta \sim v_1^{k}/\Delta^{k-1}$. The iteration continues up to $k_{*}\sim (\Delta/v_1)^{\frac{2\alpha-1}{2d-1}}$ when the power of $k$ in \eqref{eq:tvk} dominates, and the iteration terminates. In this process \eqref{eq:v<k} is guaranteed to hold, since $\tilde{v}_k$ decays exponentially while the right hand side of \eqref{eq:v<k} decays as a power law. To make the above arguments rigorous, we assume \eqref{eq:v<k} and \begin{equation} \label{eq:v<2}
    \tilde{v}_{k} \le \tilde{v}_1/2^{k-1},
\end{equation}
holds for all steps $k=1,\cdots,k'$ before $k'+1$. Then \eqref{eq:dk} yields \begin{equation}
    d_{k'} \le c_w \sum_{k=1}^{k'-1} \tilde{v}_{k} \le c_w \tilde{v}_1 \frac{1}{1-1/2} = 2c_w \tilde{v}_1.
\end{equation}
If \eqref{eq:v<k} also holds at $k=k'+1$, then \eqref{eq:tvk} yields \begin{equation}
    \tilde{v}_{k'+1} \le \frac{c_v}{\Delta} (k'+1)^{\frac{2d-1}{2\alpha-1}} \tilde{v}_{k'}(2c_w + \tilde{c}_w) \tilde{v}_1.
\end{equation}
The right hand side is bounded by $\tilde{v}_{k'}/2$, as long as \begin{equation}
    \frac{c_v}{\Delta} (k'+1)^{\frac{2d-1}{2\alpha-1}}(2c_w + \tilde{c}_w) \tilde{v}_1 \le \frac{1}{2}
    \end{equation}
    which holds so long as \begin{equation}
    k'+1\le k_* := \left\lfloor \lr{\frac{\Delta}{c_v(2c_w + \tilde{c}_w) \tilde{v}_1}}^{\frac{2\alpha-1}{2d-1}} \right\rfloor \equiv \left\lfloor c_*\lr{\frac{\Delta}{v_1} }^{\frac{2\alpha-1}{2d-1}}  \right\rfloor,
\end{equation}
where $c_*$ is determined by $c_v, c_w$ and $\kappa_1$. Thus \eqref{eq:v<2} also holds for the next step $k'+1$, as long as $k'+1\le k_*$ and \eqref{eq:v<k} holds for $k=k'+1$. 

Finally, we verify \eqref{eq:v<k} using our assumption \eqref{eq:V<Delta} with a sufficiently small constant $c_V$: From \eqref{eq:v<2}, it suffices to prove \begin{equation}
    \tilde{v}_1 2^{2-k'} \le c_a (k'\ln^2 k')^{\frac{3d+1}{\alpha}} \Delta,
\end{equation}
for all $k'\ge 2$,
which indeed holds given \eqref{eq:V<Delta} with \begin{equation}
    c_V= \frac{c_a}{c_w \mlr{\max(1,-\ln \delta\kappa_2)}^{d-1}} \min_{k'\ge 2} 2^{k'-2}(k'\ln^2 k')^{\frac{3d+1}{\alpha}} >0. \label{eq:cVbound}
\end{equation}

To summarize, if \eqref{eq:V<Delta} and \eqref{eq:cVbound} hold, \eqref{eq:v<2} holds iteratively up to $k=k_*$, which further leads to \eqref{eq:A<2} by \eqref{eq:a<v}. At the final step, define $D_*=D_{k_*}$ that is $\Delta$-diagonal and $V_*=V_{k_*}$, which then satisfy \eqref{eq:UHU=} and \eqref{eq:D*}.
\end{proof}

\subsection{Proof of Corollary \ref{cor1}}
\begin{proof}[Proof of Corollary \ref{cor1}]
We prove the three statements one by one, where latter proof relies on previous results.

\textbf{1.} Rewrite \begin{equation}
    U = \mathcal{T} \mathrm{exp}\mlr{ \int\limits^1_0 \dd s A(s)},
\end{equation}
where $\mathcal{T}$ is time-ordering, and \begin{equation}
    A(s) = \left\{ \begin{aligned}
        &0, & &s< 2^{1-k_*} \\
        &2^k A_k, & 2^{-k} \le \, & s < 2^{1-k}, \quad (k=1,\cdots, k_*-1)
    \end{aligned} \right.
\end{equation}
Then \eqref{eq:A<2} leads to \begin{equation}
    \norm{A(s)}_{\kappa_*} \le c_A \epsilon, \quad \forall s.
\end{equation}
$U^\dagger \OO U$ is the Heisenberg evolution under the time-dependent Hamiltonian $\ii A(s)$. However, we will use notation $U=\ee^{A}$ for simplicity, with the time-ordering being implicit.
To determine the decomposition \eqref{eq:cU}, we first define \begin{equation}
    \OO_0 := \ee^{-A|_S} \OO \ee^{A|_S} -\OO = -\int^1_0 \dd s \ee^{-sA|_S} [A|_S,\OO] \ee^{sA|_S},
\end{equation}
which is similar to \eqref{eq:ecA-1}. Here $A|_S := \sum_{S'\subseteq S} A_{S'}$, with the optimal decomposition $A_{S'}$ that realizes $\norm{A}_{\kappa_*}$. $\OO_0$ is indeed bounded by \eqref{eq:Or<S}: \begin{align}\label{eq:AOA-O<}
    \norm{\OO_0} &\le \int\limits^1_0 \dd s \norm{ \ee^{-sA|_S} [A|_S,\OO] \ee^{sA|_S}}= \int\limits^1_0 \dd s  \norm{[A|_S,\OO] } \le 2\norm{\OO}\norm{A|_S} \nonumber\\
    &\le 2 \norm{\OO} \sum_{j\in S} \sum_{S'\ni j} \norm{A_{S'}} \le 2\norm{\OO}|S| \norm{A}_{\kappa_*} \le 2 c_A \epsilon |S|\norm{\OO} ,
\end{align}
where we used $A$ is anti-Hermitian in the second line. In the second line of \eqref{eq:AOA-O<}, we have used the fact that each term contained in $A|_S$ must have one site $j\in S$ as its support, so that we bound by first summing over $j\in S$, and then over $S'$ that contains $j$, along with $1 \le \mathrm{e}^{\kappa_* \mathrm{diam}(S)^\alpha}$ to invoke the $\kappa_*$-norm for the optimal decomposition of $A$.
Although many individual factors $S'$ could be badly overestimated in this step for finite $\kappa_*$, the $|S|$ factor in \eqref{eq:AOA-O<} is parametrically optimal, since it is the number of small-region factors $S'$ with $|S'|=\mathrm{O}(1)$ that are contained in $S$.

It remains to bound $\OO_r$ in \eqref{eq:cU} with $r\ge 1$, using that interaction strength decays as $\norm{A}_{S'} \lesssim \epsilon\, \ee^{-\kappa_* (\mathrm{diam}S')^\alpha}$. Although such decay is too slow to have a Lieb-Robinson bound like \eqref{eq:LRB}, in Proposition \ref{prop:reproducing} we prove a bound $\sim \epsilon \ee^{-\kappa_* r^\alpha}$ in \eqref{eq:C1+} for the time-evolved commutator of two operators separated by distance $r$. Choosing $Z=S$ and $Z'=\Lambda \setminus B(S,r)$ as defined in Proposition \ref{prop:reproducing}, we may write (by the triangle inequality) \begin{align}
   \left\lVert \mathcal{O}_r \right\rVert \le  \left\lVert \sum_{k=r}^\infty \mathcal{O}_k\right\rVert + \left\lVert \sum_{k=r+1}^\infty \mathcal{O}_k\right\rVert  
\end{align}
and bound, using Eq. (12) of \cite{hastings2010locality}:
\begin{align}
    \left\lVert \sum_{k=r}^\infty \mathcal{O}_k\right\rVert \le \int\limits_{C_r} \mathrm{d}U \; \lVert [U,\mathcal{O}] \rVert \le \sup_U \lVert [U,\mathcal{O}] \rVert
\end{align}
where the set $C_r$ is over unitaries acting non-trivially on $Z^\prime$ and $\mathrm{d}U$ is the Haar measure.  Proposition \ref{prop:reproducing} bounds the right hand side and leads to (\ref{eq:coper}). 


\textbf{2.} Define superoperators \begin{equation}
    \LL=\ii[H,\cdot], \quad \tilde{\LL}=\ii[U^\dagger HU,\cdot],\quad \tilde{\LL}'=\ii[H_0+D_*,\cdot], \quad \mathcal{U} = U^\dagger \cdot U.
\end{equation}
The left hand side of \eqref{eq:coper} is then \begin{equation}\label{eq:L-LUO}
    \norm{\lr{\ee^{t\LL}-\mathcal{U}^\dagger \ee^{t\tilde{\LL}'}\mathcal{U}}\OO } = \norm{\lr{\ee^{t\tilde{\LL}}-\ee^{t\tilde{\LL}'}}\mathcal{U}\OO },
\end{equation}
where we have used the fact that $\mathcal{U}^\dagger$ does not change the operator norm. Using the Duhamel identity \begin{equation}
    \ee^{t\tilde{\LL}}-\ee^{t\tilde{\LL}'} = \int^t_0 \dd t' \ee^{(t-t') \tilde{\LL}'}(\tilde{\LL}-\tilde{\LL}') \ee^{t' \tilde{\LL}},
\end{equation}
\eqref{eq:L-LUO} is further bounded by \begin{equation}\label{eq:L-LUO<}
    \norm{\lr{\ee^{t\tilde{\LL}}-\ee^{t\tilde{\LL}'}}\mathcal{U}\OO } \le \int^t_0 \dd t' \norm{(\tilde{\LL}-\tilde{\LL}') \ee^{t' \tilde{\LL}} \mathcal{U}\OO} \le |t| \max_{0\le t'\le t} \norm{[U V_* U^\dagger, \OO(t')]},
\end{equation}
where $\OO(t')=\ee^{t'\LL}\OO$, and we have used \eqref{eq:UHU=}. In the commutator, the first operator $U V_* U^\dagger$ is extensive, yet should be close to $V_*$ according to statement \textbf{1} of this Corollary. Thus the local norm of $U V_* U^\dagger$ is still exponentially small: \begin{equation}\label{eq:UVU<}
    \norm{U V_* U^\dagger}_{\kappa'_*} \lesssim \epsilon 2^{-k_*},
\end{equation}
for some $0<\kappa'_*<\kappa_*$. See Proposition \ref{prop:AO-O} for details. The second operator $\OO(t')$ in the commutator in \eqref{eq:L-LUO<} is evolved by the Hamiltonian $H=H_0+V_1$, where interactions $H_{S'}$ decay at least sub-exponentially with $\mathrm{diam}S'$ according to \eqref{eq:LRB} and \eqref{eq:V<Delta}. Although we will prove an algebraic light cone $t\sim r^\alpha$ for such Hamiltonians in Proposition \ref{prop:reproducing}, the light cone is actually linear. After all, sufficiently fast decaying power-law interactions is sufficient to yield linear light cones \cite{Kuwahara:2019rlw}. Thus $\OO(t')$ is mostly supported in a region $\tilde{S}$ of linear size $\mathrm{diam}S + v_{\delta,t'} |t'|$, except for a small part of operator norm $\delta\ll 1$, according to Eq.(5) in \cite{Kuwahara:2019rlw}. Their last equation of section II also ensures that one can safely ignore the $\delta$-tail of $\OO(t')$ acting outside of $\tilde{S}$, because the velocity $v_{\delta,t'}$ grows very mildly when decreasing $\delta\rightarrow0$, if the power-law interaction decays sufficiently fast. When taking commutator with $U V_* U^\dagger$, only terms in $U V_* U^\dagger$ that are within this effective support $\tilde{S}$ will contribute. This effect is bounded by a volume factor $|\tilde{S}| \lesssim |S| (|t'|+1)^d$. Combining the factor $|t|$ in \eqref{eq:L-LUO<}, the local norm \eqref{eq:UVU<}, and $\norm{\OO(t')}=\norm{\OO}$, there must exist some constant $c_{\mathrm{oper}}$ such that \eqref{eq:coper} holds. 

\textbf{3.} The trace norm is related to the operator norm by \begin{align}
    \norm{\rho_S(t)-\rho'_S(t)}_1 &= \max_{\OO=\OO_S:\norm{\OO}\le 1} \mathrm{Tr}\mlr{\OO\lr{\rho_S(t)-\rho'_S(t)}} \nonumber\\ 
    &= \max_{\OO=\OO_S:\norm{\OO}\le 1} \mathrm{Tr}\mlr{\rho_0 \lr{\ee^{\ii t H}\mathcal{O} \ee^{-\ii t H} - U\ee^{\ii t (H_0+D_*)}U^\dagger \mathcal{O} U\ee^{-\ii t (H_0+D_*)}U^\dagger }} \nonumber\\
    &\le \max_{\OO=\OO_S:\norm{\OO}\le 1} \norm{\ee^{\ii t H}\mathcal{O} \ee^{-\ii t H} - U\ee^{\ii t (H_0+D_*)}U^\dagger \mathcal{O} U\ee^{-\ii t (H_0+D_*)}U^\dagger } \le c_{\mathrm{oper}} \epsilon |S|(|t|+1)^{d+1}2^{-k_*},
\end{align}
where $\OO$ is an arbitrary operator supported in $S$. The second line comes from the definitions \eqref{eq:rhoS},\eqref{eq:rhoS'} and rearranging orders in the trace. The last step follows from the second result of this corollary, \eqref{eq:coper}.
\end{proof}

\section{Filter function and its locality when acting on operators}\label{sec:filter}
This section contains the proof of Proposition \ref{prop:PO<}, which requires the existence of a function $w(t)$ (as sketched in the main text) that can be used to build a projector $\mathbb{P}$ onto $\Delta$-diagonal operators.

\subsection{Defining the filter function}
In the proof above we frequently want to project out, for some operator, ``off-resonant'' matrix elements that connects pairs of eigenstates of $H_0$ that have energy difference $|E-E'|>\Delta$. This can be achieved as follows. Define the $\Delta$-dependent filter function $w\in L^1(\mathbb{R})$ \cite{hastings2010quasi,bachmann2012automorphic}
\begin{align}
    w(t) := c_\Delta \Delta \prod_{n=1}^\infty \lr{\frac{\sin a_n t}{a_n t}}^2,\quad \mathrm{where}\quad a_1=c_1\Delta, \quad a_n=\frac{a_1}{n\ln^2n}, \quad\forall n\ge 2.
\end{align}
Here $c_1\approx 0.161$ is chosen such that \begin{equation}
    \sum_{n=1}^\infty a_n=\frac{\Delta}{2},
\end{equation} and $c_\Delta\in (1/(2\pi), 1/\pi)$ is a pure number chosen so that the function is normalized:\begin{equation}
    \int_{-\infty}^\infty \mathrm{d}t \; w(t) = 1. \label{eq:intwt1}
\end{equation}
We define a similarly related odd function $W(t)$ by \begin{equation}
    W(t)=-W(-t)=\int^\infty_t \dd s \; w(s), \quad (t>0).
\end{equation} These two functions have useful properties summarized in the following Proposition, which is proved in \cite{hastings2010quasi,bachmann2012automorphic}. 

\begin{prop}
$w(t)$ satisfies the following:
\begin{enumerate}
    \item It is even in $t$, with bound $0\le w(t) \le c_\Delta \Delta$.
    \item The Fourier transform is compact \begin{equation}
        \hat{w}(\omega)= 0, \quad\forall |\omega|\ge \Delta,
    \end{equation}
    and bounded $|\hat{w}(\omega)|\le \int w(t) \dd t=1=\hat{w}(0)$.
    \item subexponential decay: \begin{align}\label{eq:w<}
        w(t) \le 2(\ee \Delta)^2t\ee^{-\frac{2}{7}\frac{\Delta t}{\ln^2\Delta t}}, \quad \text{ if } t\ge \ee^{1/\sqrt{2}}\Delta^{-1}.
    \end{align}
\end{enumerate}
$W(t)$ satisfies 
\begin{enumerate}
    \item $W(t)$ is bounded as \begin{equation}
    |W(t)| \le W(0+) = -W(0-) = 1/2.
\end{equation}
\item For any bounded function $f(t)$, \begin{equation}\label{eq:Wf}
    \int^\infty_{-\infty} W(t) \ddt{f(t)} \dd t = -f(0)+ \int^\infty_{-\infty} w(t) f(t)\dd t.
\end{equation}
\item $W(t)$ has weakly subexponential decay: \begin{align}\label{eq:W<}
        |W(t)| \le 2\ee^2 \int^\infty_{\Delta|t|} s\ee^{-\frac{2}{7}\frac{s}{\ln^2 s}}\dd s, \quad \text{ if } |t|\ge \ee^{1/\sqrt{2}}\Delta^{-1}.
    \end{align}
\end{enumerate} 
\end{prop}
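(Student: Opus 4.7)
\emph{Proof proposal.} The plan is to verify the two clusters of properties separately, leveraging the product structure of $w(t)$ throughout and bootstrapping from $w$ to $W$.

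For $w(t)$, items (1) and (2) are soft. Evenness and the uniform bound $w(t)\le c_\Delta \Delta$ are immediate from the product representation: each factor $(\sin(a_n t)/(a_n t))^2$ is even, nonnegative, and bounded by $1$. For the Fourier support, I would apply the convolution theorem: the Fourier transform of $(\sin(a_n t)/(a_n t))^2$ is a tent function supported in $[-2a_n,2a_n]$, so $\hat w$ is (up to a constant) the infinite convolution of these tents, whose support lies in $[-2\sum_n a_n,\,2\sum_n a_n]=[-\Delta,\Delta]$ by the normalization $\sum_n a_n=\Delta/2$. The constant $c_\Delta$ is then pinned by $\hat w(0)=\int w(t)\dd t=1$, and $|\hat w(\omega)|\le \hat w(0)$ is immediate since $w\ge 0$.

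Item (3), the subexponential tail of $w$, is the technical core. For any cutoff $N\ge 1$, I would bound the infinite product by $1$ on the tail $n>N$ and by $(a_n t)^{-2}$ on the retained factors, giving $w(t)\le c_\Delta\Delta\prod_{n=1}^N(a_n t)^{-2}$. With $a_n\sim c_1\Delta/(n\ln^2 n)$, the log-bound reads $-\log w(t)\ge 2N\log(c_1\Delta t)-2\sum_{n=1}^N\log(n\ln^2 n)+\mathrm{const}$, and Stirling gives $\sum_{n=1}^N\log(n\ln^2 n)\approx N\log N-N+2N\log\log N$. Optimizing in $N$ by setting $N_*\log^2 N_*\sim c_1\Delta t/\ee$ yields $N_*\sim \Delta t/\log^2(\Delta t)$ and a resulting exponent of order $-\Delta t/\log^2(\Delta t)$; tightening the Stirling error and keeping track of the polynomial prefactor from the $n=1$ factor (which alone contributes $(a_1 t)^{-2}=\mathrm{O}(1/t^2)$) should reproduce the explicit constants $2/7$ and the polynomial $2(\ee\Delta)^2 t$, along with the threshold $t\ge \ee^{1/\sqrt{2}}\Delta^{-1}$ below which the bound becomes vacuous.

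For $W(t)$, item (1) follows from $W(0+)=\int_0^\infty w(s)\dd s=1/2$ (by normalization and evenness of $w$) together with monotonicity of $W$ on $(0,\infty)$ forced by $w\ge 0$; the $t<0$ side is fixed by the imposed oddness. Item (2) is an integration by parts: classically $W'(t)=-w(t)$ for $t\ne 0$, but $W$ jumps by $-1$ at the origin, so distributionally $W'(t)=-w(t)-\delta(t)$, and $\int W\dot f\,\dd t=-\int W' f\,\dd t=\int w(t)f(t)\,\dd t-f(0)$ is exactly (\ref{eq:Wf}). Item (3) is immediate from $|W(t)|=\int_{|t|}^\infty w(s)\dd s$ combined with the decay bound just established for $w$ and the substitution $s\mapsto s/\Delta$.

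The main obstacle is item (3) for $w$: locating the optimal truncation $N_*$ and controlling Stirling remainders accurately enough to extract the specific constants in the stated bound. Everything else is a direct consequence of the product form, the convolution theorem, or elementary manipulations linking $W$ to $w$.
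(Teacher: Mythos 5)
The paper does not actually prove this proposition itself: it states the construction and then defers entirely to \cite{hastings2010quasi,bachmann2012automorphic} for the proof. So there is no internal paper proof to compare against; what you have written is a reconstruction of the argument in those references, and it follows the same strategy.

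Items 1 and 2 for $w$ and items 1--3 for $W$ are handled correctly and essentially as in the literature: the product form gives evenness and the pointwise bound; the tent-convolution argument (together with $\sum_n a_n = \Delta/2$) gives the compact Fourier support once one justifies the $N\to\infty$ limit with dominated convergence (the $n=1$ factor already gives an $L^1$ majorant); positivity of $w$ gives $|\hat w|\le \hat w(0)$; $W(0+)=\frac12$ follows from normalization plus evenness, and monotonicity on $(0,\infty)$ plus oddness give the global bound; item 3 for $W$ is immediate from item 3 for $w$. One sign slip worth flagging: you write that $W$ jumps by $-1$ at the origin so $W'=-w-\delta$ distributionally, but $W(0+)-W(0-)=+1$, so the correct distributional identity is $W'=-w+\delta$. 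Your final formula $\int W\dot f = \int wf - f(0)$ is nonetheless right, so this is just an internal inconsistency in the intermediate step, not in the conclusion; it is safer to derive it by splitting $\int_{-\infty}^0 + \int_0^\infty$ and integrating by parts on each half, which yields the boundary terms $W(0-)f(0)$ and $-W(0+)f(0)$ directly.

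The one genuine gap is item 3 for $w$. Your truncation $w(t)\le c_\Delta\Delta\prod_{n=1}^N(a_nt)^{-2}$ plus Stirling plus optimization over $N$ is indeed the right skeleton, and it does produce an exponent of order $-\Delta t/\ln^2(\Delta t)$, but you do not actually carry the optimization through to extract the stated constant $2/7$, the $\ln^2(\Delta t)$ (as opposed to $\ln^2(c_1\Delta t)$) in the denominator, the threshold $t\ge\ee^{1/\sqrt2}\Delta^{-1}$, or the prefactor $2(\ee\Delta)^2 t$; your remark that the $n=1$ factor ``alone contributes $(a_1t)^{-2}$'' points the wrong way for the prefactor, since the stated bound grows like $t^{+1}$ (that linear-in-$t$ growth comes out of rounding $N$ to an integer and from Stirling remainders, not from the $n=1$ factor). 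You acknowledge this as the main obstacle, which is the honest thing to say, but as written the explicit form of \eqref{eq:w<} is asserted rather than derived; completing it requires the careful bookkeeping that is done in the cited references.
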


As a remark, one may wonder if the nearly exponential decay of $w(t)\sim \exp(-t/\ln^2 t)$ can be improved, for example to true exponential decay, while preserving the compact Fourier transform property. This is forbidden by a well-known math result:  
\begin{prop}\label{prop:wt}
   If $w(t)$ satisfies $|w(t)|\le C \mathrm{e}^{-c |t|}$ for all $t\in\mathbb{R}$, then its Fourier transform $\hat{w}(\omega)$ cannot have compact support unless $w(t)=0$.
\end{prop}
\begin{proof}
    The $n$-th derivative of its Fourier transform is bounded by \begin{align}
    |\hat{w}^{(n)}(\omega)| = \left| \int^{\infty}_{-\infty} \mathrm{d} t \, w(t) t^n \mathrm{e}^{\mathrm{i} \omega t} \right| \le C \int^{\infty}_{-\infty} \mathrm{d} t \, t^n \mathrm{e}^{-c|t|} = 2C c^{-n} n!.
\end{align}
This implies the Taylor series of $\hat{w}(\omega)$ at any $\omega\in \mathbb{R}$ has radius of convergence at least $c$, so it is real analytic over all $\mathbb{R}$ and cannot have compact support unless it is identically $0$.
\end{proof}

Since $w(t)$ is in the integral \eqref{eq:PO}, the operator $\mathbb{P}\mathcal{O}$ decays slower than exponential in its support diameter. This makes our choice of $\kappa$ norm \eqref{eq:norm} with $\alpha<1$ almost optimal. 

\subsection{Lieb-Robinson bound for the $\kappa$-norm}\label{sec:L0<}
In this section, we establish a technical lemma for proving Proposition \ref{prop:PO<}, which can be skipped in a first reading.  In a nutshell, we wish to bound the growth in the $\kappa$-norm with the Lieb-Robinson bounds.  This approach uses established, albeit tedious, methods.

Since the $\kappa$-norm takes infimum over all possible local decompositions of the form \eqref{eq:O=OS}, it suffices to prove bound on a particular local decomposition. Moreover, we will frequently use the $\kappa$-norm probed at vertex $i$: \begin{align}\label{eq:kappai}
    \|\OO\|_{\kappa, i}:= \sum_{S \ni i} \mathrm{e}^{\kappa (\mathrm{diam}S)^\alpha} \left\|\OO_{S}\right\|, 
\end{align}
where a local decomposition is implicitly chosen. Then at the final step of the proof, we will use $\|\OO\|_{\kappa} = \inf_{\{\OO_S\}} \max_{i \in \Lambda} \|\OO\|_{\kappa, i}$.

Any evolved local operator $\ee^{t\LL_0}\OO_{S_0}$, can be decomposed by \begin{equation}\label{eq:Qr}
    \ee^{t\LL_0}\OO_{S_0} = \sum_{r=0}^\infty \QQ_r \lr{\ee^{t\LL_0}\OO_{S_0}},
\end{equation}
where the projector $\mathbb{Q}_r := \QQ_{r}^\prime - \QQ_{r-1}^\prime$ and \begin{equation} \label{eq:QQprimedef}
\QQ_r^\prime := \int\limits_{\text{Haar outside }S({S_0}, r)} \mathrm{d}U \; U^\dagger \mathcal{O} U, 
\end{equation}
where $S(S_0,r)$ is the simplex $S\supset S_0$ whose faces are all of distance $r$ to the parallel faces of $S_0$. The measure ``Haar outside $S({S_0}, r)$" denotes the Haar measure on all unitary operators supported outside the set $S(S_0,r)$.  Put simply, $\QQ_r$ is a projection onto all operators whose farthest support from ${S_0}$ is a ``distance'' $r$ away from ${S_0}$.  Note that \begin{equation}
 \lVert \QQ^\prime_r \OO\rVert \le \lVert \OO\rVert, \;\;\;\;\;\;\;  \lVert \QQ_r \OO\rVert \le  2\lVert \OO\rVert,  \label{eq:haaravg}
\end{equation} where the second inequality comes from (\ref{eq:QQprimedef}) and the triangle inequality applied to the definition of $\QQ_r$.

Using \eqref{eq:Qr}, we define a local decomposition for $\ee^{t\LL_0}\OO$ by simple extension: \begin{equation}\label{eq:L0O}
    \ee^{t\LL_0}\OO = \sum_{S_0} \sum_{r=0}^\infty \QQ_{r,{S_0}} \lr{\ee^{t\LL_0}\OO_{S_0}},
\end{equation}
where we have shown the explicit dependence of $\QQ_r$ on ${S_0}$, and each $\QQ_{r,{S_0}}$ term is viewed as supported in simplex $S(S_0,r)$. The decomposition $\OO=\sum_{S_0}\OO_{S_0}$ is optimal and is chosen to be (arbitrarily close to) minimizing the $\kappa$-norm $\norm{\OO}_\kappa$. 
The advantage of this decomposition is to invoke the Lieb-Robinson bound \eqref{eq:LRB}, which implies for any $r$, \begin{equation}\label{eq:Qr<}
    \norm{\QQ_r \lr{\ee^{t\LL_0}\OO_{S_0}}} \le 2\norm{\OO_{S_0}} \min\lr{1, c_d \mu'(1+\mathrm{diam} {S_0})^{d-1} \ee^{\mu(u|t|-r)} },
\end{equation}
where we have used \eqref{eq:cd}.
Here the first argument $1$ in $\min$ is from (\ref{eq:haaravg}).

\begin{lem}\label{lem:L0<}
If $\kappa'<\kappa\le \kappa_1$ and $H_0$ satisfies the Lieb-Robinson bound \eqref{eq:LRB} with \begin{equation}\label{eq:mu>kap}
    \mu\ge 5\kappa_1,
\end{equation}
then
\begin{align}\label{eq:L0O<}
\norm{\ee^{t\LL_0} \OO}_{\kappa'} \le c_0 \mlr{\max(1,-\ln \delta\kappa)}^{d-1}\norm{\OO}_\kappa (4u |t|+c_\mu)^d \ee^{\kappa' (4u |t|)^\alpha},
\end{align}
where $\delta\kappa = \kappa-\kappa'$, $c_0$ and $c_\mu$ only depends on $d,\kappa_1,\mu,\mu'$, and $\alpha$.
\end{lem}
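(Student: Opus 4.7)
The plan is to work with the explicit local decomposition $(\ee^{t\LL_0}\OO)_S = \sum_{(S_0,r): S(S_0,r)=S} \QQ_{r,S_0}(\ee^{t\LL_0}\OO_{S_0})$ induced by \eqref{eq:L0O}, where by Proposition \ref{prop:simp} each contributing simplex satisfies $\mathrm{diam}\,S = D_0 + (d+1)r$ with $D_0 := \mathrm{diam}\, S_0$. For a fixed probe site $i$, I would estimate $\|\ee^{t\LL_0}\OO\|_{\kappa',i}$ by swapping sums and enumerating all pairs $(S_0,r)$ with $i \in S(S_0,r)$, anchoring each $S_0$ at a distinguished vertex $j$ (say its lexicographically smallest vertex) so that $\sum_{S_0 \ni j,\,S_0\text{ anchored at }j} \ee^{\kappa D_0^\alpha}\|\OO_{S_0}\| \le \|\OO\|_\kappa$ can be invoked without overcounting.

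The key factorization uses the subadditivity $(D_0+(d+1)r)^\alpha \le D_0^\alpha + ((d+1)r)^\alpha$ (valid for $\alpha \le 1$) to split the output weight as $\ee^{\kappa'(D_0+(d+1)r)^\alpha} \le \ee^{\kappa D_0^\alpha}\ee^{-\delta\kappa D_0^\alpha}\ee^{\kappa'((d+1)r)^\alpha}$. The first factor rebuilds the input $\kappa$-norm once combined with $\|\OO_{S_0}\|$; the damping $\ee^{-\delta\kappa D_0^\alpha}$ serves as a reservoir that pays for both the $\sim (r+D_0)^d$ choices of anchor $j$ in a ball around $i$ for which $i\in S(S_0,r)$, and the polynomial $(1+D_0)^{d-1}$ factor from \eqref{eq:Qr<}. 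Standard estimates of the form $\sum_{D_0} D_0^{d-1}\ee^{-\delta\kappa D_0^\alpha} \lesssim [\max(1,-\ln\delta\kappa)]^{d-1}/\delta\kappa$ (obtained by integral comparison, treating $D_0 \lesssim (-\ln\delta\kappa)^{1/\alpha}$ and the tail separately) produce exactly the $[\max(1,-\ln\delta\kappa)]^{d-1}$ prefactor stated in the lemma; any residual $1/\delta\kappa$ is swallowed into $c_0$ using $\delta\kappa \le \kappa_1$.

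To control the sum over $r$, I would split at a threshold $r_* \simeq 4u|t| + c_\mu$, chosen so that the Lieb-Robinson cutoff in \eqref{eq:Qr<} just starts to bite. In the ballistic regime $r \le r_*$, I use the trivial argument $1$ of the min in \eqref{eq:Qr<}; this yields at most $\sim r_*^d$ terms each weighted by $\ee^{\kappa'((d+1)r)^\alpha}\le \ee^{\kappa'((d+1)r_*)^\alpha}$, producing the stated factor $(4u|t|+c_\mu)^d\, \ee^{\kappa'(4u|t|)^\alpha}$ after reabsorbing $(d+1)$-dependent constants into $c_\mu$. In the tail $r > r_*$, the nontrivial argument of \eqref{eq:Qr<} applies: the hypothesis $\mu \ge 5\kappa_1 \ge 5\kappa'$ together with $\alpha < 1$ guarantees that $\ee^{-\mu r}$ dominates $\ee^{\kappa'((d+1)r)^\alpha}$ with exponential margin, so the tail is summable and dominated by the ballistic block.

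The main technical obstacle will be coordinating the three weights $\ee^{\kappa D_0^\alpha}$, $\ee^{-\delta\kappa D_0^\alpha}$, and $\ee^{\kappa'((d+1)r)^\alpha}$ against the Lieb-Robinson factor $(1+D_0)^{d-1}\ee^{-\mu r}$ in a single bookkeeping pass. In particular, the ballistic threshold $r_*$ has to be shifted logarithmically in $D_0$ so that the polynomial $(1+D_0)^{d-1}$ is absorbed into $\ee^{-\delta\kappa D_0^\alpha}$ rather than leaking into the final prefactor, and the $\alpha$-subadditivity loss must be organized so that the output is polynomial in $u|t|$ and at worst $(\delta\kappa)^{-1}$, rather than exponentially worse. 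Crucially, assumption \eqref{eq:mu>kap} ensures that the gap $\mu - \kappa'$ is never exhausted by the weight $\ee^{\kappa'((d+1)r)^\alpha}$, making the tail bound independent of $t$ and justifying the form of \eqref{eq:L0O<}.
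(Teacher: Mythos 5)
Your overall architecture (the $\QQ_r$-decomposition, subadditivity of $(\cdot)^\alpha$, a $t$-dependent ballistic threshold, and dominating the tail via $\mu \ge 5\kappa_1$) tracks the paper's proof, and you correctly use Proposition~\ref{prop:simp} to get $\mathrm{diam}\,S(S_0,r) = D_0 + (d+1)r$. But the step on which the stated $[\max(1,-\ln\delta\kappa)]^{d-1}$ prefactor rests is wrong, and the error is not cosmetic.

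Your claimed estimate $\sum_{D_0} D_0^{d-1}\ee^{-\delta\kappa D_0^\alpha} \lesssim [\max(1,-\ln\delta\kappa)]^{d-1}/\delta\kappa$ is false: substituting $u = \delta\kappa D_0^\alpha$ shows the sum scales as $\Gamma(d/\alpha)\,(\delta\kappa)^{-d/\alpha}/\alpha$, which is a \emph{polynomial} in $1/\delta\kappa$, not a logarithm. The follow-up move — ``any residual $1/\delta\kappa$ is swallowed into $c_0$ using $\delta\kappa \le \kappa_1$'' — is backwards: $\delta\kappa \le \kappa_1$ gives only a \emph{lower} bound $1/\delta\kappa \ge 1/\kappa_1$, so $1/\delta\kappa$ is unbounded above (in the intended application $\delta\kappa_k \sim 1/(k\ln^2 k) \to 0$) and cannot be absorbed into a $\delta\kappa$-independent $c_0$. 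Carrying out your bookkeeping honestly, the damping reservoir $\ee^{-\delta\kappa D_0^\alpha}$ would produce a prefactor of order $(\delta\kappa)^{-(2d-1)/\alpha}$ once it has to pay for both the anchor count and the $|\partial S_0|$ polynomial from \eqref{eq:Qr<}. That is strictly worse than what Lemma~\ref{lem:L0<} states, and the difference matters downstream: in the iteration \eqref{eq:tvk} the Lemma's $(\ln k)^{d-1}$ is absorbed by a mild shift of $\alpha$, whereas an extra $(\delta\kappa_k)^{-(d-1)/\alpha} \sim (k\ln^2 k)^{(d-1)/\alpha}$ would increase the power of $k$ in the recursion and degrade the final exponent $a$ for $d>1$.

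The logarithm in \eqref{eq:L0O<} does not come from trading $D_0^{d-1}$ against $\ee^{-\delta\kappa D_0^\alpha}$. The paper's actual mechanism couples the decay threshold to $\mathrm{diam}\,S_0$: it sets $r_0(S_0) \approx 2\bigl(ut + \tfrac{1}{\mu}\ln(c_d\mu' s^{d-1})\bigr)$ with $s=1+\mathrm{diam}\,S_0$, so the boundary polynomial $(1+D_0)^{d-1}$ is absorbed into the definition of $r_0$ and never has to be paid by the $\delta\kappa$ budget. Then for a probe site $i$ at distance $x=\mathsf{d}(i,S_0)$, the dangerous contributions — those not killed by the Lieb--Robinson exponential — require $r_0(S_0)\ge x$, which forces $s \ge f_0(x)$ with $f_0(x) \gtrsim \ee^{\mu x/(4(d-1))}$ for $x\gtrsim ut$. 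The Markov step against the optimal $\kappa$-decomposition then produces $\sum_x x^{d-1}\ee^{-\delta\kappa(f_0(x)-1)^\alpha}$; because $f_0$ is \emph{exponential} in $x$, this sum collapses onto a window of width $\mathrm{O}(1)$ around $x\sim \ln\tfrac1{\delta\kappa}$, giving the $[\max(1,-\ln\delta\kappa)]^{d-1}$ factor. Your plan sums directly over $D_0$ against a stretched exponential, which is polynomially softer; without the exponential change of variable $y=f_0(x)$ you cannot recover the logarithm, and the Lemma as stated would not follow.
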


\begin{proof}
Assume $t\ge 0$ without loss of generality (alternatively set $|t|\rightarrow t$). For the decomposition \eqref{eq:L0O}, suppose that a given vertex $i$ in \eqref{eq:kappai} is the one responsible for the $\kappa'$-norm of $\ee^{t\LL_0}\OO$: in what follows, we will assume this $i$ is fixed, as in \eqref{eq:kappai}, but the result holds for any $i$ and thus eventually for \eqref{eq:L0O<} as well.  Further fix a set ${S_0}$.  Then the initial operator $\OO_{S_0}$ can contribute an amount
\begin{equation} \label{eq:KOO{S_0}}
    K_{\kappa',i}(\OO_{S_0}) \le \norm{\QQ_* \lr{\ee^{t\LL_0}\OO_{S_0}}} \ee^{\kappa'(s-1 + 2r_0)^\alpha}\mathbb{I}(r_0 \ge \mathsf{d}(i,{S_0})) + \sum_{r=r_0+1, r\ge \mathsf{d}(i,{S_0})}^\infty \norm{\QQ_r \lr{\ee^{t\LL_0}\OO_{S_0}}} \ee^{\kappa'(s-1 + 2r)^\alpha},
\end{equation}
according to decomposition \eqref{eq:Qr}, where we write $s=1+\mathrm{diam}{S_0}$ for convenience, and $\mathbb{I}(\cdot)$ is the indicator function that returns $1$ for input True and $0$ for False. We have combined the leading terms \begin{equation}
    \QQ_0+\cdots+\QQ_{r_0}=\QQ_*
\end{equation} into a single operator, which has support in $S({S_0},r_0)$, with $r_0$ a constant chosen shortly to distinguish between pieces of the operator with ``large" and ``small" support. 

Intuitively, the contribution $K_{\kappa',i}(\OO_{S_0})$ is small if $i$ is far from the initial support ${S_0}$, compared to the distance $ut$ that an operator can expand during time $t$. To formalize, observe that \eqref{eq:Qr<} decays exponentially with $r$ at sufficiently large $r$.  This exponential decay is assured to kick in once  $r>r_0^\prime$, where we define $r_0^\prime$ as the solution to \begin{equation}
    1=c_d\mu's^{d-1} \ee^{\mu(ut-r_0^\prime)}, \rarrow r_0^\prime= ut + \frac{1}{\mu} \ln\lr{c_d\mu' s^{d-1}}. 
\end{equation}
We now choose $r_0\approx 2r_0^\prime$, noting that $r_0$ depends on $t$ and $s$:   \begin{equation}\label{eq:r0}
    r_0 = 2\left\lfloor ut + \frac{1}{\mu} \ln\lr{c_d\mu' s^{d-1}} \right\rfloor +1.
\end{equation} 
Here $\lfloor a\rfloor$ is the largest integer below $a$. Note that we are free to choose a large enough $\mu'$ so that $r_0$ is always positive. As a result, \eqref{eq:Qr<} transforms to \begin{subequations}\begin{align}
    \norm{\QQ_{*} \lr{\ee^{t\LL_0}\OO_{S_0}}} &\le 2 \norm{\OO_{S_0}}, \label{eq:Qr0<} \\
    \norm{\QQ_r \lr{\ee^{t\LL_0}\OO_{S_0}}} &\le 2c_d\mu'\norm{\OO_{S_0}} s^{d-1} \ee^{\mu(ut-r)} \le 2c_d\mu'\norm{\OO_{S_0}} s^{d-1} \ee^{\mu\lr{ut-\frac{r_0+1}{2}-\frac{r}{2}}} \le 2 \norm{\OO_{S_0}}\ee^{-\mu r/2},\quad  r>r_0. \label{eq:Qr<e-r}
\end{align}\end{subequations}

We can now bound $K_{\kappa',i}(\OO_{S_0})$.   We need to consider whether $r_0$ is larger or smaller than $\mathsf{d}(i,{S_0})$.  Let us start with the possibility that $r_0<\mathsf{d}(i,{S_0})$.   Then \eqref{eq:Qr<e-r} yields \begin{align}\label{eq:KO<}
    K_{\kappa',i}(\OO_{S_0}) &= 2\norm{\OO_{S_0}} \sum_{r=\mathsf{d}(i,{S_0})}^\infty  \ee^{-\mu r/2}\ee^{\kappa'(s-1 + 2r)^\alpha} \le 2\norm{\OO_{S_0}} \sum_{r=\mathsf{d}(i,{S_0})}^\infty  \ee^{-\mu r/2}\ee^{\kappa' [(s-1)^\alpha+ 2r]} \nonumber\\ &\le  2\norm{\OO_{S_0}} \ee^{-(\mu/2 - 2\kappa') \mathsf{d}(i,{S_0})}\ee^{\kappa' (s-1)^\alpha} \lr{1-\ee^{-(\mu/2 - 2\kappa')}}^{-1} \nonumber \\
    &\le \frac{2}{1-\ee^{-\mu/10}}\norm{\OO_{S_0}} \ee^{-\mu \mathsf{d}(i,{S_0})/10}\ee^{\kappa' (s-1)^\alpha}, \quad r_0<\mathsf{d}(i,{S_0}).
\end{align}
In the first line, we have used $\alpha<1$ and \begin{equation}\label{eq:a+b<}
    (a+b)^\alpha \le a^\alpha + b^\alpha, \quad \forall a,b\ge 0.
\end{equation}
In the second line we have summed the geometric series and used \eqref{eq:mu>kap}. For the second case, $r_0\ge \mathsf{d}(i,{S_0})$, the $r>r_0$ part of summation is done exactly as \eqref{eq:KO<}, with $\mathsf{d}(i,{S_0})$ replaced by $r_0+1$.  Thus combining with \eqref{eq:Qr0<} yields \begin{equation}\label{eq:KO<1}
    K_{\kappa',i}(\OO_{S_0}) \le 2\norm{\OO_{S_0}} \mlr{ \ee^{\kappa'(s-1+2r_0)^\alpha} +  \frac{1}{\ee^{\mu/10}-1} \ee^{-\mu r_0/10}\ee^{\kappa' (s-1)^\alpha}} \le \lr{2+\frac{20}{\mu}} \norm{\OO_{S_0}}\ee^{\kappa'(s-1+2r_0)^\alpha}, \quad r_0\ge \mathsf{d}(i,{S_0}),
\end{equation}
 where we have simplified the prefactor using \begin{equation}
    \ee^{a}-1 \ge a.
\end{equation}

The $\kappa'$-norm at $i$, denoted by $\norm{\ee^{t\LL_0} \OO}_{\kappa',i}$, is the sum $K_{\kappa',i}(\OO_{S_0})$ over all $\OO_{S_0}$. For each $\OO_{S_0}$, let $x=\mathsf{d}(i,{S_0})$: by definition, there is at least one site $j\in {S_0}$ such that $\mathsf{d}(i,j)=\mathsf{d}(i,{S_0})\equiv x$. We can sum over ${S_0}$ by grouping the sums according to the $x$ and $j$:  the outermost sum will be over $x$, then we will sum over $j$ at a fixed distance $x$, and then sum over sets ${S_0}$ with $j\in {S_0}$.  Note that there can be multiple valid $j$ for each ${S_0}$, so this sum will overestimate the bound: 
\begin{align}\label{eq:<K+K}
    \norm{\ee^{t\LL_0} \OO}_{\kappa',i} &\le  \sum_{x=0}^\infty \sum_{j: \mathsf{d}(i,j)=x} \sum_{\OO_{S_0}}^{j,x} K_{\kappa',i}(\OO_{S_0}) = \sum_{x=0}^\infty \sum_{j: \mathsf{d}(i,j)=x} \mlr{\sum_{\OO_{S_0}:s<f_0(x)}^{j,x} K_{\kappa',i}(\OO_{S_0}) +\sum_{\OO_{S_0}:s\ge f_0(x)}^{j,x} K_{\kappa',i}(\OO_{S_0})},
\end{align}
where $\sum^{j,x}_{\OO_{S_0}}$ means the restriction that ${S_0}\ni j$ and $\mathsf{d}(i,{S_0})=x$ (as described above in words). In the latter equality, we separated the sum according to whether $r_0(s)< \mathsf{d}(i,{S_0})=x$, which is equivalent to whether $1+\mathrm{diam}{S_0}=s<f_0(x)$, where \begin{equation}
    x = 2\left\lfloor ut + \frac{1}{\mu}\ln \left(c_d\mu' f_0(x)^{d-1}\right) \right\rfloor + 1. \label{eq:f0x}
\end{equation} 

Note that for $d=1$ there is no solution for $f_0(x)$ since $r_0$ is a fixed number independent of $s$, and hence we do not need to sum over $s\ge f_0(x)$ in $d=1$: \eqref{eq:<K+K} simply vanishes for all $x>r_0$. Thus for $d=1$ we can simply set $f_0(x)=1$ for $x\le r_0$ and $f_0(x)=+\infty$ otherwise, so that \eqref{eq:<K+K} and the following equations still make sense.

 We first bound the first term in \eqref{eq:<K+K} using \eqref{eq:KO<}: \begin{align}\label{eq:c_<}
    \sum_{x=0}^\infty \sum_{j: \mathsf{d}(i,j)=x} \sum_{\OO_{S_0}:s<f_0(x)}^{j,x} K_{\kappa',i}(\OO_{S_0}) &\le \frac{2}{1-\ee^{-\mu/10}} \sum_{x=0}^\infty \sum_{j: \mathsf{d}(i,j)=x} \sum_{\OO_{S_0}:{S_0}\ni j} \norm{\OO_{S_0}} \ee^{-\mu x/10}\ee^{\kappa' (s-1)^\alpha} \nonumber\\
    &\le \frac{2}{1-\ee^{-\mu/10}} \sum_{x=0}^\infty \sum_{j: \mathsf{d}(i,j)=x} \ee^{-\mu x/10} \norm{\OO}_{\kappa'} \nonumber\\
    &\le \frac{2 c_d}{1-\ee^{-\mu/10}} \norm{\OO}_{\kappa'}  \sum_{x=0}^\infty (1+2x)^{d-1} \ee^{-\mu x/10} = c_< \norm{\OO}_\kappa,
\end{align}
which is well upper bounded by the right hand side of \eqref{eq:L0O<}. Here we have used \eqref{eq:cd} with $S=\{j:\mathsf{d}(i,j)\le x\}$ to get \begin{equation}\label{eq:cLam}
    \sum_{j:\mathsf{d}(i,j)= x} \le c_d (1+2x)^{d-1},
\end{equation}
because $\partial S=\{j:\mathsf{d}(i,j)= x\}$ and $\mathrm{diam}S\le 2x$ from triangle inequality.
The constant $c_<$ only depends on $d$ and $\mu$. We also used $\{\OO_{S_0}\}$ is the optimal decomposition of $\OO$ that realizes its $\kappa'$-norm.

We now evaluate the second term in \eqref{eq:<K+K}.  First, we use that if $s\ge f_0(x)$, we may as well use \eqref{eq:KO<1} to bound \begin{align}
    K_{\kappa',i}(\mathcal{O}_{S_0}) &\le \left(2+\frac{20}{\mu}\right)\norm{\mathcal{O}_{S_0}} \mathrm{e}^{\kappa^\prime (s-1+2r_0)^\alpha} \notag \\
    &\le \left(2+\frac{20}{\mu}\right)\norm{\mathcal{O}_{S_0}}\exp\glr{\kappa'\mlr{(4ut)^\alpha+ \lr{s+1+\frac{4}{\mu} \ln\lr{c_d\mu' s^{d-1}}}^\alpha } } \notag \\
    &\le \left(2+\frac{20}{\mu}\right)\mathrm{e}^{\kappa^\prime (4ut)^\alpha}\norm{\mathcal{O}_{S_0}}\exp\glr{\kappa' \mlr{(s-1)^\alpha+c_{\ln}} }. \label{eq:KOSfirst}
\end{align}
In the second line we have used \eqref{eq:a+b<}. In the third line we have used the fact that the function $(s+a\ln s+b)^\alpha - s^\alpha$
is upper bounded for any $a,b$ (given $\alpha<1$), and the resulting constant $c_{\ln}$ in \eqref{eq:KOSfirst} is determined by $\alpha,d,\mu,\mu'$.  Now, combining (\ref{eq:<K+K}), (\ref{eq:c_<}), and (\ref{eq:KOSfirst}), we find
\begin{align}\label{eq:L0O<x}
    \norm{\ee^{t\LL_0} \OO}_{\kappa',i}  &\le c_< \norm{\OO}_\kappa +\left(2+\frac{20}{\mu}\right)\ee^{\kappa'(4ut)^\alpha} \sum_{x=0}^\infty \sum_{j: \mathsf{d}(i,j)=x} \sum_{\OO_{S_0}:s\ge f_0(x)}^{j,x}\norm{\OO_{S_0}}\exp\glr{\kappa' \mlr{(s-1)^\alpha+c_{\ln}} } \nonumber\\
    &\le c_< \norm{\OO}_\kappa + \left(2+\frac{20}{\mu}\right)\ee^{\kappa'\mlr{c_{\ln}+(4ut)^\alpha}} \sum_{x=0}^\infty \sum_{j: \mathsf{d}(i,j)=x} \ee^{-\delta\kappa (f_0(x)-1)^\alpha } \norm{\OO}_\kappa \nonumber\\
    &\le c_< \norm{\OO}_\kappa + \left(2+\frac{20}{\mu}\right)\ee^{\kappa'\mlr{c_{\ln}+(4ut)^\alpha}}\norm{\OO}_\kappa \sum_{x=0}^\infty c_d (2x+1)^{d-1}\ee^{-\delta\kappa (f_0(x)-1)^\alpha }.
\end{align}
 In the second line of (\ref{eq:L0O<x}) we have used the Markov inequality \begin{align}\label{eq:markov}
    \sum_{\OO_{S_0}:s\ge f_0(x)}^{j,x}\norm{\OO_{S_0}}\ee^{\kappa' (s-1)^\alpha } &= \sum_{\OO_{S_0}:s\ge f_0(x)}^{j,x}\norm{\OO_{S_0}} \ee^{(\kappa-\delta\kappa) (s-1)^\alpha } \nonumber\\ 
    &\le \sum_{\OO_{S_0}:s\ge f_0(x)}^{j,x}\norm{\OO_{S_0}} \ee^{\kappa (s-1)^\alpha } \ee^{-\delta\kappa (f_0(x)-1)^\alpha } \le \norm{\OO}_\kappa \ee^{-\delta\kappa (f_0(x)-1)^\alpha }.
\end{align}
In the last line of (\ref{eq:markov}) we have overestimated the final sum over ${S_0}$. 
Returning to the last line of (\ref{eq:L0O<x}), we have used \eqref{eq:cLam}. 
 
 If $d=1$, the sum over $x$ in \eqref{eq:L0O<x} is finite: $x\le r_0 \approx 2ut$, and \eqref{eq:L0O<} follows easily. 
 
For $d>1$, note that there exists a constant $c'_\mu$ that depends on $d,\mu,\mu'$, such that when $x\ge 4ut+c'_\mu$, \eqref{eq:f0x} yields \begin{equation}
    f_0(x) \ge \lr{c_d\mu'}^{-\frac{1}{d-1}}\ee^{\frac{\mu}{d-1}\lr{\frac{x-1}{2} -ut-1}} \ge \ee^{\frac{\mu x}{4(d-1)}}\equiv y, \;\;\; (x\ge 4ut+c'_\mu).
\end{equation}
Changing variable from $x$ to $y$ defined above, \eqref{eq:L0O<x} becomes \begin{align}\label{eq:}
    \norm{\ee^{t\LL_0} \OO}_{\kappa',i}  &\le c_< \norm{\OO}_\kappa+ \frac{(2\mu+20)c_d}{\mu}\ee^{\kappa'\mlr{c_{\ln}+(4ut)^\alpha}}\norm{\OO}_\kappa \mlr{(4ut+c'_\mu+1)^d + \lr{\frac{4(d-1)}{\mu}}^d c_{\mathrm{int}}\int^\infty_1 \frac{\dd y}{y} (\ln y)^{d-1}\ee^{-\delta\kappa (y-1)^\alpha }  } \nonumber\\
    &\le c_< \norm{\OO}_\kappa+\frac{(2\mu+20)c_d}{\mu}\ee^{\kappa'\glr{c_{\ln}+(4ut)^\alpha}}\norm{\OO}_\kappa \mlr{(4ut+c'_\mu+1)^d + \lr{\frac{4(d-1)}{\mu}}^d c'_{\ln} \mlr{\max(1,-\ln \delta\kappa)}^{d-1} } \nonumber\\
    &\le c_< \norm{\OO}_\kappa+ \frac{(2\mu+20)c_d c'_{\ln}}{\mu}\ee^{\kappa'\mlr{c_{\ln}+(4ut)^\alpha}}\norm{\OO}_\kappa \mlr{\max(1,-\ln \delta\kappa)}^{d-1} \lr{4ut+c'_\mu + \frac{4(d-1)}{\mu}}^d.
\end{align}
Here in the first line we have summed over $x<4ut+c'_\mu$, and $c_{\mathrm{int}}$ is a constant due to replacing the sum in \eqref{eq:L0O<x} by an integral.  In the second line we have used the fast decay of the exponential function: if $\delta\kappa = \Omega(1)$, then the integral is $\mathrm{O}(1)$; otherwise if $\delta\kappa\ll 1$, then rescaling $\delta\kappa y^\alpha = \tilde{y}^\alpha$ will pull out an overall factor $\lr{\ln \frac{1}{\delta\kappa}}^{d-1}$. In the last line, we combined the two terms using $a^d+b^d \le (a+b)^d$. Finally, as our result does not depend at all on $i$, we are free to re-label our final O(1) constants sitting out in front; in doing so, we arrive at \eqref{eq:L0O<}.
\end{proof}

\subsection{Proof of Proposition \ref{prop:PO<}}

\begin{proof}[Proof of Proposition \ref{prop:PO<}]
\eqref{eq:HA=O} comes from \begin{align}\label{eq:HAO=P-1}
    [H_0, \bA\OO] = \int^\infty_{-\infty} W(t)\ddt{} \lr{\ee^{t\LL_0}\OO} \dd t = (-1+\PP) \OO,
\end{align} 
following \eqref{eq:Wf}.
To prove \eqref{eq:PO<}, note that the condition \eqref{eq:mu>kap} of Lemma \ref{lem:L0<} is just \eqref{eq:kap<mu}. Thus we use \eqref{eq:L0O<} to get
\begin{align}\label{eq:PO<<}
    &\norm{\PP \OO}_{\kappa'} \le \int^\infty_{-\infty} \dd t\; w(t) \norm{\ee^{t\LL_0} \OO}_{\kappa'}  \le 2c_0 \mlr{\max(1,-\ln \delta\kappa)}^{d-1} \norm{\OO}_\kappa \int^\infty_0 \dd t\; w(t)  (4u t+c_\mu)^d \ee^{\kappa' (4u t)^\alpha} \nonumber\\ 
    &\le 2c_0\mlr{\max(1,-\ln \delta\kappa)}^{d-1} \norm{\OO}_\kappa \mlr{ \lr{c_\mu +\ee^{\frac{1}{\sqrt{2}}}\frac{4u}{\Delta} }^d \ee^{\kappa\lr{\ee^{\frac{1}{\sqrt{2}}}\frac{4u}{\Delta} }^\alpha}+ 2\ee^2 \int^\infty_{\ee^{\frac{1}{\sqrt{2}}}} \dd \tilde{t}\ee^{\kappa\lr{\ee^{\frac{1}{\sqrt{2}}}\tilde{t}\frac{4u}{\Delta} }^\alpha-\frac{2\tilde{t}}{7\ln^2 \tilde{t}}}\lr{c_\mu+\ee^{\frac{1}{\sqrt{2}}}\tilde{t}\frac{4u}{\Delta} }^d  }.
\end{align}
In the second line, we have first used \eqref{eq:w<} to bound the large $t$ tails of the integral; for the small $t$ limit we have simply bounded the integral by using the maximum of each term in the integrand separately in the domain $t<\mathrm{e}^{1/\sqrt{2}}\Delta^{-1}$, together with (\ref{eq:intwt1}).
which reduces to the form of \eqref{eq:PO<} since the integral converges for any $\alpha<1$. \eqref{eq:AO<} comes from \eqref{eq:W<} using almost identical manipulations.
\end{proof}

\section{Proof of Proposition \ref{prop:AO-O}}\label{sec:LO-O<}

\subsection{Motivation}

It remains to prove Proposition \ref{prop:AO-O}, which is the main difficulty of generalizing \cite{abanin2017rigorous}. Since this Proposition is a self-contained bound on local dynamics beyond conventional Lieb-Robinson bound, we restate it below replacing $A$ with $\ii tH$. This notation will be used for this whole section.  Therefore, in this section, the Hamiltonian $H$ is \emph{not} the same one as \eqref{eq:H0+V}: we only require it to be local in the sense of \eqref{eq:tH<C}.

\setcounter{prop1}{5}
\begin{prop1}[restatement]\label{prop:LO-O}
Suppose $\kappa'<\kappa\le \kappa_1$ with \begin{equation}
    \delta\kappa=\kappa-\kappa'.
\end{equation}
If \begin{equation}\label{eq:L}
    \LL=\ii [H,\cdot],
\end{equation}
with Hamiltonian $H$ on a $d$-dimensional simplicial lattice, satisfying \begin{equation}\label{eq:tH<C}
    |t|\norm{H}_\kappa \le (\delta\kappa)^{\frac{3d+1}{\alpha}},
\end{equation}
then
\begin{equation}\label{eq:LO-O}
    \norm{\ee^{t\LL} \OO -\OO}_{\kappa'} \le c_- (\delta \kappa)^{-\frac{2d-1}{\alpha} } |t|\norm{H}_{\kappa} \norm{\OO}_\kappa,
\end{equation}
where $0<c_-<\infty$ depends on $d,\alpha$ and $\kappa_1$.\footnote{We believe the exponent of $\delta\kappa$ can be improved, for example, to $(\delta \kappa)^{-\frac{1}{\alpha}\max(d,2d-2) }$, using complicated geometrical facts about simplices. Such improvement will lead to a larger $a>\frac{1}{2d-1}$ in the prethermal time scale $t_*\sim \exp \lr{\frac{\Delta}{\epsilon}}^a$. }
\end{prop1}

The difference between Proposition \ref{prop:AO-O} and its counterpart, Lemma 4.1 in \cite{abanin2017rigorous}, is that here $H$ is less local: \begin{equation}\label{eq:AS<S}
    \norm{H_S}\lesssim \ee^{-\kappa (\mathrm{diam}S)^\alpha},\quad \alpha<1.
\end{equation} 
In contrast,  $\norm{H_S}\lesssim\ee^{-\kappa |S|}$ in \cite{abanin2017rigorous}. As a result, a simple combinatorial expansion of $\ee^{t\LL}\OO$ over $t$, which was done in \cite{abanin2017rigorous}, no longer converges for $d>1$.  Essentially, the issue is that while $H$ is sub-exponentially localized\footnote{As Hamiltonians with algebraic tails have rigorous Lieb-Robinson bounds \cite{fossfeig,Chen:2019hou,Kuwahara:2019rlw,Tran:2020xpc,Tran:2021ogo}, this remains an extremely strong condition.  Still, it requires some care to re-sum and find a bound on $\lVert \mathrm{e}^{\mathcal{L}t} \mathcal{O}\rVert_\kappa$, which is not the object usually bounded by a Lieb-Robinson Theorem!}, the growth in the $\kappa$-norm can be dominated by tiny terms in $H$ with unusually large polygons $S$ in which they are supported.  In $d>1$, there are an increasingly large number of ways for such large polygons to intersect with $\mathcal{O}$, so they must be summed up with some care to not overcount.

To sketch the proof that follows, we first observe that evolution generated by a Hamiltonian $H$ of form \eqref{eq:AS<S} still has a Lieb-Robinson bound that we will prove in Proposition \ref{prop:reproducing} using established methods (see, e.g., \cite{hastings2010locality}), which resums the divergence of the simple Taylor expansion mentioned above. The idea is illustrated in Fig.~\ref{fig:d1} for $1$d, where we focus on a single local operator $\OO_{S_0}$, since the conclusion for the extensive $\OO$ follows simply by superposition. The rectangle at each layer represents commuting the operator with a local term $H_X$ in the Hamiltonian $H=\sum_X H_X$. Suppose at some step, the operator has support on $S$. Then there are $\sim |S|$ terms of $H_X$ that act nontrivially on the operator. However, only $\sim |\partial S|$ of them (red rectangles in Fig.~\ref{fig:d1}) grow the operator to a strictly larger support, while the ``bulk'' ones (yellow rectangles) yield unitary rotations inside the support. Using Lieb-Robinson techniques, one can essentially ignore these internal rotations, and bound the operator growth only by the ``boundary'' ones, leading to a convergent series.  

Unfortunately there is an important technical difference between a standard Lieb-Robinson bound, which bounds $\lVert[ A_S(t),B_R]\rVert$ for fixed sets $S$ and $R$, and a bound on $\lVert \mathcal{O}(t)\rVert_\kappa$.  In the former, we need only keep track of (loosely speaking) terms that grow set $S$ towards set $R$.  In the latter, we need to keep track of all terms which grow the operator in any direction on the lattice -- for $d>1$, increasingly large operators have a large perimeter with many possible ways to grow. In particular, suppose the Hamiltonian terms are all of size $\mathrm{diam}S\sim r$. Then the typical size of a grown operator is $\ge 2r$, because each end of the operator can be attached by a $H_S$ of size $r$ that just touches the end.  Even in $d=1$, a Lieb-Robinson bound can affirm that it took time $ut$ for (terms of high weight) in the operator to expand a distance $r$ to the right; during the same time it also will likely expand a distance $r$ to the left.  When $r\gg |S|$, this implies that $|S|$ grows twice as fast as the Lieb-Robinson velocity.  So in every iteration, the typical size of an operator in $H_k$ will at least double. Returning to the overarching sketch of our proof, this would make $k_*\sim \log \frac{\Delta}{v_1}$. So we need to use the extra fact that, by attaching more $H_S$ to the initial operator, the amplitude is suppressed by more powers of $\norm{H}_\kappa$. In other words, we need to differentiate cases where $1$ or $2$ ends are attached by $H_S$, which is not considered in conventional Lieb-Robinson bounds. 



Our strategy can be intuitively described first in $d=1$, and we will do so now. 
Let operator $\mathcal{O}_S$ have support on connected subset $S\subseteq \Lambda$: namely $S$ is an interval. $\mathcal{O}_S$ can only grow at the two ends (left and right) of the support interval. For example in Fig.~\ref{fig:d1}, we are studying a term in the expansion of the time-evolved operator of the form \begin{equation*}
    \mathrm{e}^{\mathcal{L}t}\mathcal{O}_{S_0} \subseteq \frac{t^5}{5!} \mathcal{L}_{X_5}\mathcal{L}_{X_4}\mathcal{L}_{X_3}\mathcal{L}_{X_2}\mathcal{L}_{X_1} \mathcal{O}_{S_0}.
\end{equation*} Intuitively the operator grows as follows: the left end first moves from site $4$ to $3$ by $X_1$, and then from $3$ to $2$ by $X_3$; similarly on the right end, $X_4$ moves us from 6 to 8.  So, in order to grow the initial domain $S_0=\lbrace 4,5,6\rbrace$ by two sites on each end, we must traverse:
\begin{equation*}
\label{eq:leftright}
    \left\{\begin{aligned}
    \mathrm{left} &: 4\xrightarrow{X_1} 3 \xrightarrow{X_3} 2\\
    \mathrm{right} &: 6\xrightarrow{X_3} 7 \xrightarrow{X_4} 8
    \end{aligned}\right..
\end{equation*}
This pattern only depends on the boundary terms $H_X$ (red rectangles in Fig.~\ref{fig:d1}). 
However, observe that in general the Taylor expansion of $\mathrm{e}^{\mathcal{L}t}\mathcal{O}_{S_0}$ will contain many additional terms which act entirely inside of $S_0$.  We do not want to count these terms, since they cannot grow the support of the operator at all. The key observation, first made in \cite{chen2021operator}, is that one can elegantly classify all of the possible orderings for the ``red" terms in $H$ (those that grow the operator), in such a way that all possible intermediate sequences of yellow terms can be re-exponentiated to form a unitary operation (which leaves operator norms invariant)!  The practical consequence of this observation is that we only need to bound the contributions of red terms (and the number of possible patterns of red terms) when building a Lieb-Robinson bound for the $\kappa$-norm.  We emphasize that it is crucial that we track both the left and right end: the main technical issue addressed in this section is how to find such a ``direction-resolved'' Lieb-Robinson bound in $d>1$.  

\begin{figure}[t]
\centering
\includegraphics[width=.85\textwidth]{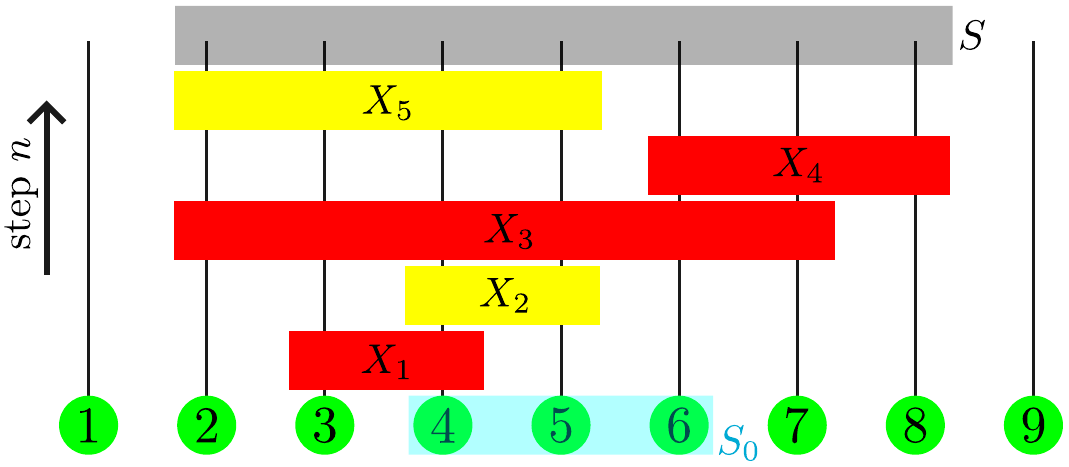}
\caption{\label{fig:d1}
A sketch of the Heisenberg evolution for an operator, which is equivalent to step-by-step taking its commutator with local terms $H_X$ of the Hamiltonian $H$. Starting from the initial operator supported on $S_0=\{4,5,6\}$, some $H_{X_n}$s truly grow the support (red rectangles) to finally reach $S=\{2,3,\cdots,8\}$, while others do not grow the support and serve as internal rotations (yellow rectangles). Only the former, which intersect with the boundary of the operator at each step, contributes to enlarging the $\kappa$-norm of the final operator. }
\end{figure}

\subsection{Irreducible skeleton representation}

We first focus on a single initial operator $\OO=\OO_{S_0}$, locally supported in a fixed $d$-dimensional simplex $S_0$. Recall that we work with the simplicial lattice, and the Hamiltonian $H=\sum_X H_X$ is also expanded in the simplices $X$ on which operators are supported. All simplices are regular and have the same orientation, as shown by the magenta triangles in Fig.~\ref{fig:simplicial}. 

To describe how the Hamiltonian couples different lattice sites, we
use the factor graph theory approach developed in \cite{chen2021operator}.   A factor graph $G=(\Lambda, F, E)$ is defined as follows: $\Lambda$ contains all of the lattice sites (as before), while $F$ is the set of all factors $X$ that appear in $H=\sum_X H_X$. Their union $\Lambda \cup F$ serves as the vertices of the factor graph $G$, and we connect each factor $X\subseteq F$ to all its contained sites $i\in X$.  Hence, the edges connecting $F$ and $\Lambda$ form the edge set $E$ of the factor graph $G$. As an example, in Fig.~\ref{fig:skeleton}(a) the green circles are lattice sites and the red rectangles are factors. Rectangles representing nearest-neighbor three-site interaction are connected to their supporting sites by black lines, while the connection for the six-site interactions are explicitly shown only for one of its supporting sites. 

We expand $\ee^{t\LL}\OO$ in powers of $t$: \begin{equation}\label{eq:Ln}
    \ee^{t\LL}\OO = \sum_{n=0}^\infty \frac{t^n}{n!} \LL^n\OO.
\end{equation}
Further expanding $\LL$ into factors using $\LL=\sum_X \LL_X$, each term is of the form $\LL_{X_n}\cdots\LL_{X_1} \OO$, where $\mathcal{L}_X := \mathrm{i}\mathrm{Adj}_{H_X}$. Crucially, observe that such a term is nonzero only if the sequence $T=(S_0,X_1,\cdots,X_n)$ obeys a \emph{casual structure}: each $X_m$ intersects with the previous support $S_0\cup X_1\cup\cdots\cup X_{m-1}$.\footnote{If this does not happen, then the expression must vanish as it contains within it a commutator of operators supported on disjoint sets.} If this condition holds, we say $T$ is a causal tree.  The tree structure, embedded in the factor graph $G$, is thoroughly defined in \cite{chen2021operator}; in what follows we will borrow closely from their formalism. We see that \eqref{eq:Ln} only contains contributions from the sequences $\mathcal{L}_{X_n}\cdots \mathcal{L}_{X_1}$ arising from causal trees $T$ whose root is $S_0$.  Defining $\mathcal{T}_{S_0}$ to be the set of all causal trees that have a root in $S_0$, we write:
\begin{equation}\label{eq:sumT}
    \ee^{t\LL}\OO = \sum_{T\in \mathcal{T}_{S_0}} \frac{t^n}{n!} \LL_{X_n}\cdots\LL_{X_1} \OO.
\end{equation} 
For a given causal tree $T$, let us denote with $S(T)$ the unique (smallest) circumscribed simplex of $T$. For example, the circumscribed simplex for the subset $\{2,3,4,5,9\}\subseteq \Lambda$ in Fig.~\ref{fig:skeleton}(a) is the triangle $S=\{1,2,\cdots,10\}$, since the subset touches all three faces of $S$. Then \eqref{eq:sumT} gives rise to a decomposition \begin{equation}\label{eq:sumS}
    \ee^{t\LL}\OO = \sum_{S\supset S_0} \sum_{T\in\mathcal{T}_{S_0} : S(T)=S} \frac{t^n}{n!} \LL^n\OO  := \sum_{S\supset S_0} \lr{\ee^{t\LL}\OO}_{S},
\end{equation}
of the evolved operator to simplices $S$.  
Note that some local terms of the evolved operator are assigned to a looser support $S$, if the operator accidentally becomes the identity operator at sites in $\partial S$. For example, if $\OO=Z_1Z_2$ and $X_1=Z_1X_2$, then $\LL_{X_1}\OO\propto I_1 Y_2$ acts trivially on site $1$, which we however assign a support of $S=\{1,2\}$ in the decomposition \eqref{eq:Ln}.

\begin{figure}[t]
\centering
\includegraphics[width=.85\textwidth]{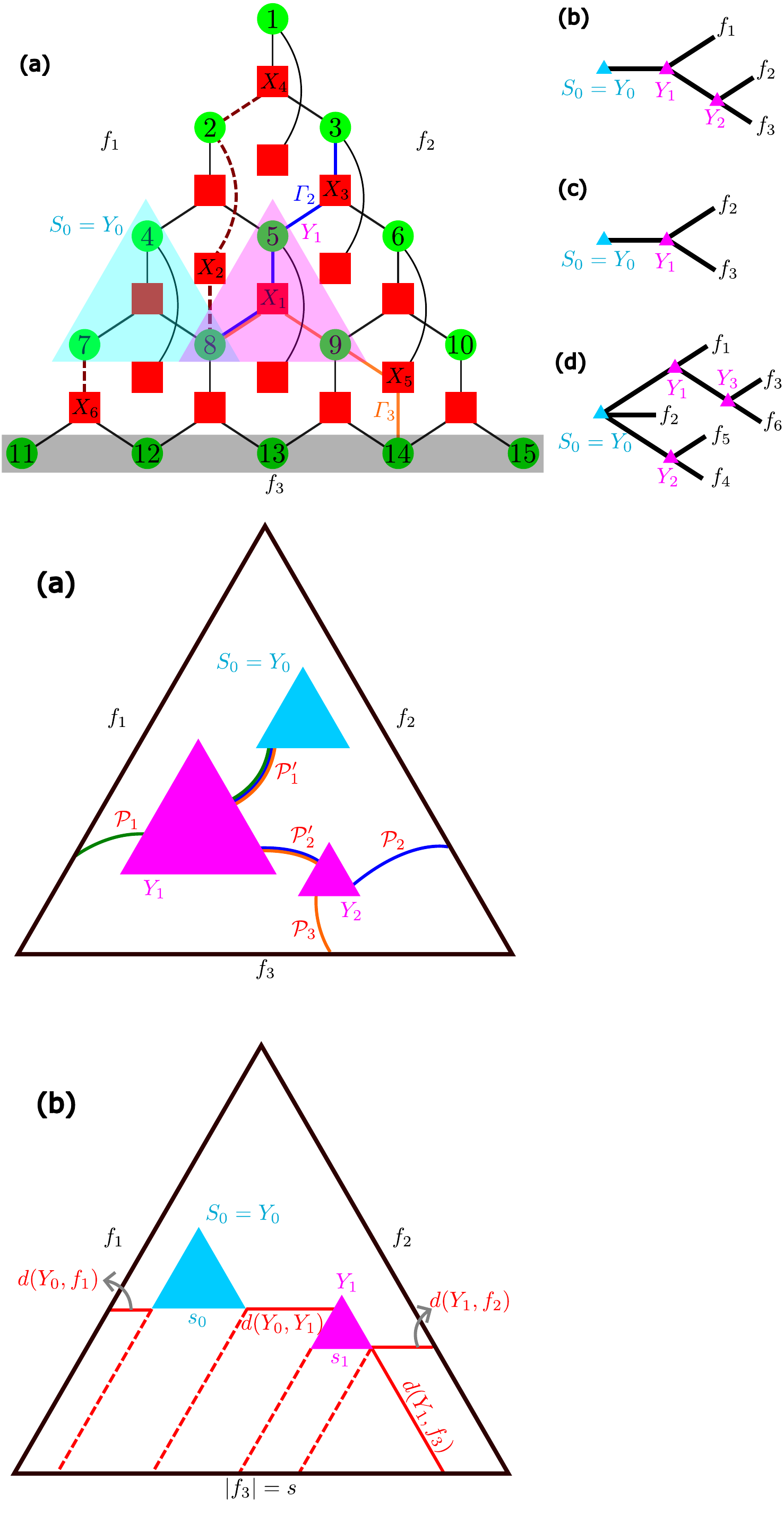}
\caption{\label{fig:skeleton} (a) Example of the factor graph and the irreducible skeleton construction at $d=2$. The lattice sites are numbered green circles composing a triangular lattice. We consider two kinds of factors here represented by red squares: three-site interactions in $H$ supported on the smallest triangles, and six-site interactions on the second smallest triangles. In the factor graph $G$, each factor is connected to all of the lattice sites that support it. In the figure, however, three-site square factors are connected to all of their lattices sites by black lines, while only one connection edge is explicitly shown for each six-site square factor. 
Having constructed the factor graph, we consider operator growth from $S_0=\{4,7,8\}$ (the cyan triangle) to the whole triangle $S=\{1,2,\cdots,15\}$. In particular, we consider a causal tree $T=(S_0,X_1,\cdots, X_6)$ that starts from $S_0$ and reaches the three faces of $S$, i.e., $S(T)=S$. Since $S_0$ already intersects with the face $f_1=\{1,2,4,7,11\}$, one only needs to figure out the unique irreducible paths from $S_0$ to $f_2=\{1,3,6,10,15\}$ and $f_3=\{11,12,\cdots,15\}$. For example, the irreducible path from $S_0$ to $f_2$ is $\varGamma_2=8\rightarrow X_1 \rightarrow 5\rightarrow X_3 \rightarrow 3$ (or equivalently, $6$), shown in blue solid lines. The reason is that $f_2$ is first reached by $X_3$, and because $X_3$ grows from $5$, one needs to figure out how $5$ is first reached: it is reached by $X_1$ that acts nontrivially on $8\in S_0$. Similarly, the irreducible path from $S_0$ to $f_3$ is the orange path $\varGamma_3=8\rightarrow X_1\rightarrow 9\rightarrow X_5\rightarrow 14 (13, 15)$. 
Merging these two paths, we find the ordered irreducible skeleton $\gamma=(X_1^\gamma,X_2^\gamma,X_3^\gamma)=(X_1,X_3,X_5)$ that $T$ belongs. The length is $\ell=3$ The corresponding non-ordered irreducible skeleton is thus $\varGamma=\{X_1,X_3,X_5\}$, which coincidentally, only has $\gamma$ as its ordered counterpart. The other factors $\{X_2,X_4,X_6\}$ in the casual tree $T$, are linked to the skeleton by causality, as shown by brown dashed lines. They are allowed factors in $\LL_1^\gamma, \LL_2^\gamma,\LL_3^\gamma$ respectively, defined in \eqref{eq:Lkgamma}, which are resummed to unitaries inserted between steps of the irreducible skeleton $\gamma$, in \eqref{eq:skeleton}. 
In Subsection \ref{sec:Y}, we make further definitions. $\varGamma$ has two bifurcation factors: $Y_0=S_0$, and $Y_1=X_1$ (pink triangle) where the two paths $\varGamma_2$ and $\varGamma_3$ merge. As for the irreducible paths, $\cP_1$ is trivial (length $l_1=0$) because $S_0$ intersects with $f_1$. $\cP_2$ and $\cP_3$, which start from $Y_1$, are $5\rightarrow X_3\rightarrow 3(6)$ and $9\rightarrow X_5\rightarrow 14(13,15)$ respectively. The paths $\cP'_1, \cP'_2$ connecting $Y_p$s are both trivial.
(b-d) Topological structure of irreducible skeletons. (b-c) are for $2d$, with (c) being the special case that $l_1=0$, as the example in (a). (d) A more complicated example for $5d$. }
\end{figure}

The next crucial step is to realize that the causal trees $T$ can be re-summed in an elegant way by sorting them into equivalence classes based on \emph{irreducible} subsequences of $T$. 
 Given $T$ and its circumscribed simplex $S(T)$, for any of the $d+1$ faces $f_1,f_2,\cdots,f_{d+1}$ of $S(T)$, there is a unique self-avoiding path $\varGamma_p\subseteq T$ that starts from the initial support $S_0$ and ends at the face $f_p$ \cite{chen2021operator}. For the casual tree example in Fig.~\ref{fig:skeleton}(a), the path $\varGamma_1$ is trivial containing only the root $S_0$, while $\varGamma_2=S_0\ni 8\rightarrow X_1 \rightarrow 5\rightarrow X_3 \rightarrow 3\in f_2$ (or equivalently, arriving at $6\in f_2$). 
The $d+1$ paths $\varGamma_p$ form a tree $\varGamma$, of the form
in Fig.~\ref{fig:skeleton}(b) for 2d, which we call the irreducible skeleton $\varGamma(T)$ for the causal tree $T$. Here ``irreducible'' means self-avoiding. Note that some of the branches of the irreducible skeleton may be absent, for example $\varGamma$ looks like Fig.~\ref{fig:skeleton}(c) in $2d$, if $S_0$ already shares one face of $S$. For the casual tree example in Fig.~\ref{fig:skeleton}(a), the irreducible skeleton is \begin{equation}
    S_0\ni 8 \text{ \textemdash}\, X_1 \left\langle \, \begin{aligned}
    &5 \text{ \textemdash}\, X_3 \text{ \textemdash}\, 3\in f_2 \\
    &9 \text{ \textemdash}\, X_5 \text{ \textemdash}\, 14\in f_3 \\
    \end{aligned}\right.
\end{equation}

$\varGamma(T)$ is a well-defined function of $T$, provided that
\begin{prop}\label{prop:tree}
For each causal tree $T$, there is a unique irreducible skeleton $\varGamma$.
\end{prop}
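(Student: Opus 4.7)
The plan is to exploit the rooted tree structure that the causal tree $T$ inherits from the factor graph construction of \cite{chen2021operator}: each factor $X_m$ has a canonical ``causal parent'' in $\{S_0,X_1,\ldots,X_{m-1}\}$, which turns $T$ into a genuine tree rooted at $S_0$. In such a rooted tree, the self-avoiding path from $S_0$ to any given node is unique, so the proof of uniqueness of $\varGamma$ reduces to picking out a canonical endpoint on each face $f_p$ of $S(T)$ and then taking unique tree paths back to the root.

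First I would formalize $T$ as a rooted tree and observe that, because $S(T)$ is by definition the \emph{smallest} circumscribing simplex of $T$, every face $f_p$ must be touched by some factor of $T$: otherwise one could slide the $p$-th face of $S(T)$ inward, contradicting minimality. This makes the index
\[m_p^* := \min\{m\ge 0 : X_m\cap f_p \neq \emptyset\}\]
well-defined, where by convention $X_0:=S_0$. The factor $X_{m_p^*}$ is canonical and unique.

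Next, I would define $\varGamma_p$ as the unique self-avoiding path in the rooted tree $T$ from $S_0$ to $X_{m_p^*}$; the endpoint site may be chosen as any representative of $X_{m_p^*}\cap f_p$, but the sequence of factors traversed is independent of that choice (as in the $\varGamma_2$ example of Fig.~\ref{fig:skeleton}(a), where sites $3$ and $6$ yield the same path). By construction $\varGamma_p\subseteq T$ is self-avoiding and terminates on $f_p$, and its uniqueness is inherited directly from the tree property. Setting $\varGamma:=\bigcup_{p=1}^{d+1}\varGamma_p$ and using the fact that a finite union of root-to-node paths in a rooted tree is itself a connected subtree (closed under the parent map), I obtain a unique subtree of $T$ with the topological shape depicted in Fig.~\ref{fig:skeleton}(b)--(d).

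The main obstacle I anticipate is not the logical core but nailing down conventions inherited from \cite{chen2021operator}: (i) making precise the causal tree structure so that each $X_m$ has a single unambiguous parent when it intersects several earlier elements of $T$, and (ii) treating degenerate branches uniformly, e.g.\ when $S_0$ itself already meets $f_p$ so that $m_p^*=0$ and $\varGamma_p$ collapses to the root, as in Fig.~\ref{fig:skeleton}(c). Both issues should be dispatched by citing the factor-graph construction of \cite{chen2021operator} rather than re-deriving it.
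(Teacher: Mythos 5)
Your proposal is correct and takes essentially the same approach as the paper: identify the causally first factor $X_{m_p^*}$ that touches each face $f_p$ of $S(T)$, then trace back to the root $S_0$ along the causal-parent tree structure, and take the union of the resulting $d+1$ paths. The paper states this more tersely (citing Proposition 2 of \cite{chen2021operator} for uniqueness of each $\varGamma_p$ and describing the backward trace $\tilde{X}_1 \to \tilde{X}_2 \to \cdots$), whereas you spell out the rooted-tree formalism and the well-definedness of $m_p^*$ via minimality of $S(T)$, but the logical content is identical.
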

\begin{proof}
$\varGamma$ is unique because it is composed by the unique irreducible paths $\varGamma_p$ from $S_0$ to $f_p$. For the uniqueness of $\varGamma_p$, see Proposition 2 of \cite{chen2021operator}. Intuitively, for a given face $f_p$, one can first uniquely determine the first factor $\tilde{X}_1$ in $T$ that hits it. Next, there is a unique factor $\tilde{X}_2$ in $T$ that first hits the support of $\tilde{X}_1$. Further tracing back along the casual tree $T$, the unique $\varGamma_p$ is then $S_0\rightarrow\cdots \rightarrow \tilde{X}_2\rightarrow \tilde{X}_1$. 
\end{proof}
As a result, we can define $[\varGamma]\subseteq \mathcal{T}_{S_0}$ as the equivalence class of causal trees that have $\varGamma$ as its irreducible skeleton, and define $\mathbf{\Gamma}(S)$ as the set of all $\varGamma$ that have $S$ as their circumscribed equilateral triangle, with each face of $S$ touched only once by $\varGamma$, and only by its $d+1$ end points. 

Each irreducible skeleton $\varGamma$ corresponds to factors $\lbrace X_1^\varGamma,\cdots, X_\ell^\varGamma\rbrace$, where $\ell = \ell(\varGamma)$ is the ``length'' of $\varGamma$. However, their sequence acting on the operator can vary because different branches of $\varGamma$ can grow ``independently'': the only constraint is causality, that each factor in the irreducible path $\Gamma_p$ to a fixed face must appear in that same order in $\Gamma$.  Thus we define $\gamma=(X_1^\gamma,\cdots, X_\ell^\gamma)$ as the ordered irreducible skeleton, and use $\gamma \in \varGamma$ to mean $\gamma$ is one ordered skeleton of $\varGamma$. Each $\gamma$ thus corresponds to a finer equivalence class $[\gamma]\subseteq \mathcal{T}_{S_0}$ of casual trees $T$. Now we are prepared to give our re-summation lemma.

\begin{lem}The following identities hold:
\begin{align}\label{eq:skeleton}
    \lr{\ee^{t\LL}\OO}_S &= \sum_{\varGamma\in \mathbf{\Gamma}(S)}\lr{\ee^{t\LL}\OO}_\varGamma, \quad \mathrm{where} \nonumber\\ 
    \lr{\ee^{t\LL}\OO}_\varGamma &= \sum_{\gamma \in \varGamma} \int\limits_{\mathbb{T}^\ell(t)} \dd t_1\cdots \dd t_\ell \, \ee^{(t-t_\ell)\LL^\gamma_\ell}\LL_{X^\gamma_\ell}\cdots \ee^{(t_2-t_1)\LL^\gamma_1} \LL_{X^\gamma_1} \ee^{t_1\LL^\gamma_0}\OO,
\end{align}
with $\ell=\ell(\varGamma)=\ell(\gamma)$, and \begin{equation}
    \mathbb{T}^\ell(t) = \{(t_1,\cdots,t_\ell)\in [0,t]^\ell: t_1\le t_2\le \cdots \le t_\ell\}.
\end{equation}
The set of allowed factors between steps $k$ and $k+1$ are included in \begin{align}\label{eq:Lkgamma}
    \LL^\gamma_k = \LL|_{R^\gamma_k} - \sum_{Y\subseteq R^\gamma_k: Y\cap W^\gamma_k\neq \emptyset}\LL_Y,
\end{align}
where \begin{equation}\label{eq:Wk}
    W^\gamma_k = \bigcup_{m=k+2: X^\gamma_m\cap \gamma_k =\emptyset }^{\ell(\gamma)} X^\gamma_{m}, \quad \mathrm{where} \quad \gamma_k := S_0 \cup X^\gamma_1 \cup \cdots \cup X^\gamma_k,
\end{equation}
and \begin{equation}\label{eq:Rk}
    R^\gamma_k = \left\{ \begin{array}{ll}
        S, & k=\ell \\
        S-\bigcup_{p'=p}^{d+1} f_{p'}^\gamma, & \ell_{p-1} \le k=\ell_{p-1} < \ell_p, \quad (p=1,\cdots, d+1) \\
    \end{array} \right.
\end{equation}
where $\gamma$ first reaches the faces in order $(f_1^\gamma,f_2^\gamma,\cdots,f_{d+1}^\gamma)$, by factors $(X^\gamma_{\ell_1},X^\gamma_{\ell_2},\cdots,X^\gamma_{\ell_{d+1}})$ respectively. Here $\ell_0 = 0, \ell_{d+1}=\ell$, and $R^\gamma_k$ is always a simplex.
\end{lem}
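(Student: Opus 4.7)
The plan is to begin from the time-ordered form of the Dyson expansion,
\begin{equation*}
\ee^{t\LL}\OO \;=\; \sum_{n=0}^\infty \;\sum_{(X_1,\ldots,X_n)} \int\limits_{0\le t_1\le\cdots\le t_n\le t}\!\!\dd t_1\cdots\dd t_n\; \LL_{X_n}\cdots\LL_{X_1}\OO,
\end{equation*}
and reorganize the sum over factor sequences by the irreducible skeleton they induce. A given sequence $(X_1,\ldots,X_n)$ contributes nonzero only if it is causal, i.e.\ defines a causal tree $T$ rooted at $S_0$. By Proposition~\ref{prop:tree}, each such $T$ has a unique irreducible skeleton $\varGamma(T)\in\mathbf{\Gamma}(S(T))$. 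Partitioning the set of causal sequences first by $S=S(T)$, then by $\varGamma$, then by the ordered skeleton $\gamma\in\varGamma$ it induces, will yield exactly the triple sum $\sum_S\sum_{\varGamma\in\mathbf{\Gamma}(S)}\sum_{\gamma\in\varGamma}$ on the right side of \eqref{eq:skeleton} and hence the decomposition~\eqref{eq:sumS}.

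Fix $\gamma=(X^\gamma_1,\ldots,X^\gamma_\ell)$ and consider the subfamily of causal sequences in $[\gamma]$. Each such sequence is obtained by inserting, in each of the $\ell+1$ gaps between (and around) the skeleton factors, an arbitrary string of ``bulk'' factors $Y$. The plan is to retain the time labels $t_1<\cdots<t_\ell$ of the skeleton factors as the outer integration variables and to resum the bulk factors in the $k$-th gap into the exponential $\ee^{(t_{k+1}-t_k)\LL^\gamma_k}$ (with $t_0=0$, $t_{\ell+1}=t$). This resummation is standard once the admissible set of bulk factors is correctly identified, because bulk factors commute-past time-ordering into a free Dyson series for the effective generator $\LL^\gamma_k$.

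The core task, and main obstacle, is showing that the admissible bulk factors in gap $k$ are exactly those appearing in $\LL^\gamma_k$ as defined by \eqref{eq:Lkgamma}--\eqref{eq:Rk}. A bulk factor $Y$ inserted in gap $k$ must not alter the irreducible skeleton of the resulting causal tree. Two obstructions can arise. First, $Y$ must act inside the support already reached, $\gamma_k$, shifted by the constraint that it cannot touch any face $f^\gamma_{p'}$ with $p'\ge p$ (where $\ell_{p-1}\le k<\ell_p$); otherwise $Y$, rather than the intended $X^\gamma_{\ell_{p'}}$, would become the first factor hitting $f^\gamma_{p'}$ and the irreducible path to that face would be rerouted through $Y$. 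Proposition~\ref{prop:simp} ensures the remaining region $R^\gamma_k=S\setminus\bigcup_{p'\ge p}f^\gamma_{p'}$ is itself a simplex with the same orientation. Second, $Y$ must not intersect the support of any future skeleton factor $X^\gamma_m$, $m\ge k+2$, whose support is disjoint from $\gamma_k$: such an intersection would make $Y$ a causal parent of $X^\gamma_m$, shortening the irreducible path and producing a different $\varGamma$. Excluding these factors yields the subtraction of terms with $Y\cap W^\gamma_k\neq\emptyset$ in \eqref{eq:Lkgamma}. Verifying that these two conditions are both necessary and sufficient amounts to retracing the inductive construction of $\varGamma(T)$ in the proof of Proposition~\ref{prop:tree}: one must check that inserting any $Y\in\LL^\gamma_k$ preserves $\varGamma(T)=\varGamma$, and conversely that every causal sequence with $\varGamma(T)=\varGamma$ and ordered skeleton $\gamma$ decomposes uniquely in this way.

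Granting this characterization, the remaining steps are bookkeeping: the time integral over the string of bulk factors in each gap resums (by Fubini and the Dyson series for $\ee^{s\LL^\gamma_k}$) to $\ee^{(t_{k+1}-t_k)\LL^\gamma_k}$, producing the integrand in \eqref{eq:skeleton}. Summing over ordered skeletons $\gamma\in\varGamma$ reconstructs $\lr{\ee^{t\LL}\OO}_\varGamma$, summing over $\varGamma\in\mathbf{\Gamma}(S)$ reconstructs $\lr{\ee^{t\LL}\OO}_S$, and the partition of causal trees by circumscribed simplex $S(T)$ matches the decomposition in \eqref{eq:sumS}. The main delicate point, worth stating explicitly in the formal proof, is the bijection between $[\gamma]$-sequences together with the times $t_1<\cdots<t_\ell$ and the integrand on the right-hand side of \eqref{eq:skeleton}; all other manipulations are rearrangements of an absolutely convergent series (convergence being guaranteed by $\norm{H}_\kappa<\infty$ and the bounded number of factors at each lattice site).
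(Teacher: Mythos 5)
Your overall strategy mirrors the paper's: decompose by causal trees, partition them by the ordered irreducible skeleton $\gamma$ they induce, characterize the bulk factors admissible in each gap between skeleton factors, and resum those into exponentials of effective generators. The paper proves a discrete version of this resummation first (its Eq.~\eqref{eq:m0ml}, matching terms of a Taylor expansion on both sides) and then re-exponentiates via Lemma~5 of \cite{chen2021operator}; you work with the Dyson/time-ordered integral form from the start, which is an equivalent rearrangement.

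There is, however, a substantive error in your description of the admissible bulk factors. You state that a bulk factor $Y$ in gap $k$ ``must act inside the support already reached, $\gamma_k$.'' The Lemma's definition \eqref{eq:Lkgamma} requires only $Y \subseteq R^\gamma_k$ with $Y \cap W^\gamma_k = \emptyset$, where $R^\gamma_k = S \setminus \bigcup_{p' \ge p} f^\gamma_{p'}$ is the \emph{entire} circumscribed simplex $S$ with only the not-yet-reached boundary faces deleted --- a vastly larger region than $\gamma_k = S_0 \cup X_1^\gamma \cup \cdots \cup X_k^\gamma$. Bulk factors need not touch $\gamma_k$; they may be causally connected to the evolving operator only through chains of \emph{earlier bulk factors} in the same or preceding gaps. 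If you restrict $Y$ to $\gamma_k$, the resulting generator is strictly smaller than $\LL^\gamma_k$ and the bijection with $[\gamma]$ fails: some causal trees in $[\gamma]$ would never be produced by the expansion. The point is that $\LL^\gamma_k$ need not encode causality explicitly --- $\LL_Y$ annihilates any operator whose support is disjoint from $Y$, so non-causal insertions contribute nothing automatically. When you write the formal argument you must isolate exactly three exclusion conditions: (i)~$Y \not\subseteq S$ (changes the circumscribed simplex), (ii)~$Y$ touches an unreached face $f^\gamma_{p'}$, $p'\ge p$ (reroutes the irreducible path), (iii)~$Y$ touches $W^\gamma_k$ (becomes a causal parent of a later skeleton factor); (ii) and (iii) are what \eqref{eq:Lkgamma} subtracts, and (i) is subsumed since $R^\gamma_k \subseteq S$.

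Two smaller points. The statement that $R^\gamma_k$ is a simplex is elementary lattice geometry (removing parallel boundary layers from a fixed-orientation simplex yields a smaller one), not a consequence of Proposition~\ref{prop:simp}, which is a Manhattan-distance identity. And the ``delicate point'' you flag at the end does need to be spelled out in both directions, as the paper does: first, that expanding $\LL^\gamma_k$ into factors produces only causal trees with ordered skeleton $\gamma$; second, that every $T \in [\gamma]$ appears exactly once. Your sketch only gestures at the second half.
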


\begin{proof}
Define \begin{equation}\label{eq:O_Gamma}
    \lr{\ee^{t\LL}\OO}_\varGamma := \sum_{T\in [\varGamma]} \frac{t^n}{n!} \LL_{X_n}\cdots\LL_{X_1} \OO = \sum_{\gamma\in \varGamma} \sum_{T\in [\gamma]} \frac{t^n}{n!} \LL_{X_n}\cdots\LL_{X_1} \OO.
\end{equation}
The first equality in \eqref{eq:skeleton} holds because each sequence $T$ is a causal tree and by Proposition \ref{prop:tree} each $T$ belongs to a unique equivalence class. Now we need to verify the second equality of (\ref{eq:skeleton}). To do so, we first prove \begin{equation}\label{eq:m0ml}
    \sum_{T\in [\gamma]} \frac{t^n}{n!} \LL_{X_n}\cdots\LL_{X_1} \OO = \sum_{m_0,\cdots,m_\ell=0}^\infty \frac{t^{\ell+m_0+\cdots+m_\ell} }{\lr{\ell+m_0+\cdots+m_\ell}!} \lr{\LL^\gamma_\ell}^{m_\ell}\LL_{X^\gamma_\ell}\cdots \lr{\LL^\gamma_1}^{m_1} \LL_{X^\gamma_1} \lr{\LL^\gamma_0}^{m_0}\OO,
\end{equation}
by identifying terms on the two sides, as a generalization of Lemma 4 in \cite{chen2021operator}. Observe that when expanding $\LL^\gamma_k$ into factors using \eqref{eq:Lkgamma}, each term on the right hand side forms a casual tree $T$ that has $\gamma$ as its ordered irreducible skeleton. Indeed, before acting with $\LL_{X^\gamma_{k+1}}$, all previous factors included in $\LL^\gamma_{0},\cdots,\LL^\gamma_{k}$ do not touch $W^\gamma_k$, 
the future skeleton factors that have not been touched by the previous skeleton factors $S_0,X_1^\gamma,\cdots,X_k^\gamma$. Thus when tracing back through $T$, the irreducible skeleton has to pass the skeleton factors $X^\gamma_k$, not any of the factors contained in $\LL^\gamma_k$. To summarize, each term on the right hand side, which is a casual tree $T\in [\gamma]$, appears on the left hand side with an identical prefactor due to $n=\ell+m_0+\cdots+m_\ell$, and appears only once. 

\eqref{eq:m0ml} then holds if the left hand side contains no more terms, i.e., each $T\in [\gamma]$ can be expressed as one term on the right hand side (when expanding $\LL_k^\gamma$ to factors). To prove this statement, consider all possible ``candidates'' that may have $\gamma$ as its ordered irreducible skeleton. They must belong to terms in \begin{equation}
    \lr{\LL}^{m_\ell}\LL_{X^\gamma_\ell}\cdots \lr{\LL}^{m_1} \LL_{X^\gamma_1} \lr{\LL}^{m_0}\OO,
\end{equation}
with $m_0,\cdots,m_\ell$ being non-negative integers. However, not all terms in each $\LL$ are allowed, because they may modify the irreducible skeleton to another one rather than $\gamma$. For each $(\LL)^{m_k}$ before acting with $\LL_{X^\gamma_{k+1}}$, there are three kinds of them. First, any factor not contained in $S$ will make the irreducible skeleton different from $\gamma$, because it will not have $S$ as its circumscribed simplex. Second, any factor not contained in $R^\gamma_k$ defined in \eqref{eq:Rk} will also modify the irreducible skeleton, because it touches one of the faces $f_p,\cdots,f_{d+1}$ of $S$ that ought to be touched at the first time by later skeleton factors $X^\gamma_{k+1},\cdots,X^\gamma_{\ell}$. Third, any factor $Y$ that touches $W^\gamma_k$ in \eqref{eq:Wk} is also not allowed. Suppose $Y$ touches $X_{k+3}^\gamma$ that does not overlap with $\gamma_k$: $X_{k+3}^\gamma\cap \gamma_k = \emptyset$. Then when tracing backwards the casual tree to find the irreducible skeleton, $X_{k+3}^\gamma$ should be traced back to $Y$ instead of some factor in $S_0,X_1^\gamma,\cdots,X_{k+2}^\gamma$, making the irreducible skeleton different than $\gamma$. Indeed, these are the only three cases that are avoided in \eqref{eq:Lkgamma}, so any $T\in [\gamma]$ must appear on the right hand side of \eqref{eq:m0ml}.

As a result, we have constructed an one-to-one correspondence between the terms on the two sides of \eqref{eq:Lkgamma}, so the two sides are equal given that the prefactors always agree. We then ``exponentiate'' the right hand side using Lemma 5 in \cite{chen2021operator}, to get \begin{equation}\label{eq:exponentiate}
    \sum_{T\in [\gamma]} \frac{t^n}{n!} \LL_{X_n}\cdots\LL_{X_1} \OO = \int_{\mathbb{T}^\ell(t)} \dd t_1\cdots \dd t_\ell \, \ee^{(t-t_\ell)\LL^\gamma_\ell}\LL_{X^\gamma_\ell}\cdots \ee^{(t_2-t_1)\LL^\gamma_1} \LL_{X^\gamma_1} \ee^{t_1\LL^\gamma_0}\OO.
\end{equation}
\eqref{eq:skeleton} then follows from \eqref{eq:O_Gamma} and \eqref{eq:exponentiate}.
\end{proof}

Taking the operator norm of \eqref{eq:skeleton} using $\norm{\LL_X \OO} \le 2\norm{H_X} \norm{\OO}$, we obtain:
\begin{cor}
The operator norm of \eqref{eq:skeleton} is bounded by \begin{equation}\label{eq:NGamma}
    \norm{\lr{\ee^{t\LL}\OO}_S} \le \norm{\OO} \sum_{\varGamma\in \mathbf{\Gamma}(S)} \frac{(2|t|)^\ell}{\ell!}N(\varGamma) \prod_{k=1}^\ell \norm{H_{X^\varGamma_k}},
\end{equation}
where $N(\varGamma)$ is the number of ordered $\gamma$ that corresponds to $\varGamma$. 
\end{cor}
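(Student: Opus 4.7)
The plan is to pass the operator norm inside the two sums in \eqref{eq:skeleton} and bound each integrand pointwise on $\mathbb{T}^\ell(t)$, exploiting the fact that every intermediate evolution $\ee^{s\LL^\gamma_k}$ is unitary conjugation. First, apply the triangle inequality to $\lr{\ee^{t\LL}\OO}_S = \sum_{\varGamma \in \mathbf{\Gamma}(S)}\lr{\ee^{t\LL}\OO}_\varGamma$ and then to the inner sum $\sum_{\gamma\in \varGamma}$, so it suffices to bound, uniformly in $(t_1,\dots,t_\ell)\in \mathbb{T}^\ell(t)$, a single term
\begin{equation*}
    \ee^{(t-t_\ell)\LL^\gamma_\ell}\LL_{X^\gamma_\ell}\cdots \ee^{(t_2-t_1)\LL^\gamma_1}\LL_{X^\gamma_1}\ee^{t_1\LL^\gamma_0}\OO
\end{equation*}
in operator norm, and then integrate.

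The crucial observation is that each $\LL^\gamma_k$ defined in \eqref{eq:Lkgamma} equals $\ii[\tilde H^\gamma_k,\,\cdot\,]$ for a Hermitian operator $\tilde H^\gamma_k$, since it is a partial sum of Hermitian local terms $H_Y$ (the constraints in \eqref{eq:Lkgamma} only restrict which $Y$ are included; they do not break Hermiticity). Consequently each $\ee^{s\LL^\gamma_k}$ acts by conjugation with a unitary and leaves $\norm{\cdot}$ invariant. This is precisely the payoff of the re-summation in \eqref{eq:skeleton}: the many internal rotations (the yellow rectangles in Fig.~\ref{fig:d1}) that would have appeared in a naive Taylor expansion are absorbed into unitaries and cost nothing in the bound. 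What remains are the $\ell$ ``boundary'' commutators $\LL_{X^\gamma_k}$, each of which multiplies the operator norm by at most $2\norm{H_{X^\gamma_k}}$ via the standard estimate $\norm{[H_X,\OO']}\le 2\norm{H_X}\norm{\OO'}$.

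Combining these pointwise estimates gives a bound $2^\ell\,\lr{\prod_{k=1}^\ell\norm{H_{X^\gamma_k}}}\norm{\OO}$ that is independent of the time variables. Integrating over the ordered simplex $\mathbb{T}^\ell(t)$, whose volume is $|t|^\ell/\ell!$, yields $\tfrac{(2|t|)^\ell}{\ell!}\prod_{k=1}^\ell\norm{H_{X^\gamma_k}}\cdot \norm{\OO}$ for each admissible ordering $\gamma$. Since the \emph{unordered} multiset of skeleton factors depends only on $\varGamma$, we have $\prod_k\norm{H_{X^\gamma_k}}=\prod_k\norm{H_{X^\varGamma_k}}$, and summing over the $N(\varGamma)$ admissible orderings produces exactly \eqref{eq:NGamma}. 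There is no genuine obstacle here beyond verifying the Hermiticity of $\tilde H^\gamma_k$; the corollary is otherwise a direct consequence of the resolution \eqref{eq:skeleton} established in the preceding lemma.
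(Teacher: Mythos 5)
Your proposal is correct and follows the same route the paper takes: the paper's one-line justification (``Taking the operator norm of \eqref{eq:skeleton} using $\norm{\LL_X\OO}\le 2\norm{H_X}\norm{\OO}$'') implicitly relies on exactly the observations you spell out — that each $\ee^{s\LL^\gamma_k}$ is a unitary conjugation (hence norm-preserving), that each $\LL_{X^\gamma_k}$ costs a factor $2\norm{H_{X^\gamma_k}}$, and that integrating over $\mathbb{T}^\ell(t)$ gives the volume $|t|^\ell/\ell!$. Your verification of the Hermiticity of the generator of $\LL^\gamma_k$ is a useful detail that the paper leaves implicit.
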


\subsection{Proof of Proposition \ref{prop:LO-O} (restatement)}
\begin{proof}[Proof of Proposition \ref{prop:LO-O} (restatement)]
We assume $t\ge 0$ without loss of generality in the remainder of this section. To bound $\norm{\ee^{t\LL}\OO-\OO}_{\kappa'}$,
we choose the local decomposition for $\ee^{t\LL}\OO$ in the previous section, namely 
\begin{equation}\label{eq:S0S}
    \lr{\ee^{t\LL}\OO}_S = \sum_{S_0\subseteq S}\lr{\ee^{t\LL}\OO_{S_0}}_S,
\end{equation}
where each term can be further expanded to irreducible skeletons \eqref{eq:skeleton}. The decomposition $\OO=\sum_{S_0}\OO_{S_0}$ is chosen as the optimal one that realizes $\norm{\OO}_\kappa$. The bound for this decomposition would immediately lead to a bound on the infimum \eqref{eq:norm} over all possible decompositions of $\norm{\ee^{t\LL}\OO-\OO}_{\kappa'}$.
We divide the terms in \eqref{eq:S0S} into two cases: $S_0=S$ for case $1$ and $S_0\neq S$ for case 2. We will further divide case 2 to 2A and 2B. We prove this Proposition \ref{prop:LO-O} by showing that the contribution to $\norm{\ee^{t\LL}\OO - \OO}_{\kappa',i}$ from each of the three cases, is bounded by the form of the right hand side of \eqref{eq:LO-O} that does not depend on $i$. Recall that $\norm{\ee^{t\LL}\OO - \OO}_{\kappa',i}$ defined in \eqref{eq:kappai}, probes the $\kappa'$-norm of $\ee^{t\LL}\OO-\OO$ at site $i\in \Lambda$. The bound \eqref{eq:LO-O} for $\norm{\ee^{t\LL}\OO - \OO}_{\kappa'}$ then follows by a trivial maximization over $i$. We begin with the simplest case 1.

\textbf{Case 1:} If $S_0=S$, we have \begin{equation}\label{eq:LOSS}
    \lr{\ee^{t\LL}\OO_{S}}_S = \ee^{t\LL|_S}\OO_S,
\end{equation}
where \begin{equation}
    \LL|_S:=\sum_{X\subseteq S} \LL_X.
\end{equation}
The corresponding local terms $H_X$ are chosen as the optimal decomposition that realizes $\norm{H}_\kappa$.
\eqref{eq:LOSS} cancels $\OO_S$ at zeroth order of $t$: \begin{align}\label{eq:LO-O<HO}
    \norm{\lr{\ee^{t\LL}\OO_{S}}_S - \OO_S} = \norm{\ee^{t\LL|_S}\OO_S-\OO_S} \le 2t\norm{H|_S}\norm{\OO_S} \le 2t\norm{\OO_S} \sum_{j\in S}\sum_{X\ni j} \norm{H_X} \le 2t\norm{\OO_S}|S|\norm{H}_{\kappa},
\end{align}
which follows the treatment in \eqref{eq:AOA-O<}.

An operator $\lr{\ee^{t\LL}\OO_{S}}_S - \OO_S$ can contribute to $\norm{\ee^{t\LL}\OO - \OO}_{\kappa',i}$ only if $S\ni i$, so the total contribution of such ``local rotations'' (case 1 terms in \eqref{eq:S0S}) to $\norm{\ee^{t\LL}\OO - \OO}_{\kappa',i}$ is \begin{align}\label{eq:S0=S}
    \sum_{S\ni i}\norm{\lr{\ee^{t\LL}\OO_{S}}_S - \OO_S}\ee^{\kappa'(\mathrm{diam} S)^\alpha} &\le 2t\norm{H}_{\kappa}\sum_{S\ni i} |S|\norm{\OO_S}\ee^{\kappa'(\mathrm{diam} S)^\alpha} \nonumber\\
    &\le 2c_{\mathrm{vol}} t\norm{H}_{\kappa}\sum_{S\ni i}(\delta\kappa)^{-d/\alpha} \ee^{\delta \kappa(\mathrm{diam} S)^\alpha} \norm{\OO_S}\ee^{\kappa'(\mathrm{diam} S)^\alpha} \nonumber\\ &= 2c_{\mathrm{vol}} (\delta\kappa)^{-d/\alpha} t\norm{H}_{\kappa}\norm{\OO}_\kappa.
\end{align}
Here we have used \eqref{eq:LO-O<HO} and \eqref{eq:cd}: \begin{equation}\label{eq:vol<d}
    |S| \le c_d (1+\mathrm{diam} S)^d \le c_{\mathrm{vol}}(\delta\kappa)^{-d/\alpha} \ee^{\delta \kappa(\mathrm{diam} S)^\alpha},
\end{equation}
for some constant $c_{\mathrm{vol}}$ that only depends on $d$ and $\alpha$. We also used the optimality of decomposition $\OO_S$ in the final step. Therefore, the case 1 contribution \eqref{eq:S0=S} to $\norm{\ee^{t\LL}\OO - \OO}_{\kappa',i}$ is bounded by the right hand side of \eqref{eq:LO-O}, since $\delta \kappa$ is bounded from above by an O(1) constant, so we can always make the power of $\delta \kappa$ more negative: \begin{equation}
c_{\mathrm{vol}} (\delta \kappa)^{-d/\alpha} \le c_{\mathrm{vol}} \left(\frac{\kappa_1}{\delta \kappa}\right)^{\frac{d-1}{\alpha}}(\delta \kappa)^{-d/\alpha} = c^\prime_{\mathrm{vol}} (\delta \kappa)^{-\frac{2d-1}{\alpha}}.
\end{equation}

\textbf{Separate case 2 to 2A and 2B:}
For a given $\OO_{S_0}$, define its contribution $K_{\kappa',i}(\OO_{S_0})$ to $\norm{\ee^{t\LL}\OO - \OO}_{\kappa',i}$ that is beyond the local rotation above, as  
\begin{align}\label{eq:KS0}
    K_{\kappa',i}(\OO_{S_0}) :=& \sum_{S\ni i: S_0 \subsetneqq S} \norm{\lr{\ee^{t\LL}\OO_{S_0}}_S} \ee^{\kappa'(\mathrm{diam}S)^\alpha} = \sum_{S\ni i: S_0 \subsetneqq S} \norm{\lr{\ee^{t\LL}\OO_{S_0}}_S} \ee^{\lr{\kappa''-\frac{\delta\kappa}{2}}(\mathrm{diam}S)^\alpha} \nonumber\\
    \le& \ee^{-\frac{\delta\kappa}{2}\mathsf{d}^\alpha(i,S_0)} \sum_{S: S_0 \subsetneqq S} \norm{\lr{\ee^{t\LL}\OO_{S_0}}_S} \ee^{\kappa''(\mathrm{diam}S)^\alpha},
\end{align}
where we have defined \begin{equation}\label{eq:kappa''}
    \kappa''=\kappa'+\frac{\delta\kappa}{2} = \kappa-\frac{\delta\kappa}{2},
\end{equation}
and used Markov inequality like \eqref{eq:markov} with $\mathrm{diam}S \ge \mathsf{d}(i,S_0)$, since $\{i\}\cup S_0\subseteq S$. We have also relaxed the restriction that $S$ must contain $i$, since the prefactor $\ee^{-\frac{\delta\kappa}{2}\mathsf{d}^\alpha(i,S_0)}$ already suppresses contributions from $S_0$ that are faraway from $i$.
According to the irreducible skeleton expansion \eqref{eq:skeleton}, \eqref{eq:KS0} becomes \begin{align}\label{eq:K123}
    \ee^{\frac{\delta\kappa}{2}\mathsf{d}^\alpha(i,S_0)} K_{\kappa',i}(\OO_{S_0}) &\le \sum_{\ell=1}^\infty \sum_{\varGamma:\ell(\varGamma)=\ell} \norm{\lr{\ee^{t\LL}\OO_{S_0}}_\varGamma}\ee^{\kappa''(\mathrm{diam}S(\varGamma))^\alpha} \equiv \sum_{\ell=1}^\infty K_\ell(\OO_{S_0}),
\end{align}
where $K_\ell(\OO_{S_0})$ contains contributions from irreducible skeletons of length $\ell$. Note that $K_\ell(\OO_{S_0})$ no longer cares about whether $i$ is reached or not. We further separate \eqref{eq:K123} to the sum of $K_1(\OO_{S_0})$ and $\sum_{\ell=2}^\infty K_\ell(\OO_{S_0})$, whose contributions to $\norm{\ee^{t\LL}\OO - \OO}_{\kappa',i}$ are called case 2A and 2B respectively.

\textbf{Case 2A:}
We compute $K_1(\OO_{S_0})$ using \eqref{eq:NGamma}, where each irreducible skeleton $\varGamma$ corresponds to a single factor $X$ that grows $S_0$ to a strictly larger simplex $S(X)$ so that $N(\varGamma)=1$. \begin{align}\label{eq:K1<}
    K_1(\OO_{S_0}) &= \sum_{\varGamma:\ell(\varGamma)=1} \norm{\lr{\ee^{t\LL}\OO_{S_0}}_\varGamma}\ee^{\kappa''(\mathrm{diam}S(\varGamma))^\alpha} \nonumber\\
    &\le 2t\norm{\OO_{S_0}} \sum_{X: X\cap S_0\neq \emptyset, X\cap S^c_0\neq \emptyset} \norm{H_X}\ee^{\kappa''(\mathrm{diam}S(X))^\alpha} \nonumber\\
    &\le 2t\norm{\OO_{S_0}} \sum_{X: X\cap S_0\neq \emptyset, X\cap S^c_0\neq \emptyset} \norm{H_X}\ee^{\kappa''\mlr{(\mathrm{diam}S_0)^\alpha + (\mathrm{diam}X)^\alpha}} \nonumber\\
    &\le 2t\norm{\OO_{S_0}} \sum_{j\in \partial S_0} \sum_{X\ni j} \norm{H_X}\ee^{\kappa''\mlr{(\mathrm{diam}S_0)^\alpha + (\mathrm{diam}X)^\alpha}} \nonumber\\
    &\le 2t\norm{\OO_{S_0}} |\partial S_0| \norm{H}_{\kappa} \ee^{\kappa''(\mathrm{diam}S_0)^\alpha}.
\end{align}
Here in the third line, we have used \eqref{eq:a+b<} with $\mathrm{diam}S(X) \le \mathrm{diam}S_0+\mathrm{diam}X$. In the fourth line, we have used the fact that every $X$ that grows $S_0$ must contain at least one vertex $j$ in the boundary $\partial S_0$.

We then bound the contribution from all $K_1(\OO_{S_0})$ to $\norm{\ee^{t\LL}\OO - \OO}_{\kappa',i}$, following the same strategy of \eqref{eq:<K+K}. Namely, all $\OO_{S_0}$ are grouped by their distance $x$ to the site $i$, and the site $j\in S_0$ that has distance $x$ to $i$. Using \eqref{eq:K1<}, the contribution is \begin{align}\label{eq:2A}
    \sum_{\OO_{S_0}} \ee^{-\frac{\delta\kappa}{2}\mathsf{d}^\alpha(i,S_0)} K_1(\OO_{S_0}) &\le \sum_{x=0}^\infty \ee^{-\frac{\delta\kappa}{2}x^\alpha} \sum_{j: \mathsf{d}(i,j)=x} \sum_{\OO_{S_0}}^{j,x} 2t\norm{\OO_{S_0}}|\partial S_0| \norm{H}_{\kappa}  \ee^{\kappa''(\mathrm{diam}S_0)^\alpha} \nonumber\\
    &\le 2c_1 t \lr{\delta\kappa}^{-\frac{d-1}{\alpha}} \norm{H}_{\kappa} \sum_{x=0}^\infty \ee^{-\frac{\delta\kappa}{2}x^\alpha} \sum_{j: \mathsf{d}(i,j)=x} \sum_{S_0\ni j} \norm{\OO_{S_0}}\ee^{\frac{\delta\kappa}{2}(\mathrm{diam}S_0)^\alpha} \ee^{\kappa''(\mathrm{diam}S_0)^\alpha} \nonumber\\
    &\le 2c_1 t \lr{\delta\kappa}^{-\frac{d-1}{\alpha}} \norm{H}_{\kappa} \sum_{x=0}^\infty \ee^{-\frac{\delta\kappa}{2}x^\alpha} \sum_{j: \mathsf{d}(i,j)=x} \norm{\OO}_\kappa \nonumber\\
    &\le 2c_1 t \lr{\delta\kappa}^{-\frac{d-1}{\alpha}} \norm{H}_{\kappa}\norm{\OO}_\kappa \sum_{x=0}^\infty c_\mathsf{d}(2x+1)^{d-1}\ee^{-\frac{\delta\kappa}{2}x^\alpha} \nonumber\\
    &\le c_{\mathrm{2A}} t \lr{\delta\kappa}^{-\frac{2d-1}{\alpha}} \norm{H}_{\kappa}\norm{\OO}_\kappa,
\end{align}
bounded by the form of the right hand side of \eqref{eq:LO-O}.
Here in the second line, we have used the surface analog of \eqref{eq:vol<d} with numerical factors adjusted: \begin{equation}\label{eq:partial<}
    |\partial S_0| \le c_d (1+\mathrm{diam} S_0)^{d-1} \le c_1(\delta\kappa)^{-\frac{d-1}{\alpha}} \ee^{\frac{\delta \kappa}{2}(\mathrm{diam} S_0)^\alpha}.
\end{equation}
In the third line we have used \eqref{eq:kappa''} and the $\kappa$-norm of $\OO$. 
\eqref{eq:cLam} is used in the fourth line. The final sum over $x$ in \eqref{eq:2A} is bounded by transforming to an integral, and rescaling $x\rightarrow (\delta\kappa)^{-1/\alpha}x$, so that \begin{equation}\label{eq:sumx}
    \sum_{x=0}^\infty (2x+1)^{d-1}\ee^{-\frac{\delta\kappa}{2}x^\alpha} \le \frac{c_{\mathrm{2A}} }{2c_1 c_d} \lr{\delta\kappa}^{-d/\alpha},
\end{equation}
for a constant $c_{\mathrm{2A}}$ determined by $d,\alpha$ and $\kappa_1$.



\textbf{Case 2B:}
It is more tedious to find higher orders $K_\ell(\OO_{S_0})$, because each irreducible skeleton has many branches. Fortunately, the assumption \eqref{eq:tH<C} simplifies the analysis by cancelling powers of $\delta\kappa$ from geometrical factors like $|S|$ and $|\partial S|$. This is achieved in the following Lemma, whose proof is the goal of later subsections.
\begin{lem}\label{lem:K<K1}
If \eqref{eq:tH<C} holds, then
there exists a constant $c_K$ depending on $d,\alpha$ and $\kappa_1$, such that
\begin{equation}\label{eq:K2+<}
    \sum_{\ell=2}^\infty K_\ell(\OO_{S_0}):=\sum_{\ell=2}^\infty \sum_{\varGamma:\ell(\varGamma)=\ell} \norm{\lr{\ee^{t\LL}\OO_{S_0}}_\varGamma}\ee^{\kappa''(\mathrm{diam}S(\varGamma))^\alpha} \le c_K  t\norm{\OO_{S_0}} \norm{H}_{\kappa} \ee^{\kappa(\mathrm{diam}S_0)^\alpha}.
\end{equation}
\end{lem}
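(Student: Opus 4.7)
The plan is to decompose each irreducible skeleton $\varGamma$ according to its tree topology and bound the resulting sums path-by-path, using Proposition~\ref{prop:simp} to convert traversed distances into the exponential weight $\ee^{\kappa''(\mathrm{diam}S(\varGamma))^\alpha}$, and the smallness assumption \eqref{eq:tH<C} to turn the $\ell\ge 2$ sum into a single power of $t\|H\|_\kappa$. Since $\varGamma$ is a tree with $d+1$ leaves (one on each face $f_p$ of $S(\varGamma)$) rooted at $S_0$, it contains at most $d$ internal bifurcation factors $Y_0=S_0,Y_1,\dots,Y_m$ ($m\le d$). Removing these bifurcations cuts $\varGamma$ into linear irreducible paths $\cP_p$ ending on the faces $f_p$ and paths $\cP'_q$ connecting consecutive bifurcation factors. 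For fixed $d$ there are only $\mathrm{O}(1)$ topological types of bifurcation arrangements, so I may fix one type and multiply by an $\mathrm{O}(1)$ combinatorial factor at the end. Each skeleton of length $\ell$ is then parametrized by the lengths $l_j$ of its constituent paths (summing to $\ell$ up to the $\mathrm{O}(d)$ bifurcation factors), by the lattice locations of the bifurcation factors, and by the specific factor chosen at each step along each path.

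The per-$\ell$ estimate then proceeds in three main steps, starting from \eqref{eq:NGamma}. First, the combinatorial factor $N(\varGamma)$, which counts orderings of the $\ell$ factors respecting causality along each branch, combines with the $1/\ell!$ to yield a multinomial $1/\prod_j l_j!$. Second, for each linear path $\cP$ I sum over factor choices along it, weighted by $\prod_k\|H_{X_k}\|\le\|H\|_\kappa^{l_j}\,\ee^{-\kappa\sum_k(\mathrm{diam}X_k)^\alpha}$, using the Markov-inequality trick from \eqref{eq:vol<d}--\eqref{eq:partial<} to trade polynomial volume factors for weak exponential decay; this yields a per-path bound of the form $\bigl(C\,t\|H\|_\kappa(\delta\kappa)^{-c_0/\alpha}\bigr)^{l_j}/l_j!$ times a residual factor-size decay, where $c_0$ depends only on $d$. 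Third, Proposition~\ref{prop:simp} gives $\sum_p\mathsf{d}(S_0,f_p)=\mathrm{diam}S(\varGamma)-\mathrm{diam}S_0$, and the subadditivity \eqref{eq:a+b<} applied iteratively along each path yields $\sum_k(\mathrm{diam}X_k^\varGamma)^\alpha\ge(\mathrm{diam}S(\varGamma))^\alpha-(\mathrm{diam}S_0)^\alpha$; consequently $\ee^{\kappa''(\mathrm{diam}S(\varGamma))^\alpha}$ is dominated by $\ee^{\kappa''(\mathrm{diam}S_0)^\alpha}\cdot\ee^{\kappa''\sum_k(\mathrm{diam}X_k)^\alpha}$, leaving a surplus of $\ee^{-(\delta\kappa/2)\sum_k(\mathrm{diam}X_k)^\alpha}$ per factor for absorbing the individual diameters.

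The main obstacle is carefully bookkeeping the accumulated powers of $(\delta\kappa)^{-1}$. Each sum over a bifurcation location or over the endpoint of a path introduces a factor of at most $(\delta\kappa)^{-d/\alpha}$ via the standard volume-to-exponential trade used in Case 2A; with $\mathrm{O}(d)$ paths and $\mathrm{O}(d)$ bifurcations, the aggregate exponent does not exceed $(3d+1)/\alpha$, which is precisely the threshold allowed by the hypothesis \eqref{eq:tH<C}. With this margin, summing over $l_j\ge 0$ with $\sum_j l_j\ge 2$ (and then over $\ell$) produces a geometric series dominated by its $\ell=2$ term, yielding the desired single power of $t\|H\|_\kappa$. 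A subtle point worth flagging is that the concavity step relating $\sum_k(\mathrm{diam}X_k)^\alpha$ to a total geometric distance works cleanly only \emph{along a single linear path}, which is why the decomposition of $\varGamma$ into its component paths (rather than treating $\varGamma$ as an unstructured set of factors) is essential. Once these pieces are assembled and constants collected, the bound \eqref{eq:K2+<} follows with $c_K$ depending only on $d,\alpha,\kappa_1$.
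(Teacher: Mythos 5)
Your high-level strategy mirrors the paper's: cut each skeleton $\varGamma$ at its bifurcation factors into linear irreducible paths, use the combinatorial identity $N(\varGamma)/\ell!\le\prod_j 1/l_j!$ to factorize over paths, bound traversed distances via Proposition \ref{prop:simp} and the subadditivity \eqref{eq:a+b<}, and use the hypothesis \eqref{eq:tH<C} to absorb accumulated powers of $(\delta\kappa)^{-1}$. This is the right overall architecture.

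However, there is a genuine gap in your treatment of the per-path sum. You claim that summing over factor choices along a single linear path of length $l_j$ yields a bound of the form $\bigl(C\,t\|H\|_\kappa(\delta\kappa)^{-c_0/\alpha}\bigr)^{l_j}/l_j!$, and that only the endpoints and bifurcation locations contribute $(\delta\kappa)^{-d/\alpha}$ factors, not the intermediate steps. But this is precisely what needs justifying, and it does not follow from the Markov-trade \eqref{eq:vol<d}--\eqref{eq:partial<} alone. When you sum over all length-$l_j$ irreducible paths between two sets, at each internal step you must sum over the intermediate boundary site that the next factor touches; naively, each such site-sum costs a polynomial volume factor, i.e.\ $(\delta\kappa)^{-\mathrm{O}(d)/\alpha}$ \emph{per step along the path}. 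That would accumulate $l_j$-fold and the sum over $l_j$ would not converge to a single power of $t\|H\|_\kappa$. The paper avoids this by proving that the stretched-exponential kernel $\mathcal{K}(r)=\ee^{-\kappa r^\alpha}$ (with $\alpha<1$) is \emph{reproducing}, $\sum_m \mathcal{K}(\mathsf{d}(i,m))\mathcal{K}(\mathsf{d}(m,j))\le\lambda\,\mathcal{K}(\mathsf{d}(i,j))$ with $\lambda=c_\lambda\kappa^{-(d+1)/\alpha}$, and then iterating this along the path (Proposition \ref{prop:reproducing}). This gives a fixed cost $\lambda$ per internal step, leading to the explicit bound $\sum_l C_l(Z,Z')\le\min(|\partial Z|,|\partial Z'|)\ee^{-\kappa\mathsf{d}^\alpha(Z,Z')}\lambda^{-1}(\ee^{2\lambda t\|H\|_\kappa}-1)$, which is then controlled by $\lambda t\|H\|_\kappa=\mathrm{O}(1)$ under \eqref{eq:tH<C}. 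Your proposal omits this mechanism entirely, so the key per-path estimate is unjustified as stated. Incidentally, the inequality you write, $\prod_k\|H_{X_k}\|\le\|H\|_\kappa^{l_j}\ee^{-\kappa\sum_k(\mathrm{diam}X_k)^\alpha}$, is not a meaningful per-path bound since $\|H\|_\kappa$ is already the result of a sum; you would need to spell out the iterated site-by-site summation for it to become a real argument, and that is exactly where the reproducing-kernel step is needed.
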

Comparing to \eqref{eq:K1<}, \eqref{eq:K2+<} no longer has the surface factor $|\partial S_0|$. As a result, the contribution to $\norm{\ee^{t\LL}\OO - \OO}_{\kappa',i}$ in this 2B case is bounded by the same procedure of \eqref{eq:2A}: \begin{equation}\label{eq:2B}
    \sum_{\OO_{S_0}} \ee^{-\frac{\delta\kappa}{2}\mathsf{d}^\alpha(i,S_0)} \sum_{\ell=2}^\infty K_\ell(\OO_{S_0}) \le c_{\mathrm{2B}} t \lr{\delta\kappa}^{-\frac{d}{\alpha}} \norm{H}_{\kappa}\norm{\OO}_\kappa,
\end{equation}
which is smaller than \eqref{eq:2A} by a factor $(\delta\kappa)^{\frac{d-1}{\alpha}}$. In fact, the exponent in \eqref{eq:tH<C} is deliberately chosen such that \eqref{eq:2B} is of the same order as \eqref{eq:S0=S} for case 1, so that case 2A dominates at small $\delta\kappa$.

Summarizing the three cases 1, 2A and 2B, \eqref{eq:S0=S}, \eqref{eq:2A} and \eqref{eq:2B} combine to produce \eqref{eq:LO-O}, with the constant $c_-$ depending only on $d,\alpha$ and $\kappa_1$.
\end{proof}

\subsection{Bounding the contributions of irreducible skeletons}\label{sec:Y}
The rest of the subsections serve to prove Lemma \ref{lem:K<K1}, where we need to sum over all irreducible skeletons $\varGamma$, starting from a fixed local operator $\OO=\OO_{S_0}$. In this subsection, we transform this sum into more tractable sums over factors and paths. Recall that each skeleton $\varGamma$ is topologically a tree that have branches to reach all faces of the final simplex $S=S(\varGamma)$. Thus, we can identify the unique set of bifurcation factors \begin{equation}\label{eq:Y=Y0Y1}
    Y=\{Y_0\equiv S_0, Y_1, \cdots, Y_{l(Y)}\} \subseteq \{S_0, X_1^\varGamma, \cdots, X_\ell^\varGamma\},
\end{equation}
as the internal vertices (including the root) of the tree $\varGamma$, while the faces $f_1,\cdots,f_{d+1}$ are terminal vertices. In Fig.~\ref{fig:skeleton} for example, $Y=\{Y_0, Y_1\}$ where $Y_1=X_1$. $l(Y)=|Y|-1\le d$ is the number of distinct factors in the sequence at which the path to another face bifurcates (see Figure \ref{fig:Y} for an illustration); if $l(Y)=0$ it means that the paths never share any common $Y_n$s.  We do not care about the labeling order in $Y_n$, but keep track of their connection. I.e., for each $Y_n$, define $\mathsf{p}(Y_n)$ as its parent vertex in the tree, which is some bifurcation factor $Y_{n'}$. Similarly define $\mathsf{p}(f_p)$ as the parent bifurcation factor of face $f_p$. In $\varGamma$, 
$f_p$ is then reached by an irreducible path $\cP_p$  that starts from $\mathsf{p}(f_p)$. Each such path $\cP_p$ is a single branch of the skeleton $\varGamma$. Each bifurcation factor $Y_n$ is also connected to its parent $\mathsf{p}(Y_n)$ by an irreducible path $\cP'_n$. See Fig.~\ref{fig:Y}(a) as a sketch in $2$d. We let $\mathbf{Y}(S)$ be all possible sets $Y$ of bifurcation factors that support a $\varGamma\in \mathbf{\Gamma}(S)$. 

Furthermore, we define the sum over paths that bounds operator growth between two given sets $Z, Z'\subseteq \Lambda$ (not necessarily simplices). There are irreducible paths $\cP$ in the factor graph $G$ that connects them, which we label by $\cP:Z\rightarrow Z'$. We define the sum over such irreducible paths \begin{equation}\label{eq:CZZ}
    C_l(Z,Z') := \sum_{\cP:Z\rightarrow Z', \ell(\cP)= l} \frac{(2t)^{\ell}}{\ell!} \prod_{k=1}^\ell \norm{H_{X^\cP_k} },
\end{equation}
as a bound for the relative amplitude that an operator in $Z$ can grow to $Z'$, where $(X_1^\cP,\cdots,X_\ell^\cP)$ are the factors in $\cP$. Note that we also use the parameter $l$ to control the length of a path that can contribute. Depending on whether $Z$ intersects with $Z'$, we have \begin{equation}\label{eq:C0}
    C_l(Z,Z') \left\{ \begin{aligned} 
    &=\delta_{l 0},  &Z\cap Z'\neq \emptyset \\
    & \propto (1-\delta_{l 0}), \quad &Z\cap Z'= \emptyset
    \end{aligned} \right.
\end{equation}

With the definitions above and the short hand notation \begin{equation}\label{eq:JS}
    J(S) := \sum_{\varGamma\in \mathbf{\Gamma}(S):\ell(\varGamma)>1, } \norm{\lr{\ee^{t\LL}\OO}_\varGamma},
\end{equation}
we prove the following Lemma. Namely, the sum over skeletons in \eqref{eq:skeleton} is bounded by first summing over the bifurcation factors, and then summing over the irreducible paths that connect them to the faces of $S$ and to themselves, with the constraint that the length $\ell$ of the skeleton should agree with the number of factors contained in the bifurcation factors and the paths.

\begin{figure}[t!]
\centering
\includegraphics[width=.45\textwidth]{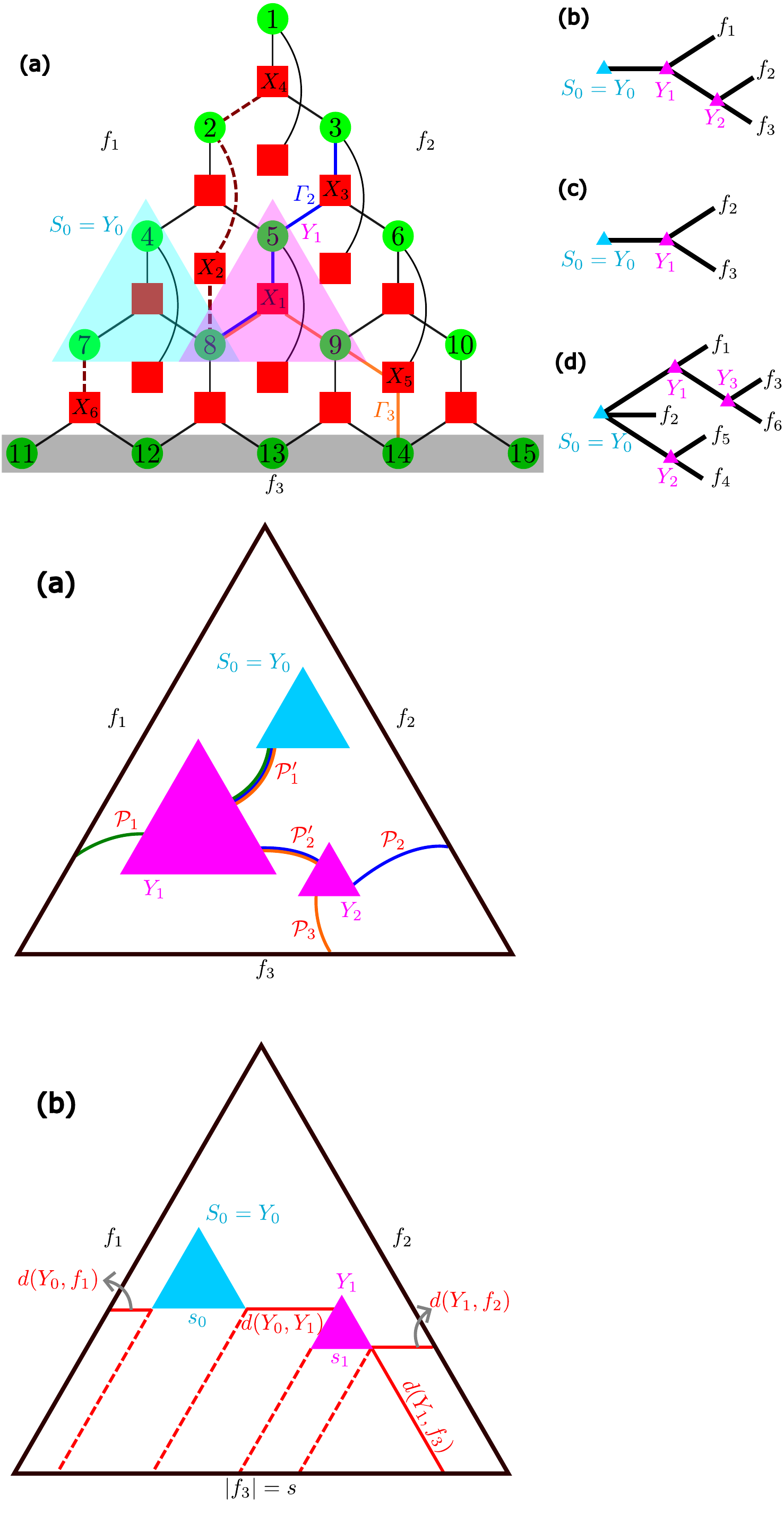}\qquad
\includegraphics[width=.45\textwidth]{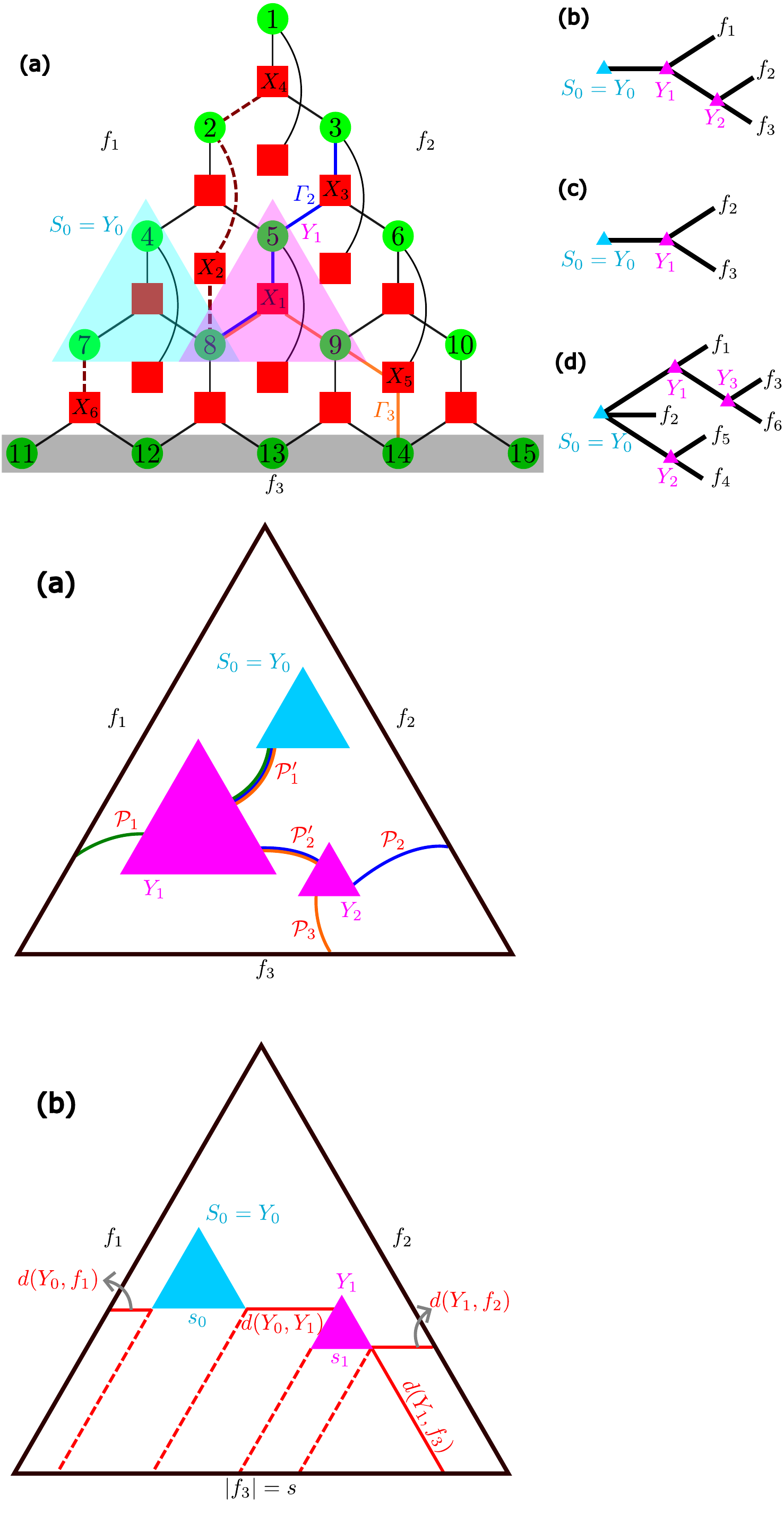}
\caption{\label{fig:Y}
(a) A $2$d sketch of bifurcation factors $Y_p$ and irreducible paths $\cP_p,\cP'_n$ that comprise an irreducible skeleton $\varGamma$. The irreducible paths $\varGamma_p$ from $S_0$ to the three faces are also denoted by lines of three different colors. (b) A $2$d sketch of why the inequality \eqref{eq:>s-sn} holds. In particular, here the $l(Y)=1$ case takes the equality, guided by the red dashed lines. }
\end{figure}

\begin{lem} The following identities hold: 
\begin{align}\label{eq:gamma<p}
    J(S) &\le (d+1)! \norm{\OO} \sum_{Y\in \mathbf{Y}(S)}(2t)^{l(Y)} \lr{ \prod_{n=1}^{l(Y)} \norm{H_{Y_n}} } \mathcal{G}(Y),\quad \mathrm{where} \nonumber\\
    \mathcal{G}(Y) &:= \sum_{\substack{l_1,\cdots,l_{d+1},l'_1,\cdots,l'_{l(Y)}\ge 0:\\ l(Y)+l_1+\cdots+l_{d+1}+l'_1+\cdots+l'_{l(Y)}>1}} \mlr{ \prod_{n=1}^{l(Y)} C_{l'_n}(Y_n, \mathsf{p}(Y_n))} \mlr{ \prod_{p=1}^{d+1} C_{l_p}(f_p, \mathsf{p}(f_p))}.
\end{align}
Here $\{f_1,\cdots,f_{d+1}\}$ are the faces of $S$, and $Y=\{Y_0,\cdots,Y_{l(Y)}\}$ is the set of bifurcation factors in \eqref{eq:Y=Y0Y1}.
\end{lem}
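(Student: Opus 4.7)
The proof reorganises the sum over irreducible skeletons $\varGamma \in \mathbf{\Gamma}(S)$ with $\ell(\varGamma)>1$ that defines $J(S)$ by first fixing the bifurcation data $Y = \{Y_0,\ldots,Y_{l(Y)}\}$ (an intrinsic property of the tree structure of $\varGamma$) together with its parent relations, and then summing independently over the $d+1$ branches $\cP_p$ connecting each face-parent $\mathsf{p}(f_p)\in Y$ to the face $f_p$, and over the $l(Y)$ inter-bifurcation paths $\cP'_n$ connecting $\mathsf{p}(Y_n)\in Y$ to $Y_n$.

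The starting point is the term-by-term application of the operator-norm bound \eqref{eq:NGamma}. Because the set of bifurcation factors $Y$ together with the paths $\cP_p,\cP'_n$ partition the factors of $\varGamma$, the product of Hamiltonian norms factorises cleanly as $\prod_k \norm{H_{X^\varGamma_k}} = \prod_{n=1}^{l(Y)}\norm{H_{Y_n}}\prod_n\prod_k\norm{H_{X^{\cP'_n}_k}}\prod_{p=1}^{d+1}\prod_k \norm{H_{X^{\cP_p}_k}}$, and the total length splits additively as $\ell(\varGamma) = l(Y)+\sum_n l'_n+\sum_p l_p$, so that $(2t)^{\ell(\varGamma)}$ factorises accordingly. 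The combinatorial multiplicity $N(\varGamma)$, which counts ordered skeletons $\gamma\in\varGamma$, admits the multinomial bound $N(\varGamma) \le \ell(\varGamma)!/(\prod_n l'_n!\prod_p l_p!)$: within each path the internal order is fixed (dividing by $l'_n!$ and $l_p!$), while the $l(Y)$ bifurcation factors are allowed to be placed anywhere, which overcounts by ignoring tree constraints across branches. The extra $(d+1)!$ factor in the statement is a convenient uniform overcount that absorbs both the at most $l(Y)!\le d!$ orderings of bifurcation factors consistent with the tree and any residual ambiguity in labelling the $d+1$ faces by their parent bifurcations.

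Substituting these bounds into \eqref{eq:NGamma} and cancelling $\ell(\varGamma)!$ against the multinomial, the factors $(2t)^{l'_n}/l'_n!$ and $(2t)^{l_p}/l_p!$ assemble, after summation over fixed-length paths, precisely into $C_{l'_n}(Y_n,\mathsf{p}(Y_n))$ and $C_{l_p}(f_p,\mathsf{p}(f_p))$ as defined in \eqref{eq:CZZ}. Once $Y$ is fixed the path choices are independent, so the joint sum factorises into $\prod_n \sum_{l'_n}C_{l'_n} \cdot \prod_p \sum_{l_p}C_{l_p}$, and the inherited constraint $\ell(\varGamma)>1$ maps exactly to the restriction $l(Y)+\sum l'_n+\sum l_p>1$ appearing in the definition of $\mathcal{G}(Y)$. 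Swapping the outer $\sum_{Y\in\mathbf{Y}(S)}$ with the inner path sums produces \eqref{eq:gamma<p}.

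The main technical obstacle is the bookkeeping of $N(\varGamma)$: one must check that the multinomial estimate together with the $(d+1)!$ slack uniformly dominates the true number of orderings respecting the fixed internal order of each path together with the tree-induced parent-before-descendants constraints. Once this is in place, the remaining manipulations are a clean repackaging of the already-established identity \eqref{eq:skeleton} and its corollary \eqref{eq:NGamma}, with the factorisation over independent branches being the real content of the lemma.
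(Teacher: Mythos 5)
Your proof is correct and follows essentially the same route as the paper's: reorganize the sum over irreducible skeletons by bifurcation data $Y$, factorize the product $(2t)^{\ell}\prod_k\lVert H_{X_k^\varGamma}\rVert$ over the path segments and the bifurcation factors, and reduce the lemma to the combinatorial inequality $N(\varGamma) \le \ell!/\bigl(\prod_n l'_n!\prod_p l_p!\bigr)$, i.e.\ \eqref{eq:NGamma<}. The difference is in how that inequality is justified: you invoke it in one step, observing that every $\gamma\in\varGamma$ must preserve the internal order of each segment $\cP_p$, $\cP'_n$, so $N(\varGamma)$ is bounded by the multinomial number of interleavings of the segments and bifurcation singletons; the paper instead derives it by a recursion on subtrees (their inequality \eqref{eq:N0<}), establishing a sharper intermediate bound at each bifurcation before relaxing with $(a+b)!\ge a!\,b!$. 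Both are valid, and your one-shot multinomial is a legitimate shortcut. Two points worth tightening. First, your account of the $(d+1)!$ prefactor is confused: you say it absorbs the $l(Y)!\le d!$ orderings of bifurcation factors, but your multinomial $\ell!/\bigl(\prod_n l'_n!\prod_p l_p!\bigr)$ already has no $l(Y)!$ in its denominator, i.e.\ it already lets each bifurcation factor sit in any position, so there is nothing left to absorb on that account; the paper attributes $(d+1)!$ to the freedom in the order in which the $d+1$ faces are first reached (cf.\ the discussion around \eqref{eq:Rk}). Since $(d+1)!$ is an $\mathrm{O}(1)$ constant, this does not invalidate your argument, but the stated rationale should be corrected. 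Second, when you claim the path choices ``are independent'' once $Y$ is fixed, this holds only as a relaxation: distinct branches of an irreducible skeleton cannot intersect, and summing them independently via \eqref{eq:CZZ} overcounts — which is harmless for an upper bound, and is indeed the same relaxation the paper makes explicit after \eqref{eq:gamma<I}.
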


\begin{proof}
Suppose $\norm{\OO}=1$. According to \eqref{eq:JS} and \eqref{eq:NGamma}, \begin{align}\label{eq:gamma<I}
    J(S) &\le \sum_{\varGamma\in \mathbf{\Gamma}(S):\ell(\varGamma)\equiv \ell>1} N(\varGamma) \frac{(2t)^\ell}{\ell!} \prod_{k=1}^\ell \norm{H_{X^\varGamma_k}} \nonumber\\ 
    & \le (d+1)! \sum_{Y\in \mathbf{Y}(S)} \sum_{\cP_1:\mathsf{p}(f_1)\rightarrow f_1}\cdots \sum_{\cP_{d+1}:\mathsf{p}(f_{d+1})\rightarrow f_{d+1}} \sum_{\cP'_1:\mathsf{p}(Y_1)\rightarrow Y_1}\cdots \sum_{\cP'_{l(Y)}:\mathsf{p}(Y_{l(Y)})\rightarrow Y_{l(Y)}} N(\varGamma) \frac{(2t)^{\ell}}{\ell!}\prod_{k=1}^\ell \norm{H_{X^\varGamma_k}} \mathbb{I}(\ell>1) \nonumber\\
    &= (d+1)! \sum_{Y\in \mathbf{Y}(S)} (2t)^{l(Y)} \lr{ \prod_{n=1}^{l(Y)} \norm{H_{Y_n}} } \nonumber\\ &\quad \lr{\sum_{\cP_1:\mathsf{p}(f_1)\rightarrow f_1}(2t)^{l_1} \prod_{k=1}^{l_1} \norm{H_{X^{\cP_1}_k}} }\cdots \lr{\sum_{\cP_{d+1}:\mathsf{p}(f_{d+1})\rightarrow f_{d+1}}(2t)^{l_{d+1}} \prod_{k=1}^{l_{d+1}} \norm{H_{X^{\cP_{d+1}}_k}} } \nonumber\\ 
    &\quad \lr{\sum_{\cP'_1:\mathsf{p}(Y_1)\rightarrow Y_1}(2t)^{l'_1} \prod_{k=1}^{l'_1} \norm{H_{X^{\cP'_1}_k}} }\cdots \lr{\sum_{\cP'_{l(Y)}:\mathsf{p}(Y_{l(Y)})\rightarrow Y_{l(Y)}}(2t)^{l'_{l(Y)}} \prod_{k=1}^{l'_{l(Y)}} \norm{H_{X^{\cP'_{l(Y)}}_k}} } \frac{N(\varGamma)}{\ell!}  \mathbb{I}(\ell>1).
\end{align}
In the second line, we have disintegrated each skeleton $\varGamma$ as its bifurcation factors $Y_n$ and branch irreducible paths $\cP_p,\cP'_n$. $\mathbb{I}(\cdot)$ is the indicator function that returns $1$ if the input is True and $0$ if False, so that only $\varGamma$ with $\ell=\ell(\varGamma)>1$ is included in the sum. The prefactor $(d+1)!$ comes from the number of sequences of faces $\{f_1,\cdots,f_{d+1}\}$ that a skeleton reaches in order. Note that the inequality is loose in general, because certain pairs of irreducible paths like $\cP'_n$ and $\cP_p$ that starts from $Y_n$, are allowed to intersect with each other here, while they are not allowed to intersect as two consecutive branches of the irreducible skeleton $\varGamma$.  In the third line of \eqref{eq:gamma<I}, we have moved factors around (each factor $X^\varGamma_k$ belongs to either an irreducible path $\cP_p$ or $\cP'_n$, or the bifurcation factors $Y$ ), and used \begin{equation}
    \ell = l(Y)+\sum_{p=1}^{d+1}l_p+\sum_{n=1}^{l(Y)}l'_n,
\end{equation}
where $l_p$ ($l'_n$) is the length of $\cP_p$ ($\cP'_n$). Now compare \eqref{eq:gamma<I} with the goal \eqref{eq:gamma<p}. The indicator function $\mathbb{I}(\ell>1)$ matches exactly the sum rule $l(Y)+l_1+\cdots+l_{d+1}+l'_1+\cdots+l'_{l(Y)}>1$ in \eqref{eq:gamma<p}, so \eqref{eq:gamma<I} implies \eqref{eq:gamma<p} as long as \begin{equation} \label{eq:NGamma<}
    \frac{N(\varGamma)}{\ell!} \le \lr{\prod_{p=1}^{d+1} \frac{1}{l_p!} } \lr{\prod_{n=1}^{l(Y)} \frac{1}{l'_n!} },
\end{equation}
because then the $\frac{1}{l_p!}$ ($\frac{1}{l'_n!}$) factor can be moved into the sum over $\cP_p$ ($\cP'_n$), and each sum over paths independently yields a $C_l$ in \eqref{eq:gamma<p}.

In the remaining proof, we verify \eqref{eq:NGamma<} by the recursion structure of $\varGamma$ as a tree. Any tree $\tilde{\varGamma}$ can be viewed as the root connecting to some subtrees $\tilde{\varGamma}_m$. For example, the tree $\varGamma$ in Fig.~\ref{fig:skeleton}(d) has three subtrees: a subtree $\tilde{\varGamma}_1$ with root $Y_1$, a trivial tree $\tilde{\varGamma}_2=f_2$, and a subtree $\tilde{\varGamma}_3$ with root $Y_2$. They are all connected to $Y_0$, the root of $\varGamma$. $\tilde{\varGamma}_1$ has a further subtree $\tilde{\varGamma}_4$ with root $Y_3$. Each subtree $\tilde{\varGamma}_m$ also corresponds to an irreducible skeleton (although it does not necessarily start from $S_0$), so it has length $\ell(\tilde{\varGamma}_m)$, and $N(\tilde{\varGamma})$ as the number of corresponding ordered irreducible skeletons. We first prove that, if a tree $\tilde{\varGamma}_0$ with root $\tilde{Y}_0$ is composed by subtrees $\tilde{\varGamma}_m: m=1,\cdots,m_0$, whose roots $\tilde{Y}_m$ are directly connected to $\tilde{Y}_0$ by irreducible paths $\tilde{\cP}_m$ of length $\tilde{\ell}_m$, then \begin{equation}\label{eq:N0<}
    \frac{N(\tilde{\varGamma}_0)}{\ell(\tilde{\varGamma}_0)!} \le \frac{1}{\ell(\tilde{\varGamma}_0)!} \frac{\lr{\sum_{m=1}^{m_0}\tilde{\ell}_m+\ell(\tilde{\varGamma}_m)}!}{\prod_{m=1}^{m_0}\lr{\tilde{\ell}_m+\ell(\tilde{\varGamma}_m)}!} \prod_{m=1}^{m_0} N(\tilde{\varGamma}_m) \le \lr{ \prod_{m=1}^{m_0}\frac{1}{\tilde{\ell}_m!}} \prod_{m=1}^{m_0}\frac{N(\tilde{\varGamma}_m)}{\ell(\tilde{\varGamma}_m)!} .
\end{equation}
The second inequality comes from $\ell(\tilde{\varGamma}_0)\ge \sum_{m=1}^{m_0}\tilde{\ell}_m+\ell(\tilde{\varGamma}_m)$, and \begin{equation}
    (a+b)!\ge a!\ b!,\quad \forall a,b>0.
\end{equation}
The first inequality of \eqref{eq:N0<} comes from the following. Starting from $\tilde{Y}_0$ there are $m_0$ branches, where factors of different branches can stagger in an ordered skeleton. Thus assuming the order within each branch is fixed, there are at most \begin{equation}\label{eq:tl'}
    \frac{\lr{\sum_{m=1}^{m_0} \tilde{\ell}'_m}!}{\prod_{m=1}^{m_0} \tilde{\ell}'_m!},
\end{equation}
number of ways to stagger the $m_0$ branches, where $\tilde{\ell}'_m=\tilde{\ell}_m+\ell(\tilde{\varGamma}_m)$ is the total number of factors in the $m$-th branch. The multinomial coefficient can be understood as the number of orders to place $m_0$ groups of balls in a line, where each group labeled by $m$ has $\tilde{\ell}'_m$ identical balls, while balls in different groups are distinguishable. The number of inter-branch orders may be strictly smaller than \eqref{eq:tl'}, because some ordering may produce a $\gamma$ that is not a skeleton. In Fig.~\ref{fig:skeleton}(a) for example, $X_5\in \cP_3$ must act after $X_3\in\cP_2$, since otherwise $f_2$ is already reached by $X_5$ and $X_3$ is no longer needed for a skeleton. 

Now we let the relative order within each branch to vary. The intra-branch order for the $m$-th branch has exactly $N(\tilde{\varGamma}_m)$ possibilities, since the irreducible path $\tilde{\cP}_m$ must act in a fixed order, followed by factors in $\tilde{\varGamma}_m$ that have $N(\tilde{\varGamma}_m)$ number of different orders. Because the intra-branch order is independent of the inter-branch order counted by \eqref{eq:tl'}, the total number of ordering is the product of them, and \eqref{eq:N0<} is established.

Finally, we show \eqref{eq:N0<} yields \eqref{eq:NGamma<}. Set $\tilde{\varGamma}_0=\varGamma$, we have \begin{align}
    \frac{N(\varGamma)}{\ell(\varGamma)!} &\le \lr{ \prod_{m_1\in \mathsf{st}(\varGamma)}\frac{1}{\tilde{\ell}_{m_1}!}} \prod_{m_1\in \mathsf{st}(\varGamma)} \frac{N(\tilde{\varGamma}_{m_1})}{\ell(\tilde{\varGamma}_{m_1})!} \nonumber\\
    &\le \lr{ \prod_{m_1\in \mathsf{st}(\varGamma)}\frac{1}{\tilde{\ell}_{m_1}!}} \prod_{m_1\in \mathsf{st}(\varGamma)}\mlr{ \lr{ \prod_{m_2\in \mathsf{st}(\tilde{\varGamma}_{m_1})}\frac{1}{\tilde{\ell}_{m_2}!}} \prod_{m_2\in \mathsf{st}(\tilde{\varGamma}_{m_1})} \frac{N(\tilde{\varGamma}_{m_2})}{\ell(\tilde{\varGamma}_{m_2})!} }\le \cdots \nonumber\\
    &\le \lr{\prod_{p=1}^{d+1} \frac{1}{l_p!} } \lr{\prod_{n=1}^{l(Y)} \frac{1}{l'_n!} },
\end{align}
where $m\in \mathsf{st}(\tilde{\varGamma})$ represents all subtrees $\tilde{\varGamma}_{m}$ connecting to the root of $\tilde{\varGamma}$, and we have used \eqref{eq:N0<} recursively to smaller and smaller subtrees. The iteration terminates when all subtrees $\tilde{\varGamma}$ become trivial: $\frac{N(\tilde{\varGamma})}{\ell(\tilde{\varGamma})!}=1$, and what remain are exactly one factor of either $\frac{1}{l_p!}$ or $\frac{1}{l'_n!}$ for each of the edges of the tree $\varGamma$.
\end{proof}

\subsection{Sum over irreducible paths: Lieb-Robinson bound with sub-exponential tails}
Next, we prove a Lieb-Robinson bound for $C_l(Z,Z^\prime)$ with a Hamiltonian $H = \sum H_S$, where $\lVert H_S\rVert \sim \mathrm{e}^{-r^\alpha}$ has subexponential decay. Amusingly, this goal is rather elegant to achieve, as it turns out that the sub-exponential decay $\ee^{-\kappa r^\alpha}$ with $\alpha<1$ is reproducing (a technical property explained below),\footnote{This appears related to the fact that power-law decaying interactions are reproducing.} while the faster decay with $\alpha=1$ is not.  
\begin{prop}[Lieb-Robinson bound for interactions decaying with a stretched exponential]\label{prop:reproducing}
The commutator of two local operators, one time-evolved, is bounded by the irreducible path summation defined in \eqref{eq:CZZ}: \begin{equation}\label{eq:comm<C}
    \norm{[\OO_Z(t), \OO'_{Z'}]} \le 2 \norm{\OO_Z}\norm{\OO'_{Z'}} \sum_{l\ge 0} C_l(Z,Z').
\end{equation}
Suppose the interaction decays as a stretched exponential, which is guaranteed by \begin{equation}
    \norm{H}_{\alpha,\kappa} < \infty.
\end{equation}
There exists  \begin{equation}\label{eq:lambda}
    \lambda = c_\lambda \kappa^{-\frac{d+1}{\alpha}},
\end{equation}
with $c_\lambda$ determined by $d$ and $\alpha$, such that
\begin{equation}\label{eq:C1+}
    \sum_{l\ge 1} C_l(Z,Z') \le \min(|\partial Z|, |\partial Z'|) \ee^{-\kappa \mathsf{d}^\alpha(Z,Z')} \lambda^{-1}\lr{\ee^{2\lambda t \norm{H}_\kappa} -1},
\end{equation}
and \begin{equation}\label{eq:C2+}
    \sum_{l\ge 2} C_l(Z,Z') \le \min(|\partial Z|, |\partial Z'|) \ee^{-\kappa \mathsf{d}^\alpha(Z,Z')} \lambda^{-1}\lr{\ee^{2\lambda t \norm{H}_\kappa} -1-2\lambda t \norm{H}_\kappa }.
\end{equation}
\end{prop}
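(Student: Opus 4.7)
The plan is three-fold. First, I would establish the commutator bound (\ref{eq:comm<C}) by the standard causal-expansion argument. Expanding $\mathcal{O}_Z(t) = e^{t\mathcal{L}}\mathcal{O}_Z$ as a Taylor series and decomposing $\mathcal{L} = \sum_X \mathcal{L}_X$ into its factors, only sequences $(X_1,\ldots,X_n)$ that form a causal tree rooted at $Z$ give a nonzero contribution to $\mathcal{O}_Z(t)$; commuting with $\mathcal{O}'_{Z'}$ further forces the tree to reach $Z'$. As in the proof of (\ref{eq:skeleton}), the equivalence-class resummation of the preceding section packages the internal rotations into unitaries of norm one and leaves only the sum over self-avoiding paths $\mathcal{P}: Z\to Z'$; applying the bound $\|\mathcal{L}_X\mathcal{O}\|\le 2\|H_X\|\|\mathcal{O}\|$ step-by-step then yields exactly $2\|\mathcal{O}_Z\|\|\mathcal{O}'_{Z'}\|\sum_l C_l(Z,Z')$.

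Second, I would extract the spatial-decay factor $e^{-\kappa \mathsf{d}(Z,Z')^\alpha}$ from the subadditivity of $r\mapsto r^\alpha$. For any path $(X_1,\ldots,X_l)$ with consecutive overlaps and $X_1\cap Z\neq\emptyset$, $X_l\cap Z'\neq\emptyset$, the triangle inequality applied across the path gives $\sum_k \mathrm{diam}(X_k)\ge \mathsf{d}(Z,Z')$. Since $\alpha\le 1$ makes $r\mapsto r^\alpha$ subadditive, $\sum_k \mathrm{diam}(X_k)^\alpha \ge \mathsf{d}(Z,Z')^\alpha$, so
\begin{equation*}
\prod_k \|H_{X_k}\| \le e^{-\kappa\mathsf{d}(Z,Z')^\alpha}\prod_k \|H_{X_k}\| e^{\kappa\mathrm{diam}(X_k)^\alpha},
\end{equation*}
pulling out the claimed prefactor and leaving a sum weighted by the $\kappa$-norm density.

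Third, I would bound the remaining path sum. By symmetry under path reversal I assume $|\partial Z|\le |\partial Z'|$. Following Hastings' boundary argument, bulk factors entirely inside $Z$ produce unitary conjugations of the commutator that preserve its norm and can be absorbed, so only factors that straddle $\partial Z$ contribute to growing the support out of $Z$. The first such factor $X_{k^*}$ is a simplex containing both a site of $Z$ and one outside, and therefore contains a site of $\partial Z$; anchoring the sum there yields a prefactor $|\partial Z|$ rather than $|Z|$. For each subsequent step along the path, summing over the next factor gives $\sum_{X'\cap X\ne\emptyset}\|H_{X'}\|e^{\kappa' \mathrm{diam}(X')^\alpha}\le |X|\,\|H\|_{\kappa'}$. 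The volume $|X|\le c_d(\mathrm{diam}(X)+1)^d$ is absorbed using a small portion $\delta\kappa$ of the $\kappa$-weight on $X$, via the inequality
\begin{equation*}
|X|\le C_{d,\alpha}(\delta\kappa)^{-d/\alpha} e^{\delta\kappa \mathrm{diam}(X)^\alpha},
\end{equation*}
which costs $C_{d,\alpha}(\delta\kappa)^{-d/\alpha}$ per step while reducing that factor's weight from $\kappa$ to $\kappa-\delta\kappa$. Choosing $\delta\kappa$ proportional to $\kappa$ (with the additional $1/\alpha$ in (\ref{eq:lambda}) arising from the analogous absorption of $|\partial Z|$ at the anchor step), each step contributes $\lambda\|H\|_\kappa$ with $\lambda = c_\lambda \kappa^{-(d+1)/\alpha}$. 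Summing the geometric series $\sum_{l\ge 1}(2t\lambda\|H\|_\kappa)^l/l!$ gives $\lambda^{-1}(e^{2\lambda t\|H\|_\kappa}-1)$ and dropping the $l=1$ term yields the analogous bound for $\sum_{l\ge 2}$, establishing (\ref{eq:C1+}) and (\ref{eq:C2+}).

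The main obstacle is obtaining the boundary factor $|\partial Z|$ rather than the naive $|Z|$. A direct forward path summation starting from an arbitrary site of $Z$ would overcount by a factor $|Z|/|\partial Z|\sim\mathrm{diam}(Z)$, which cannot be re-absorbed into the exponential decay without breaking the uniformity of $\lambda$. Overcoming this requires the Hastings-type argument that factors entirely inside $Z$ act as norm-preserving conjugations on the commutator and may be resummed, so that the effective path begins at a site of $\partial Z$; combined with the simplex assumption, this is what yields the $\min(|\partial Z|,|\partial Z'|)$ in the final bound.
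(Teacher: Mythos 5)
Your step-by-step causal-tree picture and your identification of the boundary anchoring as the source of the $\min(|\partial Z|,|\partial Z'|)$ prefactor are both on the right track, and the commutator bound (\ref{eq:comm<C}) is obtained in the paper by citing the same equivalence-class resummation you describe (Theorem 3 of \cite{chen2021operator}). However, your third step has a real gap. You propose extracting the full prefactor $\ee^{-\kappa\mathsf{d}^\alpha(Z,Z')}$ up front by subadditivity of $r\mapsto r^\alpha$, which leaves the product $\prod_k\norm{H_{X_k}}\ee^{\kappa\mathrm{diam}(X_k)^\alpha}$ with its $\kappa$-weight fully consumed. But the subsequent step-by-step path sum costs a volume (or boundary) factor $|X_{k-1}|\sim\mathrm{diam}(X_{k-1})^d$ per step, and once the full $\kappa$-weight on $X_{k-1}$ has been matched to the $\kappa$-norm there is nothing left to absorb it. If you instead reserve a portion $\delta\kappa$ to absorb these factors, you prove only the weaker decay $\ee^{-(\kappa-\delta\kappa)\mathsf{d}^\alpha(Z,Z')}$ at step cost $(\delta\kappa)^{-d/\alpha}$, and keeping $\lambda$ uniform in path length forces $\delta\kappa\sim\kappa$, degrading the decay constant by an $\Order{1}$ factor. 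The proposition as stated, with the undiminished $\kappa$ in the exponent, does not follow from subadditivity plus volume absorption.

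What closes this gap in the paper, and what is absent from your proposal, is the reproducing-kernel property $\sum_{m}\ee^{-\kappa\mathsf{d}^\alpha(i,m)}\ee^{-\kappa\mathsf{d}^\alpha(m,j)}\le\lambda\,\ee^{-\kappa\mathsf{d}^\alpha(i,j)}$ with $\lambda=c_\lambda\kappa^{-(d+1)/\alpha}$, established in \eqref{eq:KK<K}. This is genuinely stronger than subadditivity: its proof hinges on the strict inequality $r^\alpha+(\sigma-r)^\alpha\ge\sigma^\alpha+(2-2^\alpha)r^\alpha$ for $\alpha<1$ in \eqref{eq:<r+sigma}, whose excess $(2-2^\alpha)r^\alpha$ is exactly what makes the sum over intermediate sites $m$ converge \emph{without} sacrificing any of the $\kappa$ in the output decay. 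In the paper's iteration, each factor $H_{Z_k}$ spends its $\kappa$-weight to simultaneously invoke the $\kappa$-norm and insert $\ee^{-\kappa\mathsf{d}^\alpha(i,j)}$ for the next reproducing-kernel contraction, so each step costs exactly one $\lambda\norm{H}_\kappa$ with no separate volume penalty. Incidentally, your explanation of the $(d+1)/\alpha$ exponent as coming from absorbing $|\partial Z|$ at the anchor is also off: $|\partial Z|$ survives as an explicit prefactor in \eqref{eq:C1+}; the extra $\kappa^{-1/\alpha}$ atop $\kappa^{-d/\alpha}$ in $\lambda$ comes from the sum over the prolate-spheroidal coordinate $\sigma$ in the reproducing estimate.
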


\begin{proof}
\eqref{eq:comm<C} is proven in Theorem 3 of \cite{chen2021operator}, so we focus on \eqref{eq:C1+} and \eqref{eq:C2+}.
\eqref{eq:CZZ} can be bounded explicitly as \begin{equation}\label{eq:CZZ<}
    C_l(Z,Z') \le \frac{(2t)^l}{l!} \sum_{Z_1:Z_1 \text{ grows } Z} \norm{H_{Z_1}} \sum_{Z_2:Z_2 \text{ grows } Z_1} \norm{H_{Z_2}} \cdots \sum_{Z_l:Z_l \text{ grows } Z_{l-1}, Z_l\cap Z'\neq 0} \norm{H_{Z_l}},
\end{equation}
where ``$X \text{ grows } Y$'' means the condition $X\cap Y\neq \emptyset$ and $X\cap Y^c\neq \emptyset$ with $Y^c$ being the complement of $Y$ as a subset of $\Lambda$. We require $X\cap Y^c\neq \emptyset$ because otherwise $X$ is a factor of local rotation that does not appear in irreducible paths. Nevertheless, the path $(Z_1,\cdots,Z_l)$ in \eqref{eq:CZZ<} is not necessarily irreducible because $Z_3$ may touch $Z_1$ for example, which makes \eqref{eq:CZZ<} a loose bound in general.
Now we follow closely \cite{hastings2010locality}, starting at their Eq. (27).  To simplify the sums in (\ref{eq:CZZ<}) we may replace 
\begin{equation}\label{eq:grow<partial}
    \sum_{X:X \text{ grows } Y} \le \sum_{i\in \partial Y}\sum_{X\ni i},
\end{equation} 
 because $X$ must intersect with $\partial Y$ in order to grow $Y$ nontrivially.\footnote{Recall that $X$ is mandated to be a simplex, which is convex.  Thus this intersection must be non-vanishing.} Now consider, the last term in (\ref{eq:CZZ<}), which ends in:
 \begin{align} \label{eq:ZlZl1}
     \sum_{i \in \partial Z_{l-2}} &\sum_{Z_{l-1} \ni i: Z_{l-1} \cap Z_l \ne \emptyset }  \sum_{j \in \partial Z_{l-1}} \sum_{Z_l \ni j: Z_l\cap Z^\prime \ne \emptyset } \lVert H_{Z_{l-1}} \rVert\lVert H_{Z_l} \rVert \notag \\
     &\le \sum_{i \in \partial Z_{l-2}} \sum_{Z_{l-1} \ni i: Z_{l-1} \cap Z_l \ne \emptyset }  \sum_{j \in \partial Z_{l-1}} \sum_{Z_l \ni j } \lVert H_{Z_{l-1}} \rVert\lVert H_{Z_l} \rVert \mathrm{e}^{ - \kappa \mathsf{d}^\alpha(j,Z^\prime) + \kappa \mathrm{diam}(Z_l)^\alpha}\notag \\
     &\le \sum_{i \in \partial Z_{l-2}} \sum_{Z_{l-1} \ni i : Z_{l-1} \cap Z_l \ne \emptyset}\sum_{j \in \partial Z_{l-1}} \lVert H_{Z_{l-1}}\rVert \mathrm{e}^{-\kappa \mathsf{d}^\alpha(i,j) + \kappa \mathrm{diam}(Z_{l-1})^\alpha}\times    \mathrm{e}^{-\kappa \mathsf{d}^\alpha(j,Z^\prime) } \lVert H\rVert_\kappa,
 \end{align}
 where in the second line we used the identity $\mathsf{d}(j,Z^\prime) \le \mathrm{diam}(Z_l)$, and subsequently relaxed the restriction that $Z_l\cap Z^\prime $ was non-empty.  In the third line, we summed over $Z_l$, and \emph{also} began the process anew by including a similar identity $\mathsf{d}(i,j) \le \mathrm{diam}(Z_{l-1})$, since both $i,j\in Z_{l-1}$.  At this point, we will need to use the \emph{reproducibility} ansatz (which we will prove a little later):\begin{equation}\label{eq:KK<K}
    \sum_{m\in \Lambda} \mathcal{K}(\mathsf{d}(i,m) )\mathcal{K}(\mathsf{d}(m,j) ) \le \lambda \mathcal{K}(\mathsf{d}(i,j) ), 
\end{equation}
for some constant $\lambda$.  Combining (\ref{eq:ZlZl1}) and (\ref{eq:KK<K}): \begin{align}
    \sum_{i \in \partial Z_{l-2}} &\sum_{Z_{l-1} \ni i: Z_{l-1} \cap Z_l \ne \emptyset }  \sum_{j \in \partial Z_{l-1}} \sum_{Z_l \ni j: Z_l\cap Z^\prime \ne \emptyset } \lVert H_{Z_{l-1}} \rVert\lVert H_{Z_l} \rVert \notag \\
     &\le \sum_{i \in \partial Z_{l-2}} \sum_{Z_{l-1} \ni i : Z_{l-1} \cap Z_l \ne \emptyset} \lambda \mathrm{e}^{-\kappa \mathsf{d}^\alpha(i,Z^\prime) + \kappa \mathrm{diam}(Z_{l-1})^\alpha} \lVert H_{Z_{l-1}}\rVert \lVert H\rVert_\kappa \notag \\
     &\le  \sum_{i \in \partial Z_{l-2}} \lambda \mathrm{e}^{-\kappa \mathsf{d}^\alpha(i,Z^\prime) } \lVert H\rVert_\kappa^2.
\end{align}
Hence the process iterates $l$ times and we obtain (\ref{eq:C1+}) by resumming a simple exponential and subtracting off the constant term.  The surface factor $\min(|\partial Z|, |\partial Z'|)$ comes from our improvement \eqref{eq:grow<partial}, and the fact that irreducible paths from $Z$ to $Z'$ are exactly those from $Z'$ to $Z$, so we can pick whichever as the starting point. Furthermore, \eqref{eq:C2+} also follows by deleting the linear in $t$-term as well.


To complete the proof, it only remains to verify the reproducing property \eqref{eq:KK<K}. For any pair $i,j\in \Lambda$, we use a $d$-dimensional ``prolate spheroidal coordinate'' suited for the lattice $\Lambda$. Namely, for any $m\in \Lambda$, define its distance to the two ``focal points'' $i$ and $j$ as $r_i$ and $r_j$, respectively. Further define \begin{subequations}\begin{align}
        \sigma &= r_i+r_j,  \\
        r&= \min(r_i, r_j) \le \frac{\sigma}{2}.
    \end{align}
\end{subequations} 
Suppose there are $M(\sigma, r)$ vertices $m$ corresponding to a given $(\sigma,r)$. Then the left hand side of \eqref{eq:KK<K} becomes \begin{align}\label{eq:KK=}
    \sum_{m\in\Lambda} \ee^{-\kappa (r_i^\alpha + r_j^\alpha)} &= \sum_{r=0}^{\infty} \sum_{\sigma=\max(r_0,2r)}^{2r+r_0} M(\sigma, r) \ee^{-\kappa \mlr{r^\alpha + (\sigma-r)^\alpha}},
\end{align}
where $r_0=\mathsf{d}(i,j)$, and we have used the triangle inequalities $\sigma = r_i+r_j\ge r_0$ and $\sigma-2r=|r_i-r_j|\le r_0$ to constrain the sum. To proceed, observe that \begin{equation}\label{eq:<r+sigma}
    r^\alpha + (\sigma-r)^\alpha \ge \sigma^\alpha + (2-2^\alpha) r^\alpha.
\end{equation}
This equation comes from bounding the function \begin{equation}
    g(\rho):= 1+ (\rho-1)^\alpha -\rho^\alpha \ge g(2)= 2-2^\alpha,\quad \rho\equiv \frac{\sigma}{r} \ge 2,
\end{equation}
as one can verify by taking the derivative to confirm $g(\rho)$ is monotonic increasing. 

Combining \eqref{eq:KK=} with \eqref{eq:<r+sigma},
\begin{align}
    \sum_{m\in\Lambda} \ee^{-\kappa (r_i^\alpha + r_j^\alpha)}
    &\le \sum_{r=0}^{\infty} \sum_{\sigma=\max(r_0,2r)}^{2r+r_0} M(\sigma, r) \ee^{-\kappa \mlr{\sigma^\alpha + (2-2^\alpha) r^\alpha}} \nonumber\\
    &\le \sum_{r=0}^{\infty} \sum_{\sigma=\max(r_0,2r)}^{2r+r_0} 2c_d (r+1)^{d-1} \ee^{-\kappa \mlr{\sigma^\alpha + (2-2^\alpha) r^\alpha}} \nonumber\\
    &\le 2c_d \sum_{r=0}^{\infty} (2r+1) (r+1)^{d-1} \ee^{-\kappa \mlr{r_0^\alpha + (2-2^\alpha) r^\alpha}} \nonumber\\
    &\le c_\lambda \kappa^{-\frac{d+1}{\alpha}} \ee^{-\kappa r_0^\alpha}.
\end{align}
Here in the second line we have used \eqref{eq:cLam} that there are at most $c_d (r+1)^{d-1}$ sites that are distance $r$ to a given site $i$ or $j$ (each of which must be counted leading to an additional multiplicative factor of 2). In the third line we have used $\sigma\ge r_0$ in the exponent, and that there are at most $2r+1$ values for $\sigma$ to choose. We arrive at the last line analogously to \eqref{eq:sumx}, where $c_\lambda$ depends only on $d$ and $\alpha$. 

To conclude, we have verified $\mathcal{K}(r)= \ee^{-\kappa r^\alpha }$ is reproducing, with $\lambda$ given by \eqref{eq:lambda}. \eqref{eq:C1+} and \eqref{eq:C2+} then follow from (\ref{eq:grow<partial}) and (\ref{eq:KK<K}).
\end{proof}

\subsection{Proof of Lemma \ref{lem:K<K1}}

\begin{proof}[Proof of Lemma \ref{lem:K<K1}]
First, observe that a sufficient condition to prove \eqref{eq:K2+<} is if, 
for any $\OO=\OO_{S_0}$ with $\norm{\OO}=1$, \begin{equation}\label{eq:JS<}
    J(S) \le c_J(\delta\kappa)^{-\frac{2d}{\alpha}} t^2 \norm{H}_\kappa^2 \ee^{\frac{\delta\kappa}{2}s_0^\alpha} \ee^{-\tilde{\kappa}(s- s_0)^\alpha },
\end{equation}
where $s=\mathrm{diam}S,s_0=\mathrm{diam}S_0$, \begin{equation}
    \tilde{\kappa} = \kappa - \frac{\delta\kappa}{4}=\kappa'' + \frac{\delta\kappa}{4},
\end{equation}
and $c_J$ is a constant determined by $d,\alpha$ and $\kappa_1$.
The reason is that, using \eqref{eq:tH<C}, \begin{align}
    \sum_{\ell\ge 2} K_\ell(\OO) = \sum_{S\supset S_0}\ee^{\kappa''s^\alpha} J(S) &\le c_J(\delta\kappa)^{-\frac{2d}{\alpha}} t^2 \norm{H}_\kappa^2 \ee^{\frac{\delta\kappa}{2}s_0^\alpha} \sum_{S\supset S_0}\ee^{\kappa''s^\alpha} \ee^{-\tilde{\kappa}(s- s_0)^\alpha } \nonumber\\
    &\le c_J(\delta\kappa)^{-\frac{2d}{\alpha}} t^2 \norm{H}_\kappa^2 \ee^{\frac{\delta\kappa}{2}s_0^\alpha} \sum_{s>s_0}\tilde{c}_\Lambda (s-s_0)^d\ee^{\kappa''[s_0^\alpha+(s-s_0)^\alpha]} \ee^{-\tilde{\kappa}(s- s_0)^\alpha } \nonumber\\
    &\le c_J\tilde{c}'_\Lambda(\delta\kappa)^{-\frac{2d}{\alpha}} t^2  \norm{H}_\kappa^2 \ee^{\kappa s_0^\alpha} (\delta\kappa)^{-\frac{d+1}{\alpha}} \le c'_J t \norm{H}_\kappa \ee^{\kappa s_0^\alpha},
\end{align}
which implies \eqref{eq:K2+<} for $\OO$ with a general normalization $\norm{\OO}$.
In the second line, we have used the fact that for a given $S_0$ and $s$, there are at most $\tilde{c}_\Lambda (s-s_0)^d$ simplices $S\supset S_0$ that has diameter $s$. \eqref{eq:a+b<} is also used. In the third line we have summed over $s-s_0$, and used \eqref{eq:tH<C} to cancel extra powers of $(\delta\kappa)^{-1}$: observe that the form of \eqref{eq:tH<C} was chosen precisely so that this identity might hold. The final prefactor $c'_J$ then only depends on $d,\alpha$ and $\kappa_1$.

Thus, it remains to prove the condition \eqref{eq:JS<}. According to \eqref{eq:gamma<p}, we separate the sum over $Y\in \mathbf{Y}(S)$ into two cases: $l(Y)=0$ for case 1, where the $d+1$ paths to faces originate independently from $S_0$, and $l(Y)\ge 1$ for case 2, where $l(Y)$ is the number of nontrivial bifurcation factors defined in \eqref{eq:Y=Y0Y1}. For each case, we prove below that the contribution to $J(S)$ is bounded by the right hand side of \eqref{eq:JS<}, albeit with a case-dependent constant prefactor. We start with the simpler $l(Y)=0$ case, where some technical results are also used for the later case.

\textbf{Case 1:} If $l(Y)=0$, or equivalently $Y=\{S_0\}$, it suffices to bound $\mathcal{G}(Y)$ in the second line of \eqref{eq:gamma<p}, which describes  the $d+1$ faces of $S_0$ growing independently to the corresponding parallel faces of $S$, via irreducible paths $\cP_p: S_0\rightarrow f_p$. Here $\mathsf{p}(f_p)=S_0$ because $S_0$ is the parent of all faces.

If $S_0$ shares a simplex vertex with $S$, such that only one face $f_{d+1}$ of $S_0$ grows nontrivially, then $\mathcal{G}(Y)$ is bounded by \eqref{eq:C2+}: \begin{align}\label{eq:G11}
    \mathcal{G}(Y) = \sum_{l_{d+1}>1} C_{l_{d+1}}(S_0, f_{d+1}) &\le |\partial S_0| \ee^{-\kappa (s-s_0)^\alpha} \lambda^{-1}\lr{\ee^{2\lambda t \norm{H}_\kappa} -1-2\lambda t \norm{H}_\kappa } \nonumber\\
    &\le c_1 (\delta\kappa)^{-\frac{d-1}{\alpha}} \ee^{\frac{\delta\kappa}{2}s_0^\alpha} \ee^{-\kappa (s-s_0)^\alpha}\lambda^{-1}\lr{\ee^{2\lambda t \norm{H}_\kappa} -1-2\lambda t \norm{H}_\kappa } \nonumber\\ 
    &\le c_1 (\delta\kappa)^{-\frac{d-1}{\alpha}} \ee^{\frac{\delta\kappa}{2}s_0^\alpha} \ee^{-\kappa (s-s_0)^\alpha} c''_\lambda c_\lambda \kappa^{-\frac{d+1}{\alpha}} t^2 \norm{H}_\kappa^2\nonumber\\ 
    &\le c_1 c''_\lambda c_\lambda (\delta\kappa)^{-\frac{2d}{\alpha}} t^2 \norm{H}_\kappa^2 \ee^{\frac{\delta\kappa}{2}s_0^\alpha} \ee^{-\kappa (s-s_0)^\alpha}.
\end{align}
Here \eqref{eq:partial<} is used in the second line. In the third line, we have used that the argument of the exponential function is bounded by $\mathrm{O}(1)$: \begin{equation}\label{eq:lambdat<}
    2\lambda t \norm{H}_\kappa \le 2 c_\lambda \kappa^{\frac{3d+1-(d+1)}{\alpha}} \le 2 c_\lambda \kappa_1^{\frac{2d}{\alpha}},
\end{equation}
which holds by combining \eqref{eq:lambda}, \eqref{eq:tH<C} and \begin{equation}\label{eq:kap<kap}
    \delta\kappa<\kappa\le\kappa_1.
\end{equation}
The constant\begin{equation}\label{eq:c''lambda}
    c''_\lambda = \max_{\xi \le 2 c_\lambda \kappa_1^{2d/\alpha}} \frac{\ee^{2\xi} - 1-2\xi}{\xi^2} = \left.\frac{\ee^{2\xi} - 1-2\xi}{\xi^2}\right|_{\xi = 2 c_\lambda \kappa_1^{2d/\alpha}}
\end{equation} is completely determined by $d,\alpha$ and $\kappa_1$. We have also plugged in the definition of $\lambda$ from \eqref{eq:lambda}. In the last line of \eqref{eq:G11}, we have used \eqref{eq:kap<kap}.  Note that (\ref{eq:G11}) actually has a stronger exponential decay exponent ($\kappa$) than the $\tilde\kappa$ advertised in (\ref{eq:JS<}).

If $S_0$ does not share a simplex vertex with $S$ instead, there are $1<q\le d+1$ faces that grow nontrivially. The $\sum_p l_p>1$ summation condition in \eqref{eq:JS<} can be relaxed, so that the sum over $l_p$ are independent for different $p$. For each of the $q$ faces labeled by $p \in \{d+2-q,\cdots,d+1\}$, we invoke \eqref{eq:C1+} to get \begin{equation}\label{eq:c'lambda}
    \sum_{l\ge 1} C_l(S_0, f_p) \le |\partial S_0| \ee^{-\kappa \mathsf{d}^\alpha(S_0, f_p)} \lambda^{-1}\lr{\ee^{2\lambda t \norm{H}_\kappa} -1} \le |\partial S_0| \ee^{-\kappa \mathsf{d}^\alpha(S_0, f_p) } c'_\lambda t \norm{H}_\kappa,
\end{equation}
where we again used \eqref{eq:lambdat<} with $c'_\lambda$ determined by $d,\alpha$ and $\kappa_1$ in a similar way to \eqref{eq:c''lambda}. Combining all the $p$ faces, \begin{align}\label{eq:G1q}
    \mathcal{G}(Y) &\le \lr{c'_\lambda |\partial S_0| t \norm{H}_\kappa}^q \exp\mlr{-\kappa \sum_{p=d+2-q}^{d+1} \mathsf{d}^\alpha(S_0, f_p) } \nonumber\\ 
    &\le c_q (\delta\kappa)^{-q\frac{d-1}{\alpha}} \ee^{\frac{\delta\kappa}{2}s_0^\alpha} \lr{c'_\lambda t \norm{H}_\kappa}^q \exp\lr{-\kappa (s-s_0)^\alpha} \nonumber\\
    &\le c_q \lr{c'_\lambda}^q (\delta\kappa)^{-2\frac{d-1}{\alpha}} t^2 \norm{H}_\kappa^2 \ee^{\frac{\delta\kappa}{2}s_0^\alpha}\ee^{-\kappa (s-s_0)^\alpha} (\delta\kappa)^{(q-2)\frac{2d+2}{\alpha}}.
\end{align}
Here in the second line we have used \begin{equation}\label{eq:c_q}
    |\partial S_0|^q \le c_q (\delta\kappa)^{-q\frac{d-1}{\alpha}} \ee^{\frac{\delta\kappa}{2}s_0^\alpha},
\end{equation}
similar to \eqref{eq:partial<}. We have also used \begin{equation}
    \sum_{p=d+2-q}^{d+1} \mathsf{d}^\alpha(S_0, f_p) \ge (s-s_0)^\alpha,
\end{equation}
which follows from Proposition \ref{prop:simp}. In the third line of \eqref{eq:G1q} we have used \eqref{eq:tH<C}.


Summarizing Case 1, either \eqref{eq:G11} for $q=1$, or \eqref{eq:G1q} for $q\ge 2$, is bounded by the form of \eqref{eq:JS<}, considering $\delta\kappa < \kappa\le \kappa_1$.

\textbf{Case 2:} If $l(Y)\ge 1$, recall that besides paths $\cP_p$ from bifurcation factor $\mathsf{p}(f_p)$ to $f_p$, we also have paths $\cP'_n$ connecting $\mathsf{p}(Y_n)$ to $Y_{n}$, where $\mathsf{p}(Y_n)$ is the parent bifurcation factor of $Y_n$. Among them, at least one path must be nontrivial according to the sum rule in \eqref{eq:JS<} ($l^\prime_p>0$ and/or $l_p>0$) i.e., the path connects two sets $Z$ and $Z'$ that do not overlap. For each nontrivial path from $Z\rightarrow Z'$, the bound \eqref{eq:C1+} becomes \begin{equation}
    \sum_{l\ge 1}C_l(Z,Z') \le |\partial S| \ee^{-\kappa \mathsf{d}^\alpha(Z,Z') } c'_\lambda t \norm{H}_\kappa,
\end{equation}
following \eqref{eq:c'lambda}. For a trivial path with $Z\cap Z'\neq \emptyset$ instead, we have factor $\sum_{l\ge 0}C_l(Z,Z')=1=\ee^{-\kappa \mathsf{d}^\alpha(Z,Z') }$ from \eqref{eq:C0}. Thus we only need to keep track of the number of paths $1\le q(Y)\le 2d+1$ that are nontrivial, and do not need to track of which individual path is trivial or not, as if $\mathsf{d}(\mathsf{p}(f),f)=0$ (e.g.) it just contributes trivially to \begin{align}\label{eq:G2}
    \mathcal{G}(Y) &\le \lr{|\partial S|c'_\lambda t \norm{H}_\kappa }^{q(Y)} \exp\glr{-\kappa \mlr{\sum_{p=1}^{d+1}\mathsf{d}^\alpha(\mathsf{p}(f_p), f_p) + \sum_{n=1}^{l(Y)} \mathsf{d}^\alpha(\mathsf{p}(Y_n), Y_n)} }.
\end{align}

To proceed, we bound the exponent of \eqref{eq:G2} by \begin{equation}\label{eq:>sa-sn}
    \sum_{p=1}^{d+1}\mathsf{d}^\alpha(\mathsf{p}(f_p), f_p) + \sum_{n=1}^{l(Y)} \mathsf{d}^\alpha(\mathsf{p}(Y_n), Y_n) \ge (s-s_0)^\alpha - \sum_{n=1}^{l(Y)} s_n^\alpha,
\end{equation}
where \begin{equation}
    s_n :=\mathrm{diam}Y_{n}.
\end{equation}
\eqref{eq:>sa-sn} comes from applying \eqref{eq:a+b<} repeatedly to \begin{equation}\label{eq:>s-sn}
    \sum_{p=1}^{d+1}\mathsf{d}(\mathsf{p}(f_p), f_p) + \sum_{n=1}^{l(Y)} \mathsf{d}(\mathsf{p}(Y_n), Y_n) \ge s - \sum_{n=0}^{l(Y)} s_n,
\end{equation}
which is derived as follows. First, we introduce some notations. Let $\mathsf{ch}(Y_n)$ be the set of all children faces of $Y_n$ in the tree $\varGamma$, \begin{equation}
    \mathsf{ch}(Y_n) = \{f_p: \mathsf{p}(f_p)=Y_n\},
\end{equation}
and let $\mathsf{de}(Y_n)$ be the set of all descendant faces of $Y_n$. Namely, any $f_p\in \mathsf{de}(Y_n)$ is reached by an irreducible path $\varGamma_p$ that passes $Y_n$. Similarly, define $\mathsf{ch}'(Y_n)$ to be the set of all children bifurcation factors of $Y_n$: \begin{equation}
    \mathsf{ch}'(Y_n) = \{Y_{n'}\in Y: \mathsf{p}(Y_{n'})=Y_n\}.
\end{equation}
Observe the following geometric fact \begin{equation}\label{eq:d+s+d>d}
    \mathsf{d}(\mathsf{p}(Y_n), Y_n) + s_n + \sum_{f_p \in \mathsf{de}(Y_n)}\mathsf{d}(Y_n, f_p) \ge \sum_{f_p \in \mathsf{de}(Y_n)}\mathsf{d}(\mathsf{p}(Y_n), f_p).
\end{equation}
This is trivial for $\mathsf{de}(Y_n)= \{f_1,\cdots,f_{d+1}\}$, because then according to Proposition \ref{prop:simp}, the right hand side is not larger than $s$, while the left hand side is not smaller than $s$, which equals the sum of the last two terms. Otherwise if $\mathsf{de}(Y_n)\neq \{f_1,\cdots,f_{d+1}\}$, \eqref{eq:d+s+d>d} comes from triangle inequality \begin{equation}
    \mathsf{d}(\mathsf{p}(Y_n), Y_n) + s_n + \mathsf{d}( Y_{n}, W_n) \ge \mathsf{d}( \mathsf{p}(Y_n), W_n),
\end{equation}
with $W_n:=\bigcap_{f_p \in \mathsf{de}(Y_n)}f_p\neq \emptyset$, and \begin{equation}
    \mathsf{d}( Y_{n'},W_n) = \sum_{f_p \in \mathsf{de}(Y_n)} \mathsf{d}( Y_{n'}, f_p),
\end{equation}
due to the Manhattan metric of the simplicial lattice. 

\eqref{eq:d+s+d>d} gives a relation on the distance from a parent $\mathsf{p}(Y_n)$ to one of its children factors $Y_n$. If $Y_n$ contains further children factors $Y_{n'}$ -- i.e.  $\mathsf{de}(Y_n)\neq \mathsf{ch}(Y_n)$ -- then \eqref{eq:d+s+d>d} also gives a relation that connects the distances from $Y_n$ to $f_p\notin \mathsf{ch}(Y_n)$, with the distances from $Y_{n'}$s to these faces. Thus \eqref{eq:d+s+d>d} can be used recursively to eliminate all ``indirect'' distances $\mathsf{d}(Y_n, f_p)$ where $f_p\in \mathsf{de}(Y_n)\setminus \mathsf{ch}(Y_n)$, with the ``direct'' ones $f_p\in \mathsf{ch}(Y_n)$. To be specific, we start from the root $Y_0=S_0$, \begin{align}
    s-s_0 &= \sum_{f_p\in \mathsf{de}(Y_0)}\mathsf{d}(Y_0, f_p) = \sum_{f_p\in \mathsf{ch}(Y_0)}\mathsf{d}(Y_0, f_p)+ \sum_{Y_{n_1}\in \mathsf{ch}'(Y_0)} \sum_{f_p\in \mathsf{de}(Y_{n_1})} \mathsf{d}(Y_0, f_p) \nonumber\\
    &\le \sum_{f_p\in \mathsf{ch}(Y_0)}\mathsf{d}(Y_0, f_p)+ \sum_{Y_{n_1}\in \mathsf{ch}'(Y_0)}\mlr{ \mathsf{d}(Y_0, Y_{n_1}) + s_{n_1} + \sum_{f_p \in \mathsf{de}(Y_{n_1})}\mathsf{d}(Y_{n_1}, f_p)  } \nonumber\\
    &= \sum_{f_p\in \mathsf{ch}(Y_0)}\mathsf{d}(\mathsf{p}(f_p), f_p)+ \sum_{Y_{n_1}\in \mathsf{ch}'(Y_0)}\mlr{ \mathsf{d}(\mathsf{p}(Y_{n_1}), Y_{n_1}) + s_{n_1}  } + \sum_{Y_{n_1}\in \mathsf{ch}'(Y_0)} \sum_{f_p \in \mathsf{de}(Y_{n_1})}\mathsf{d}(Y_{n_1}, f_p) \nonumber\\
    &\le \sum_{f_p\in \mathsf{ch}(Y_0)}\mathsf{d}(\mathsf{p}(f_p), f_p)+ \sum_{Y_{n_1}\in \mathsf{ch}'(Y_0)}\mlr{ \mathsf{d}(\mathsf{p}(Y_{n_1}), Y_{n_1}) + s_{n_1}  } \nonumber\\
    &\quad + \sum_{Y_{n_1}\in \mathsf{ch}'(Y_0)} \glr{\sum_{f_p\in \mathsf{ch}(Y_{n_1})}\mathsf{d}(Y_{n_1}, f_p)+ \sum_{Y_{n_2}\in \mathsf{ch}'(Y_{n_1})} \mlr{\mathsf{d}(Y_{n_1}, Y_{n_2}) + s_{n_2} + \sum_{f_p \in \mathsf{de}(Y_{n_2})}\mathsf{d}(Y_{n_2}, f_p) } }
    &\le \cdots \nonumber\\
    &\le \sum_{p=1}^{d+1}\mathsf{d}(\mathsf{p}(f_p), f_p) + \sum_{n=1}^{l(Y)} \mlr{ \mathsf{d}(\mathsf{p}(Y_n), Y_n)+s_n }.
\end{align}
Here in the first line, we have used \eqref{eq:S0fp} together with the fact that any $f_p$ belongs to either $\mathsf{ch}(Y_0)$, or one of $\mathsf{de}(Y_{n_1})$ with $Y_{n_1}$ being one child of $Y_0$. I.e., $\{Y_{n_1}\}$ are the bifurcation factors at the second layer of the tree $\varGamma$ (the first layer being the root $Y_0$). In the second line we invoked \eqref{eq:d+s+d>d}. In the third line we identified $Y_0$ with nothing but the parent of its children. In the fourth line, we repeated the treatment in the first line for each $Y_{n_1}$, and used \eqref{eq:d+s+d>d} for each of the third-layer factors $Y_{n_2}$. Traversing down the tree $\varGamma$ recursively, we cover the distances corresponding to all edges in the tree exactly once, together with all the $s_n$s. As a result, we arrived at the final line and established \eqref{eq:>s-sn}, which further yields \eqref{eq:>sa-sn}.

Plugging \eqref{eq:>sa-sn} into \eqref{eq:G2},
\begin{align}\label{eq:G2<}
    \mathcal{G}(Y) &\le |S|^{-l(Y)} |S|^{l(Y)} \lr{|\partial S|c'_\lambda t \norm{H}_\kappa }^{q(Y)} \exp\glr{-\kappa \mlr{(s-s_0)^\alpha - \sum_{n=1}^{l(Y)} s_n^\alpha } } \nonumber\\
    &\le |S|^{-l(Y)} c_{l(Y),q(Y)} (c'_\lambda)^{q(Y)} (\delta\kappa)^{-\frac{dl(Y)+(d-1)q(Y)}{\alpha}} t\norm{H}_\kappa (\delta\kappa)^{[q(Y)-1]\frac{3d-1}{\alpha}}\ee^{\frac{\delta\kappa}{2}s_0^\alpha} \ee^{-\tilde{\kappa} (s-s_0)^\alpha } \ee^{\kappa \sum_{n=1}^{l(Y)} s_n^\alpha} \nonumber\\
    & \le c_{max} |S|^{-l(Y)} (\delta\kappa)^{-\frac{d-1+dl(Y)}{\alpha}} t\norm{H}_\kappa \ee^{\frac{\delta\kappa}{2}s_0^\alpha} \ee^{-\tilde{\kappa} (s-s_0)^\alpha } \ee^{\kappa \sum_{n=1}^{l(Y)} s_n^\alpha},
\end{align}
In the second line we have used a generalization of \eqref{eq:c_q}: \begin{equation}\label{eq:SpS}
    |S|^{l} |\partial S_0|^q \le c_{l,q} (\delta\kappa)^{-\frac{ld+q(d-1)}{\alpha}} \ee^{\frac{\delta\kappa}{4}(s^\alpha+s_0^\alpha)} \le c_{l,q} (\delta\kappa)^{-\frac{ld+q(d-1)}{\alpha}}\ee^{\frac{\delta\kappa}{4}[(s-s_0)^\alpha+ 2 s_0^\alpha]}.
\end{equation} and \eqref{eq:tH<C}. In the third line we have maximized the prefactor over $1\le q(Y)\le 2d+1$, using $\delta\kappa <\kappa\le \kappa_1$.

For a fixed vertex number $l+1$, there are at most $l!$ different trees. Thus for a given $l(Y)=l\le d$, plugging \eqref{eq:G2<} into \eqref{eq:JS<} yields \begin{align}
    J(S) &\le (d+1)! \ l!\  \sum_{Y_{1},\cdots,Y_{l}\subseteq S} (2t)^l\prod_{n=1}^l \norm{H_{Y_{n}}} \mathcal{G}(Y) \nonumber\\
    &\le (d+1)!\ l!\  c_{max} (\delta\kappa)^{-\frac{d-1+dl}{\alpha}} (2t)^l t \norm{H}_\kappa  \ee^{\frac{\delta\kappa}{2}s_0^\alpha} \ee^{-\tilde{\kappa} (s-s_0)^\alpha } \sum_{i_1,\cdots,i_l\in S} |S|^{-l} \prod_{n=1}^l \lr{ \sum_{Y_{n}\ni i_n}\norm{H_{Y_{n}}} \ee^{\kappa s_n^\alpha}}  \nonumber\\
    &\le (d+1)!\ l!\  c_{max} (\delta\kappa)^{-\frac{2d-1+\mathsf{d}(l-1)}{\alpha}} (2t)^l t \norm{H}_\kappa  \ee^{\frac{\delta\kappa}{2}s_0^\alpha} \ee^{-\tilde{\kappa} (s-s_0)^\alpha } \norm{H}_\kappa^l \nonumber\\
    &\le c'_{max} (\delta\kappa)^{-\frac{2d-1}{\alpha}} t^2 \norm{H}_\kappa^2 \ee^{\frac{\delta\kappa}{2}s_0^\alpha} \ee^{-\tilde{\kappa} (s-s_0)^\alpha }.
\end{align}
Here in the first line, we sum over $Y\in \mathbf{Y}(S)$ by first summing over the topological tree structure that gives the prefactor $l!$, and then summing over each $Y_n\in Y$. In the second line, we overcount by replacing each $\sum_{Y_{n}\subseteq S}$ by $\sum_{i_n\in S} \sum_{Y_{n}\ni i_n}$.   While this counts the same $Y_n$, $|Y_n|$ times, we can use the $\kappa$-norm of $H$ to sum over $Y_{n}$ efficiently. Furthermore, the sum over sites is canceled by $|S|^{-l}$ from $\mathcal{G}(Y)$. In the last line, we have used $(2t\norm{H}_\kappa)^{l-1}$ to kill $(\delta\kappa)^{-\frac{\mathsf{d}(l-1)}{\alpha}}$ according to \eqref{eq:tH<C}, and finally arrived at the form of \eqref{eq:JS<}.
\end{proof}

\section{False vacuum decay in models with quasiparticles}
Finally, let us discuss some implications of Theorem \ref{thm1} for the low-energy dynamics of perturbed gapped phases with ``quasiparticle excitations". This is the only section in this Supplement that is not mathematically rigorous, but is still based on physical assumptions. 

We focus on the setting of false vacuum decay where $H_0$ has a discrete symmetry that is (in the thermodynamic limit) \emph{spontaneously} broken in its ground states, while the perturbation $V$ \emph{explicitly} breaks that symmetry. A $1$d spin example is given in the main text, which we reproduce here: \begin{subequations}\label{eq:H0V_supp}\begin{align}
    H_0 &= \sum_{i=1}^{N-1}\lr{ Z_iZ_{i+1}+J_x X_i X_{i+1}}+h\sum_{i=1}^N X_i , \label{eq:H0=Ising} \\
    V &= \epsilon Z= \epsilon \sum_{i=1}^N Z_i, \label{eq:V=Z_supp}
\end{align}\end{subequations}
where we choose $h=0.9, J_x=0.37$. If $J_x=0$, $H_0$ is the transverse-field Ising model with two ferromagnetic ground states, separated from the excited states by a gap $2(1-h)\approx 0.2$. The $J_x$ term is added to break the integrability of $H_0$, but keeps $H_0$ in the gapped ferromagnetic phase. One ground state of $H_0$,  which we call $|\psi_\uparrow\rangle $, has positive polarization $\langle Z_i\rangle >0$. When turning on $V$, $|\psi_\uparrow\rangle $ merges into the excitation spectrum by gaining an extensive amount of energy $\sim \epsilon N$, and is called the false vacuum. In the main text, we show numerically that polarization evolving from $|\psi_\uparrow\rangle$ by Hamiltonian $H$ remains large in a long-time window, which means the false vacuum takes a long time to decay to the true vacuum that prefers a negative polarization. Very similar dynamics is also observed for the product initial state $|\uparrow\cdots \uparrow\rangle$. 
In the following we explain this observed non-thermal dynamics with the rigorous Theorem \ref{thm1} and physical assumptions. 

First, with no assumptions, the third point of Corollary \ref{cor1} establishes that the initial state $U\ket{\psi_\uparrow}$ is a frozen state before $t_*$ (i.e., does not evolve) when probed by local observables. This state $U\ket{\psi_\uparrow}$ is then a ``prethermal scar state'' with rigorous bounds on thermalization time. Note that it is short-range entangled in the sense that it can be connected to a product state $\ket{\uparrow\cdots \uparrow}$ by a quasi-local unitary $UU_0$, where $U_0$ is the quasi-local unitary connecting $\ket{\psi_\uparrow}$ and $\ket{\uparrow\cdots \uparrow}$ since they are in the same phase. However, it is practically difficult to precisely prepare this state $U\ket{\psi_\uparrow}$, because the Schrieffer-Wolff unitary $U$ is defined order-by-order and has no closed-form expression, as we showed in our proof above.

Nevertheless, the dynamics of atypical large polarization is robustly observed for other initial states which we denote by $\ket{\psi_0}$ (e.g. $\ket{\psi_0}=\ket{\psi_\uparrow}$ or $\ket{\psi_0}=\ket{\uparrow\cdots\uparrow}$), where Corollary \ref{cor1} (third point) does not apply.  To explain the slow decay of polarization in this state, we use the fact that $D_*$ in \eqref{eq:UHU=} is $\Delta$-diagonal, beyond just preserving the ground subspace used in Corollary \ref{cor1}. We also need two physical assumptions. First, we assume that the Schrieffer-Wolff rotated initial state $U^\dagger \ket{\psi_0}$ is of low energy density measured by $H_0$, comparing to the ground state $\ket{\psi_\uparrow}$. This condition is easily satisfied by a large family of states. For example, it holds for $\ket{\psi_0}=\ket{\psi_\uparrow}$, because the Schrieffer-Wolff unitary $U$ only involves small $\sim \epsilon$ local ``rotations'', so the energy density of $U^\dagger \ket{\psi_0}$ is strictly bounded by $\epsilon$. This condition also holds conceptually for $\ket{\psi_0}=\ket{\uparrow\cdots \uparrow}$, because it is in the same phase of and thus close to the previous state $\ket{\psi_\uparrow}$. See \cite{falseVac_quasip} for an explicit verification for this product state in the mixed-field Ising chain.

The second assumption we will need is that $H_0$ hosts well-defined quasiparticles in its low energy spectrum. Then the rotated initial state $U^\dagger \ket{\psi_0}$, which has low energy density by assumption, contains a small density of quasiparticles near the bottom of the quasiparticle band. The gap $\Delta$ of $H_0$ is then just the ``mass'' of the quasiparticles. Before time $t_*$, the Hamiltonian in the rotated frame is effectively $H_0+D_*$, so the question is whether $D_*$ is able to create more and more quasiparticles from the initial state: If the quasiparticles proliferate then the state has decayed to the true vacuum. However, this is forbidden exactly by our Theorem \ref{thm1}, which guarantees $D_*$ is $\Delta$-diagonal! More precisely, since each local term in $D_*$ only acts on a finite number of  quasiparticles,  it cannot increase the quasiparticle number because there is not enough energy $\Delta$ to create an extra one.\footnote{This argument is not rigorous because we cannot rule out the possibility that $\ket{\psi_0}$ contains extremely energetic quasiparticles, whose kinetic energy $\gg \Delta$ might be large enough to spawn a new quasiparticle under dynamics from $D_*$ (by sacrificing kinetic energy $\Delta$ to create a new quasiparticle).  However, on physical grounds, this scenario seems unlikely.} Then we know the quasiparticles cannot proliferate, so that the state remains close to the false vacuum before $t_*$. This quasiparticle argument appears in \cite{falseVac_quasip} using second order perturbation theory, where the authors have to conjecture the emergent quasiparticle number conservation for higher orders. This conjecture is supported by our Theorem \ref{thm1}. Moreover, previous works exclusively deal with integrable $H_0$ (e.g. setting $J_x=0$ in \eqref{eq:H0=Ising}) in order to carry out perturbation theory explicitly at low order; here we argue that the integrability condition is not crucial, as long as the low-energy spectrum looks ``integrable'' in the sense of containing quasiparticles, which holds for many physical models.

\bibliography{preth}

\end{document}